
\documentclass[a4paper]{article}

\usepackage{a4wide}

\usepackage{proof}
\usepackage{subfigure}
\usepackage{epsf}
\usepackage{graphics}
\usepackage{wrapfig}
\usepackage{mathrsfs}
\usepackage{url}
\usepackage{calc}
\usepackage{amsmath,amssymb,amsfonts,mathrsfs,latexsym,stmaryrd}
\usepackage{proof}
\usepackage[all]{xy}

\newcommand{\pidill}{\ensuremath{\pi\mathtt{DILL}}}
\newcommand{\pidsll}{\ensuremath{\pi\mathtt{DSLL}}}
\newcommand{\lilo}{\ensuremath{\mathtt{LL}}}
\newcommand{\sll}{\ensuremath{\mathtt{SLL}}}

\newcommand{\tyg}[5]{#1;#2;#3\vdash #4::#5}
\newcommand{\tygcf}[4]{#1;#2\vdash #3::#4}

\newcommand{\Lone}{\mathbf{1}\mathsf{L}}
\newcommand{\Rone}{\mathbf{1}\mathsf{R}}
\newcommand{\Lten}{\otimes\mathsf{L}}
\newcommand{\Rten}{\otimes\mathsf{R}}
\newcommand{\Llin}{\multimap\mathsf{L}}
\newcommand{\Rlin}{\multimap\mathsf{R}}
\newcommand{\cut}{\mathsf{cut}}
\newcommand{\cutb}{\mathsf{cut}_!}
\newcommand{\cutd}{\mathsf{cut}_\#}
\newcommand{\cutw}{\mathsf{cut}_w}
\newcommand{\bemb}{\flat_!}
\newcommand{\bemd}{\flat_\#}
\newcommand{\Lbangb}{\mathbf{!}\mathsf{L}_{!}}
\newcommand{\Lbangd}{\mathbf{!}\mathsf{L}_{\#}}
\newcommand{\Rbang}{\mathbf{!}\mathsf{R}}
\newcommand{\Lplus}{\oplus\mathsf{L}}
\newcommand{\Rplusone}{\oplus\mathsf{R}_1}
\newcommand{\Rplustwo}{\oplus\mathsf{R}_2}
\newcommand{\Lwithone}{\with\mathsf{L}_1}
\newcommand{\Lwithtwo}{\with\mathsf{L}_2}
\newcommand{\Rwith}{\with\mathsf{R}}

\newcommand{\PLone}[2]{\mathbf{1}\mathsf{L}(#1,#2)}
\newcommand{\PRone}{\mathbf{1}\mathsf{R}}
\newcommand{\PLten}[4]{\otimes\mathsf{L}(#1,#2.#3.#4)}
\newcommand{\PRten}[2]{\otimes\mathsf{R}(#1,#2)}
\newcommand{\PLlin}[4]{\multimap\mathsf{L}(#1,#2,#3.#4)}
\newcommand{\PRlin}[2]{\multimap\mathsf{R}(#1.#2)}
\newcommand{\Pcut}[3]{\mathsf{cut}(#1,#2.#3)}
\newcommand{\Pcutb}[3]{\mathsf{cut}_!(#1,#2.#3)}
\newcommand{\Pcutd}[3]{\mathsf{cut}_\#(#1,#2.#3)}
\newcommand{\Pcutw}[3]{\mathsf{cut}_w(#1,#2.#3)}
\newcommand{\Pbemb}[3]{\flat_!(#1,#2.#3)}
\newcommand{\Pbemd}[3]{\flat_\#(#1,#2.#3)}
\newcommand{\PLbangb}[2]{\mathbf{!}\mathsf{L}_{!}(#1.#2)}
\newcommand{\PLbangd}[2]{\mathbf{!}\mathsf{L}_{\#}(#1.#2)}
\newcommand{\PRbang}[2]{\mathbf{!}\mathsf{R}(#1,#2)}
\newcommand{\PLplus}[5]{\oplus\mathsf{L}(#1,#2.#3,#4.#5)}
\newcommand{\PRplusone}[1]{\oplus\mathsf{R}_1(#1)}
\newcommand{\PRplustwo}[1]{\oplus\mathsf{R}_2(#1)}
\newcommand{\PLwithone}[3]{\with\mathsf{L}_1(#1,#2.#3)}
\newcommand{\PLwithtwo}[3]{\with\mathsf{L}_2(#1,#2.#3)}
\newcommand{\PRwith}[2]{\with\mathsf{R}(#1,#2)}
\newcommand{\lift}[1]{#1_{\Downarrow}}

\newcommand{\weip}[2]{\mathbb{W}_{#1}(#2)}
\newcommand{\wei}[1]{\mathbb{W}(#1)}
\newcommand{\bde}[1]{\mathbb{B}(#1)}
\newcommand{\dupf}[1]{\mathbb{D}(#1)}
\newcommand{\foc}[2]{\mathbb{FO}(#1,#2)}

\newcommand{\cred}{\Longrightarrow}
\newcommand{\cpred}{\longmapsto}
\newcommand{\cequ}{\equiv}
\newcommand{\cpredequ}{\hookrightarrow}
\newcommand{\reds}{\rightarrow}

\newcommand{\ptone}{\mathsf{D}}
\newcommand{\pttwo}{\mathsf{E}}
\newcommand{\ptthree}{\mathsf{F}}
\newcommand{\ptfour}{\mathsf{G}}
\newcommand{\ptfive}{\mathsf{H}}

\newcommand{\conone}{\Gamma}
\newcommand{\contwo}{\Delta}
\newcommand{\conthree}{\Theta}
\newcommand{\confour}{\Phi}
\newcommand{\confive}{\Psi}
\newcommand{\emcon}{\emptyset}

\newcommand{\procone}{P}
\newcommand{\proctwo}{Q}
\newcommand{\procthree}{R}
\newcommand{\procfour}{S}
\newcommand{\size}[1]{|#1|}
\newcommand{\bd}[1]{\mathbb{B}(#1)}

\newcommand{\tcone}{T}

\newcommand{\typone}{A}
\newcommand{\typtwo}{B}
\newcommand{\typthree}{C}

\newcommand{\cone}{x}
\newcommand{\ctwo}{y}
\newcommand{\cthree}{z}
\newcommand{\cfour}{w}
\newcommand{\cfive}{s}

\newcommand{\tdone}{\pi}
\newcommand{\tdtwo}{\rho}

\newcommand{\unit}{\textbf{1}}

\newcommand{\tens}{\otimes}
\newcommand{\lin}{\multimap}
\newcommand{\plus}{\oplus}
\newcommand{\with}{\&}
\newcommand{\bang}{!}

\newcommand{\midd}{\; \; \mbox{\Large{$\mid$}}\;\;}

\newcommand{\insc}[2]{#1(#2)}
\newcommand{\outsc}[2]{#1\langle #2\rangle}

\newcommand{\inc}[3]{#1(#2).#3}
\newcommand{\inwc}[2]{#1(#2)}
\newcommand{\outc}[3]{#1\langle #2\rangle.#3}
\newcommand{\outwc}[2]{#1\langle #2\rangle}
\newcommand{\binc}[3]{!#1(#2).#3}
\newcommand{\para}[2]{#1\;|\;#2}
\newcommand{\rest}[2]{(\nu #1)#2}
\newcommand{\restp}[3]{(\nu #1)^{#2} #3}
\newcommand{\emproc}{0}
\newcommand{\inl}[2]{#1.\mathtt{inl};#2}
\newcommand{\inr}[2]{#1.\mathtt{inr};#2}
\newcommand{\case}[3]{#1.\mathtt{case}(#2,#3)}
\newcommand{\dotrest}[3]{\rest{#1\ldots #2}{#3}}
\newcommand{\dotpara}[2]{#1||\ldots||#2}
\newcommand{\dupserver}{\mathit{dupser}}
\newcommand{\dupclient}{\mathit{dupclient}}
\newcommand{\mulser}{\mathit{mulser}}
\newcommand{\ser}{\mathit{ser}}

\newcommand{\inlm}{\mathtt{inl}}
\newcommand{\inrm}{\mathtt{inr}}
\newcommand{\casem}{\mathtt{case}}

\newcommand{\fn}[1]{\textit{fn}(#1)}
\newcommand{\subst}[3]{#1\{#2/#3\}}

\newcommand{\scon}{\equiv}
\newcommand{\aequ}{\equiv_\alpha}

\newcommand{\natone}{n}
\newcommand{\nattwo}{m}
\newcommand{\polyone}{p}
\newcommand{\polytwo}{q}
\newcommand{\NN}{\mathbb{N}}
\newcommand{\indone}{x}


\newtheorem{lemma}{Lemma}
\newtheorem{proposition}{Proposition}
\newtheorem{theorem}{Theorem}
\newtheorem{corollary}{Corollary}
\newenvironment{proof}{\begin{trivlist}
       \item[\hskip \labelsep {\bfseries Proof.}]}{\hfill $\Box$ \end{trivlist}}
\newtheorem{definition}{Definition}


\newenvironment{varitemize}
{
\begin{list}{\labelitemi}
{\setlength{\itemsep}{0.0mm}
 \setlength{\topsep}{0.0mm}
 \setlength{\parindent}{0.0mm}
 \setlength{\parskip}{0.0mm}
 \setlength{\parsep}{0.0mm}
 \setlength{\partopsep}{0.0mm}
 \setlength{\leftmargin}{15pt}
 \setlength{\labelsep}{5pt}
 \setlength{\labelwidth}{10pt}}}
{
 \end{list} 
}

\newcounter{number}

\newenvironment{varenumerate}
{\begin{list}{\arabic{number}.}
  {
   \usecounter{number}
   \setlength{\labelwidth}{4.0mm}
   \setlength{\labelsep}{2.0mm}
   \setlength{\itemindent}{0.0mm}
   \setlength{\itemsep}{0.0mm}
   \setlength{\topsep}{0.0mm}
   \setlength{\parskip}{0.0mm}
   \setlength{\parsep}{0.0mm}
   \setlength{\partopsep}{0.0mm}
  }
}
{\end{list}}

\title{Soft Session Types}

\author{Ugo Dal Lago\footnote{Universit\`a di Bologna \& INRIA Sophia Antipolis, \texttt{dallago@cs.unibo.it}}
        \and
        Paolo Di Giamberardino\footnote{Dipartimento di Matematica e Informatica, Universit\`a di Cagliari, \texttt{digiambe@unica.it}}}

\begin{document}
\maketitle
\begin{abstract}
\noindent
We show how systems of session types can enforce interactions 
to be bounded for all typable processes. The type system we propose 
is based on Lafont's soft linear logic and is strongly inspired by 
recent works about session types as intuitionistic linear logic formulas.
Our main result is the existence, for every typable process, of
a polynomial bound on the length of reduction sequences starting from
it and on the size of its reducts.
\end{abstract}
\section{Introduction}
Session types are one of the most successful paradigms around which communication can be
disciplined in a concurrent or object-based environment. They can come in many different
flavors, depending on the underlying programming language and on the degree of flexibility
they allow when defining the structure of sessions. As an example, systems of session types for multi-party 
interaction have been recently introduced~\cite{Honda08}, while a form of higher-order session has
been shown to be definable~\cite{Mostrous07}. Recursive types, on the other hand, are part of the standard toolset
of session type theories since their inception~\cite{Honda98}. 

The key property induced by systems of session types
is the following: if two (or more) processes can be typed with ``dual'' session types, then they can
interact with each other without ``going wrong'', i.e. avoiding situations where one party
needs some data with a certain type and the other(s) offer something of a different, incompatible
type. Sometimes, one would like to go beyond that and design a type system which guarantees stronger
properties, including quantitative ones. An example of a property that we find particularly interesting
is the following: suppose that two processes $\procone$ and $\proctwo$ interact by creating 
a session having type $\typone$ through which they communicate. Is this interaction guaranteed 
to be finite? How long would it last? Moreover, $\procone$ and $\proctwo$ may be forced
to interact with other processes in order to be able to offer $\typone$. The question could then 
become: can the global amount of interaction be kept under control? In other words, one
could be interested in \emph{proving the interaction induced by sessions to be bounded}.
This problem has been almost neglected by the research community in the area of session types, 
although it is the \emph{manifesto} of the so-called implicit computational complexity (ICC), where
one aims at giving machine-free characterizations of complexity classes based on programming
languages and logical systems.

Linear logic (\lilo\ in the following) has been introduced twenty-five years ago 
by Jean-Yves Girard~\cite{Girard87}. One of its greatest merits has been to allow a 
finer analysis of the computational content of both intuitionistic and 
classical logic. In turn, this is made possible by distinguishing multiplicative
as well as additive connectives, by an involutive notion of negation, and by giving a new
status to structural rules allowing them to be applicable only to modal formulas. One
of the many consequences of this new, refined way of looking at proof theory has been
the introduction of natural characterizations of complexity classes by fragments of linear
logic. This is possible because linear logic somehow ``isolates'' complexity in the modal fragment
of the logic (which is solely responsible for the hyperexponential complexity of cut elimination
in, say intuitionistic logic), which can then be restricted so as to get exactly the
expressive power needed to capture small complexity classes. One of the simplest and most
elegant of those systems is Lafont's soft linear logic (\sll\ in the following), which has been shown to correspond to
polynomial time in the realm of classical~\cite{Lafont04}, quantum~\cite{DalLago10a} and 
higher-order concurrent computation~\cite{DalLago10b}.

Recently, Caires and Pfenning~\cite{Caires10} have shown how a system of session types can be built around
intuitionistic linear logic, by introducing \pidill, a type system for the $\pi$-calculus
where types and rules are derived from the ones of intuitionistic linear logic.
In their system, multiplicative connectives like $\tens$ and $\lin$ allow to
model sequentiality in sessions, while the additive connectives $\with$ and
$\plus$ model external and internal choice, respectively. The modal connective $\bang$, on the other
hand, allows to model a server of type $!\typone$ which can offer the functionality expressed
by $\typone$ many times.

In this paper, we study a restriction of \pidill, called \pidsll, which can be thought
of as being derived from \pidill\ in the same way as \sll\ is obtained from \lilo. In other
words, the operator $\bang$ behaves in \pidsll\ in the same way as in \sll. The main
result we prove about \pidsll\ is precisely about bounded interaction: whenever $\procone$
can be typed in \pidsll\ and $\procone\reds^\natone\proctwo$, then both $\natone$ and
$\size{\proctwo}$ (the size of the process $\proctwo$, to be defined later) are polynomially
related to $\size{\procone}$. This ensures an abstract but quite strong form of bounded
interaction. Another, perhaps more ``interactive'' formulation of the same result
is the following: if $\procone$ and $\proctwo$ interact via a channel of type
$\typone$, then the ``complexity'' of this interaction is bounded by a polynomial
on $\size{\procone}+\size{\proctwo}$, whose degree only depends on $\typone$. The proof
of bounded interaction for \pidsll\ is structurally similar to the one of polynomial
time soundness for \sll, but there are a few peculiarities which makes the argument
more complicated (see Section~\ref{sect:boundint} for more details).

We see this paper as the first successful attempt to bring techniques
from implicit computational complexity into the realm of session types. Although proving
bounded interaction has been technically nontrivial, due to the peculiarities of
the $\pi$-calculus, we think the main contribution of this work lies in showing that
bounded termination can be enforced by a natural adaptation of known systems of
session types.

\section{An Informal Account on \pidill}\label{sect:pidillia}
In this section, we will outline the main properties of \pidill, a session type system recently introduced
by Caires and Pfenning~\cite{Caires10,Caires11}. For more information, please consult the two cited papers.

In \pidill, session types are nothing more than formulas of (propositional) intuitionistic linear
logic without atoms but with (multiplicative) constants:
$$
\typone::=\unit\midd\typone\tens\typone\midd\typone\lin\typone\midd\typone\plus\typone\midd\typone\with\typone\midd\;\bang\typone.
$$
These types are assigned to channels (names) by a formal system deriving judgments in the
form
$$
\tygcf{\conone}{\contwo}{\procone}{\cone:\typone},
$$
where $\conone$ and $\contwo$ are contexts assigning types to channels, and $\procone$ is a process of
the name-passing $\pi$-calculus. The judgment above can be read as follows: the process $\procone$ acts
on the channel $\cone$ according to the session type $\typone$ \emph{whenever} composed with processes
behaving according to $\conone$ and $\contwo$ (each on a specific channel).
Informally, the various constructions on session types can be explained as follows:
\begin{varitemize}
\item
  $\unit$ is the type of an empty session channel. A process offering to communicate via a session channel 
  typed this way simply synchronizes with another process through it without exchanging anything. This is 
  meant to be an abstraction for all ground session types, e.g. natural
  numbers, lists, etc. In linear logic, this is the unit for $\tens$.
\item
  $\typone\tens\typtwo$ is the type of a session channel $\cone$ through which a message carrying another       
  channel with type $\typone$ is sent. After performing this action, the underlying process behaves according 
  to $\typtwo$ on the \emph{same} channel $\cone$.
\item
  $\typone\lin\typtwo$ is the adjoint to $\typone\tens\typtwo$: on a channel with this type, a process 
  communicate by first performing an input and receiving a channel with type $\typone$, then acting according 
  to $\typtwo$, again on $\cone$.
\item
  $\typone\plus\typtwo$ is the type of a channel on which a process either sends a special message 
  $\inlm$ and performs according to $\typone$ or sends a special message $\inrm$ and performs according to
   $\typtwo$. This corresponds to internal choice.
\item
  The type $\typone\with\typtwo$ can be assigned to a channel $\cone$ on which the underlying process offers 
  the possibility of choosing between proceeding according to $\typone$ or to $\typtwo$, both on $\cone$. 
  So, in a sense, $\with$ models external choice.
\item
  Finally, the type $\bang\typone$ is attributed to a channel $\cone$ only if a process repeatedly
  receive a channel $\ctwo$ through $\cone$, then behaving on $\ctwo$ according to $\typone$. In other
  words, $\bang\typone$ is the type of a process which offers to open new session of type $\typone$.
\end{varitemize}
\newcommand{\dill}{\ensuremath{\texttt{DILL}}}
The assignments in $\conone$ and $\contwo$ are of two different natures: 
\begin{varitemize}
\item
  An assignment of a type $\typone$ to a channel $\cone$ in $\contwo$ signals the need by $\procone$
  of a process offering a session of type $\typone$ on the channel $\cone$; for this reason, $\contwo$
  is called the \emph{linear context};
\item
  An assignment of a type $\typone$ to a channel $\cone$ in $\conone$, on the other hand, represents
  the need by $\procone$ of a process offering a session of type $\bang\typone$ on the channel $\cone$;
  thus, $\conone$ is the \emph{exponential context}. 
\end{varitemize}
Typing rules \pidill\ are very similar to the ones of \dill, itself one of the many possible formulations of linear logic
as a sequent calculus. In particular, there are two cut rules, each corresponding to a different portion of the context:
$$
\begin{array}{ccc}
\infer
 {\tygcf{\conone}{\contwo_1,\contwo_2}{\rest{\cone}{(\para{\procone}{\proctwo})}}{\tcone}}
 {\tygcf{\conone}{\contwo_1}{\procone}{\cone:A} & \tygcf{\conone}{\contwo_2, \cone:A}{\proctwo}{\tcone}}
&
&
\infer
 {\tygcf{\conone}{\contwo}{\rest{\cone}{(\para{\binc{\cone}{\ctwo}{\procone}}{\proctwo})}}{\tcone}}
 {\tygcf{\conone}{\emcon}{\procone}{\ctwo:A} & \tygcf{\conone,\cone:A}{\contwo}{\proctwo}{\tcone}}
\end{array}
$$
Please observe how cutting a process $\procone$ against an assumption in the exponential context
requires to ``wrap'' $\procone$ inside a replicated input: this allows to \emph{turn $\procone$ into
a server}.

\newcommand{\natnum}{\mathbf{N}}
\newcommand{\mpthree}{\mathbf{S}}
In order to illustrate the intuitions above, we now give an example. Suppose that a process $\procone$ models a service which
acts on $\cone$ as follows: it receives two natural numbers, to be interpreted as the number and secret code of a credit
card and, if they correspond to a valid account, returns an MP3 file and a receipt code to the client. Otherwise, the
session terminates. To do so, $\procone$ needs to interact with another service (e.g. a banking service) $\proctwo$ through a
channel $\ctwo$. The banking service, among others, provides a way to verify whether a given number and code correspond to 
a valid credit card. In \pidill, the process $\procone$ would receive the type
$$
\tygcf{\emcon}{\ctwo:(\natnum\lin\natnum\lin\unit\plus\unit)\with\typone}{\procone}{\cone:\natnum\lin\natnum\lin(\mpthree\tens\natnum)\plus\unit},
$$
where $\natnum$ and $\mpthree$ are pseudo-types for natural numbers and MP3s, respectively. $\typone$ is the type of all the other
functionalities $\proctwo$ provides. As an example, $\procone$ could be the following process:
\newcommand{\cnumbone}{\mathit{nm}_1}
\newcommand{\ccodeone}{\mathit{cd}_1}
\newcommand{\cnumbtwo}{\mathit{nm}_2}
\newcommand{\ccodetwo}{\mathit{cd}_2}
\newcommand{\cmpthree}{\mathit{mp}}
\newcommand{\crecp}{\mathit{rp}}
\begin{align*}
&
\inc{\cone}{\cnumbone}{
\inc{\cone}{\ccodeone}{
\inl{\ctwo}{\\
&\qquad
\rest{\cnumbtwo}{
\outc{\ctwo}{\cnumbtwo}{
\rest{\ccodetwo}{
\outc{\ctwo}{\ccodetwo}{\\
&\qquad\qquad
\case{\ctwo}
  { 
    \inl{\cone}{
    \rest{\cmpthree}{
    \outc{\cone}{\cmpthree}{
    \rest{\crecp}{
    \outwc{\cone}{\crecp}
    }}}}
  }
  {
    \inr{\cone}{\emproc}
  }
}}}}}}}
\end{align*}
Observe how the credit card number and secret code forwarded to $\proctwo$ are not the ones sent by the client: the flow of information
happening inside a process is abstracted away in \pidill. Similarly, one can write a process $\proctwo$ and assign it a type as follows:
$\tygcf{\emcon}{\emcon}{\proctwo}{\ctwo:(\natnum\lin\natnum\lin\unit\plus\unit)\with\typone}$. 
Putting the two derivations together, we obtain
$\tygcf{\emcon}{\emcon}{\rest{\cone}{(\para{\procone}{\proctwo})}}{\cone:\natnum\lin\natnum\lin(\mpthree\tens\natnum)\plus\unit}$.

Let us now make an observation which will probably be appreciated by the reader familiar with linear logic. The processes $\procone$
and $\proctwo$ can be typed in \pidill\ without the use of any exponential rule, nor of cut. What allows to type the
parallel composition $\rest{\cone}{(\para{\procone}{\proctwo})}$, on the other hand, is precisely the cut rule. The interaction between 
$\procone$ and $\proctwo$ corresponds to the elimination of that cut. Since there isn't any exponential around, this process must be
finite, since the size of the underlying process shrinks at every single reduction step. From a process-algebraic point of view,
on the other hand, the finiteness of the interaction is an immediate consequence of the absence of any replication in $\procone$
and $\proctwo$.

The banking service $\proctwo$ can only serve one single session and would vanish at the end of it. To make it into a \emph{persistent
server} offering the same kind of session to possibly many different clients, $\proctwo$ must be put into a replication,
obtaining $\procthree=\binc{\cthree}{\ctwo}{\proctwo}$. In $\procthree$, the channel $\cthree$ can be given type 
$\bang((\natnum\lin\natnum\lin\unit\plus\unit)\with\typone)$  in the empty context. The process $\procone$ should 
be somehow adapted to be able to interact with $\procthree$: before performing the
 two outputs on $\ctwo$, it's necessary to ``spawn'' $\procthree$ by performing an output on $\cthree$ and passing $\ctwo$ to it. This way
we obtain a process $\procfour$ such that
$$
\tygcf{\emcon}{\cthree:\bang((\natnum\lin\natnum\lin\unit\plus\unit)\with\typone)}{\procfour}{\cone:\natnum\lin\natnum\lin(\mpthree\tens\natnum)\plus\unit},
$$
and the composition $\rest{\cthree}{(\para{\procfour}{\procthree})}$ can be given the same type as 
$\rest{\cone}{(\para{\procone}{\proctwo})}$. Of course, $\procfour$ could have used the channel $\cthree$ more than once, initiating
distinct sessions. This is meant to model a situation in which the same client interacts with the same server by creating more
than one session with the same type, itself done by performing \emph{more than one output} on the same channel. Of course, servers can themselves 
depend on other servers. And these dependencies are naturally modeled by the exponential modality of linear logic.

\section{On Bounded Interaction}\label{sect:onbint}
In \pidill, the possibility of modeling persistent servers which in turn depend on other servers makes it possible to type processes 
which exhibit a very complex and combinatorially heavy interactive behavior.

Consider the following processes, the first one parameterized on a natural number $i\in\NN$:
\begin{align*}
\dupserver_i&\doteq\;\binc{\cone_i}{\ctwo}{\rest{\cthree}{\outc{\cone_{i+1}}{\cthree}{\rest{\cfour}{\outc{\cone_{i+1}}{\cfour}}}}};\\
\dupclient&\doteq\rest{\ctwo}{\outsc{\cone_{0}}{\ctwo}};\\
\ser&\doteq \binc{\cone}{\ctwo}{\emproc}
\end{align*}
In \pidill, these processes can be typed as follows:
\begin{align*}
\tygcf{\emcon}{\cone_{i+1}:\bang{\unit}}{&\dupserver_i}{\cone_{i}:\bang{\unit}};\\
\tygcf{\emcon}{\cone_{0}:\bang{\unit}}{&\dupclient}{\cthree:\unit};\\
\tygcf{\emcon}{\emcon}{&\ser}{\cone: \bang \unit}.
\end{align*}
Then, for every $n\in\NN$ one can type the parallel composition
$$
\mulser_{n+1}\doteq\dotrest{\cone_{1}}{\cone_{n}}{(\dotpara{\dupserver_n}{\dupserver_0})}
$$
as follows
$$
\tygcf{\emcon}{\cone_{n}:\bang{\unit}}{\mulser_n}{\cone_{0}:\bang{\unit}}.
$$
Informally, $\mulser_{n}$ is a persistent server which offers a session  type $\unit$ on a channel
$\cone_0$, provided a server with the same functionality is available on $\cone_n$. The
process $\mulser_{n}$ is the parallel composition of $n$ servers in
the form $\dupserver_i$, each spawning two different sessions provided
by $\dupserver_{i+1}$ on the same channel $\cone_{i+1}$.

The process $\mulser_n$ cannot be further reduced. But notice that, once
$\ser, \mulser_n$ and $\dupclient$ are composed, the following exponential blowup is bound to happen: 
\begin{align*}
  \rest{\cone_0}{   \para{(\ser}{\para{\mulser_n}{\dupclient})}&\equiv\dotrest{\cone_{0}}{\cone_{n}}{(\para{\ser}{\para{\dotpara{\dupserver_n}{\dupserver_0}}{\dupclient})}}}\\
      &\reds\dotrest{\cone_{0}}{\cone_{n}}{\para{(\ser}{\para{\dotpara{\dupserver_n}{\dupserver_1}}{\procone_1})}}\\
      &\reds^2\dotrest{\cone_{1}}{\cone_{n}}{(\para{\ser}{\para{\dotpara{\dupserver_n}{\dupserver_2}}{\para{\procone_2}{\procone_2}})}}\\
      &\reds^4\dotrest{\cone_{2}}{\cone_{n}}{(\para{\ser}{\para{\dotpara{\dupserver_n}{\dupserver_3}}{\underbrace{\dotpara{\procone_3}{\procone_3}}_{\mbox{$4$ times}}}})}\\
      &\reds^{*}\rest{\cone_n}{(\para{\ser}{\para{\dupserver_n}{\underbrace{\dotpara{\procone_n}{\procone_n}}_{\mbox{$2^n$ times}}}})}\\
      &\reds^{2^n}\emproc.
\end{align*}
Here, for every $i\in\NN$ the process $\procone_i$ is simply $\rest{\ctwo}{\outc{\cone_{i}}{\ctwo}{\rest{\cthree}{\outwc{\cone_{i}}{\cthree}}}}$.
Notice that \emph{both} the number or reduction steps \emph{and} the size of intermediate
processes are exponential in $n$, while the size of the initial process is linear in $n$.
This is a perfectly legal process in \pidill. Moreover the type $\bang\unit$ of the channel
$\cone_0$ through which $\dupclient$ and $\mulser_n$ communicate does not contain any information
about the ``complexity'' of the interaction: it is the same for every $n$.

The deep reasons why this phenomenon can happen lie in the very general (and ``generous'') 
rules governing the behavior of the exponential modality $\bang$ in linear logic. It is this generality
that allows the embedding of propositional intuitionistic logic into linear logic.
Since the complexity of normalization for the former~\cite{Statman79,Mairson92} is nonelementary, the 
exponential blowup described above is not a surprise.

It would be desirable, on the other hand, to be sure that the interaction caused by any process $\procone$ 
is bounded: whenever $\procone\reds^\natone\proctwo$, then there's a \emph{reasonably low}
upper bound to both $\natone$ and $\size{\proctwo}$. This is precisely what we achieve by restricting
\pidill\ into \pidsll.

\section{\pidsll: Syntax and Main Properties}

In this section, the syntax of \pidsll\ will be introduced.
Moreover, some basic operational properties will be stated and proved.

\subsection{The Process Algebra}
\pidsll\ is a type system for a fairly standard $\pi$-calculus, exactly the one on top of which
\pidill\ is defined:
\begin{definition}[Processes]\label{def:scon}
Given an infinite set of \emph{names} or \emph{channels} $\cone,\ctwo,\cthree,\ldots$, the set of
\emph{processes} is defined as follows: 
$$
\procone::=\emproc\midd\para{\procone}{\proctwo}\midd\rest{\cone}{\procone}\midd\inc{\cone}{\ctwo}{\procone}\midd
   \outc{\cone}{\ctwo}{\procone}\midd\binc{\cone}{\ctwo}{\procone}\midd\inl{\cone}{\procone}\midd\inr{\cone}{\procone}
   \midd\case{\cone}{\procone}{\proctwo}
$$
\end{definition}
The only non-standard constructs are the last three, which allow to define a choice mechanism: 
the process $\case{\cone}{\procone}{\proctwo}$ can evolve as $\procone$ or as $\proctwo$ \emph{after}
having received a signal in the form $\inlm$ o $\inrm$ through $\cone$. Processes sending such
a signal through the channel $\cone$, then continuing like $\procone$ are, respectively,
$\inl{\cone}{\procone}$ and $\inr{\cone}{\procone}$. The set of names occurring free in the process
$\procone$ (hereby denoted $\fn{\procone}$) is defined as usual. The same holds for the capture
avoiding substitution of a name $\cone$ for $\ctwo$ in a process $\procone$ (denoted $\subst{\procone}{\cone}{\ctwo}$),
and for $\alpha$-equivalence between processes (denoted $\aequ$).

Structural congruence is an equivalence relation identifying those processes which are syntactically
different but can be considered equal for very simple structural reasons:
\begin{definition}[Structural Congruence]
The relation $\scon$, called \emph{structural congruence},  
is the least congruence on processes satisfying the following seven axioms:
\begin{align*}
\procone&\scon\proctwo\quad\mbox{whenever $\procone\aequ\proctwo$}; & \rest{\cone}{\emproc}&\scon\emproc;\\
\para{\procone}{\emproc}&\scon\procone; & \rest{\cone}{\rest{\ctwo}{\procone}}&\scon\rest{\ctwo}{\rest{\cone}{\procone}};\\
\para{\procone}{\proctwo}&\scon\para{\proctwo}{\procone}; & \para{(\rest{\cone}{\procone})}{\proctwo}&\scon\rest{\cone}{(\para{\procone}{\proctwo}})\quad\mbox{whenever $\cone\notin\fn{\proctwo}$};\\
\para{\procone}{(\para{\proctwo}{\procthree})}&\scon\para{(\para{\procone}{\proctwo})}{\procthree}. & &
\end{align*}
\end{definition}
Formal systems for reduction and labelled semantics can be defined in a standard way. We refer the
reader to~\cite{Caires10} for more details.

A quantitative attribute of processes which is delicate to model in process algebras is their
\emph{size}: how can we measure the size of a process? In particular, it is not straightforward
to define a measure which both reflects the ``number of symbols'' in the process and is invariant
under structural congruence (this way facilitating all proofs). A good compromise is the following:
\begin{definition}[Process Size]
The \emph{size} $\size{\procone}$ of a process $\procone$ is defined by induction on the structure of 
$\procone$ as follows:
\begin{align*}
\size{\emproc}&=0; & \size{\inc{\cone}{\ctwo}{\procone}}&=\size{\procone}+1; & \size{\inl{\cone}{\procone}}&=\size{\procone}+1; \\
\size{\para{\procone}{\proctwo}}&=\size{\procone}+\size{\proctwo}; & \size{\outc{\cone}{\ctwo}{\procone}}&=\size{\procone}+1;
   & \size{\inr{\cone}{\procone}}&=\size{\procone}+1;\\
\size{\rest{\cone}{\procone}}&=\size{\procone}; & \size{\binc{\cone}{\ctwo}{\procone}}&=\size{\procone}+1;
   & \size{\case{\cone}{\procone}{\proctwo}}&=\size{\procone}+\size{\proctwo}+1.
\end{align*}
\end{definition}
According to the definition above, the empty process $\emproc$ has null size, while restriction
does not increase the size of the underlying process. This allows for a definition of size which 
remains invariant under structural congruence. The price to pay is the following: the ``number
of symbols'' of a process $\procone$ can be arbitrarily bigger than $\size{\procone}$ (e.g.
for every $\natone\in\NN$, $\size{\restp{\cone}{\natone}{\procone}}=\size{\procone}$). However, we have 
the following:
\begin{lemma}
   For every $\procone,\proctwo$, $\size{\procone}=\size{\proctwo}$ whenever
   $\procone\scon\proctwo$.
   Moreover, there is a polynomial $\polyone:\NN\rightarrow\NN$ such that for every $\procone$,
   there is $\proctwo$ with $\procone\scon\proctwo$ and the number
   of symbols in $\proctwo$ is at most $\polyone(\size{\proctwo})$.
\end{lemma}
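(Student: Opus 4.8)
The statement splits into two independent claims, and the plan is to treat them separately.

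For the first claim ($\size{\procone}=\size{\proctwo}$ whenever $\procone\scon\proctwo$), the plan is to exploit that $\scon$ is defined as the \emph{least} congruence closed under the seven axioms. First I would observe that $\size{\cdot}$ depends only on the tree structure of a process and not on the particular names occurring in it, so $\alpha$-equivalent processes have equal size; this discharges the first axiom. Then I would check, one by one, that the remaining six axioms relate processes of equal size, each being an immediate arithmetic identity, e.g. $\size{\para{(\rest{\cone}{\procone})}{\proctwo}}=\size{\procone}+\size{\proctwo}=\size{\rest{\cone}{(\para{\procone}{\proctwo})}}$ for scope extrusion, and commutativity/associativity of $+$ for the parallel axioms. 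Finally, since the relation $\{(\procone,\proctwo)\mid\size{\procone}=\size{\proctwo}\}$ is patently an equivalence relation and is closed under all process constructors (the size of a compound process is a fixed function of the sizes of its immediate subterms, by the definition of $\size{\cdot}$), it is a congruence containing all axiom instances, hence contains $\scon$. This part is entirely routine.

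For the second claim the key preliminary step, and the one I expect to be the main obstacle, is a \emph{garbage-collection} lemma for vacuous restrictions, which is \emph{not} among the seven axioms but is derivable from them: if $\cone\notin\fn{\procone}$ then $\rest{\cone}{\procone}\scon\procone$. I would prove it by the chain
\[
\procone \scon \para{\procone}{\emproc} \scon \para{\procone}{(\rest{\cone}{\emproc})} \scon \para{(\rest{\cone}{\emproc})}{\procone} \scon \rest{\cone}{(\para{\emproc}{\procone})} \scon \rest{\cone}{(\para{\procone}{\emproc})} \scon \rest{\cone}{\procone},
\]
using $\para{\procone}{\emproc}\scon\procone$ and $\rest{\cone}{\emproc}\scon\emproc$ backwards, then commutativity, then the scope-extrusion axiom (legitimate precisely because $\cone\notin\fn{\procone}$), and finally the two unit axioms forward inside the restriction. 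The delicate point is that the calculus offers no direct rule discarding an unused binder, so one has to manufacture the discarded $\emproc$ through the interplay of the $\para{\cdot}{\emproc}$ and $\rest{\cone}{\emproc}$ units with scope extrusion.

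With this in hand I would define a rewriting that repeatedly (a) replaces $\para{\procone}{\emproc}$ (and its symmetric form) by $\procone$, and (b) deletes any vacuous restriction $\rest{\cone}{\procone}$ with $\cone\notin\fn{\procone}$, each step staying inside the $\scon$-class by the axioms and the garbage-collection lemma. Every step strictly decreases the number of symbols, a non-negative integer, so the process terminates at some $\proctwo\scon\procone$; by the first claim $\size{\proctwo}=\size{\procone}$. In the normal form $\proctwo$ (after $\alpha$-renaming all bound names apart, which keeps us in the same class) no parallel node has an $\emproc$ child and every restriction binds a name that genuinely occurs free in its scope. I would then count the nodes of the syntax tree of $\proctwo$: writing $n_{\mathrm{pre}},n_{\mathrm{case}},n_\nu,n_{|},n_0$ for the numbers of prefix, case, restriction, parallel and $\emproc$ nodes, the definition of $\size{\cdot}$ gives $\size{\proctwo}=n_{\mathrm{pre}}+n_{\mathrm{case}}$. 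The standard leaf count for trees with unary and binary internal nodes gives $n_0=n_{|}+n_{\mathrm{case}}+1$; since no $\emproc$ sits under a parallel, each $\emproc$ leaf lies under a prefix, under a case, or at the root, so $n_0\le n_{\mathrm{pre}}+2n_{\mathrm{case}}+1$, whence $n_{|}\le\size{\proctwo}$ and $n_0\le 2\size{\proctwo}+1$. Finally each restriction can be assigned a distinct free name-occurrence (its bound name in its scope, occurrences being distinct because the bound names are distinct), and all such occurrences sit at prefix or case subjects/objects, of which there are at most $2(n_{\mathrm{pre}}+n_{\mathrm{case}})$; hence $n_\nu\le 2\size{\proctwo}$. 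Since the number of symbols of $\proctwo$ is bounded by a constant times the total number of nodes (each node carries $O(1)$ names and delimiters), all five counts are $O(\size{\proctwo})$, so the number of symbols of $\proctwo$ is linear in $\size{\proctwo}$; thus a linear $\polyone$ suffices, and in particular $\polyone$ can be taken to be a polynomial.
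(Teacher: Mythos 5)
Your proof is correct, and both halves do the job. For the first claim you do essentially what the paper does (it dismisses this part as "a simple inspection" of the axioms): you check that each of the seven axioms relates processes of equal size and then observe that the equal-size relation is itself a congruence, hence contains the least congruence $\scon$. For the second claim your route is more explicit than, and somewhat different from, the paper's, which merely asserts that an induction on $\procone$ works once the polynomial is fixed as $\polyone(x)=x^2$. You instead normalize globally: you first derive the garbage-collection law $\rest{\cone}{\procone}\scon\procone$ for $\cone\notin\fn{\procone}$ from the seven axioms (a step the paper never makes explicit, and which is genuinely needed, since discarding an unused binder is not among the axioms), then use it together with the unit law as a terminating rewriting system, and finally bound the node counts of the resulting normal form combinatorially. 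This buys a sharper result: a \emph{linear} polynomial rather than the paper's quadratic one, with an explicit account of how every node of the normal form is charged to the size. One small point should be made explicit in your counting: a $\emproc$ leaf also cannot sit directly under a restriction, because in your normal form a restriction whose scope is $\emproc$ would be vacuous ($\fn{\emproc}=\emptyset$) and hence already removed; this case is silently excluded by the property you established, and once stated, your bound $n_0\le n_{\mathrm{pre}}+2n_{\mathrm{case}}+1$ and everything downstream of it is airtight. Both approaches suffice for the lemma as stated, since only the existence of some polynomial is required; the paper's induction is shorter, while yours proves a stronger bound and records the derived congruence law that any such argument implicitly relies on.
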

\begin{proof}
The fact $\procone\scon\proctwo$ implies $\size{\procone}=\size{\proctwo}$ can be proved
by a simple inspection of Definition~\ref{def:scon}. The second part of the lemma
can be proved by induction on $\procone$ once the polynomial $\polyone$ is fixed as $\polyone(x)=x^2$.
\end{proof}

\subsection{The Type System}
The language of types of \pidsll\ is exactly the same as the one of \pidill,
and the interpretation of type constructs does not change (see Section~\ref{sect:pidillia} for some informal
details). Typing judgments and typing rules, however, are significantly different, in particular, in the
treatment of the exponential connective $\bang$. More specifically, \pidill\ allows to give type to the
following processes:
\newcommand{\procder}[1]{\mathit{DER}_{#1}}
\newcommand{\proccont}[1]{\mathit{CONT}_{#1}}
\newcommand{\procdig}[1]{\mathit{DIG}_{#1}}
\begin{varitemize}
\item
  For every type $\typone$, there is a process $\procder{\typone}$ such
  that $\tygcf{\emcon}{\cone:\bang{\typone}}{\procder{\typone}}{\ctwo:\typone}$.
  As an example, $\procder{\unit}$ is $\rest{\cthree}{\outwc{\cone}{\cthree}}$. Intuitively,
  $\procder{\typone}$ is a process opening a new session of type $\typone$ by calling a
  server of type $\bang{\typone}$.
\item
  For every type $\typone$, there is a process $\proccont{\typone}$ such
  that $\tygcf{\emcon}{\cone:\bang{\typone}}{\proccont{\typone}}{\ctwo:\bang{\typone}\tens\bang{\typone}}$.
  Intuitively, $\proccont{\typone}$ is a process offering first a session of type $\bang{\typone}$ and
  then proceeding as $\bang{\typone}$ along the channel $\ctwo$. All this with the need
  of only a server of type $\bang{\typone}$ from $\cone$. As an example, 
  $\proccont{\unit}$ is 
  $$
  \rest{\cfour}{\left(\outc{\cfive}{\cfour}{(\para{(\binc{\cfour}{\ctwo}{\rest{\cthree}{\outwc{\cone}{\cthree}}})}
                                                   {(\binc{\cfive}{\ctwo}{\rest{\cthree}{\outwc{\cone}{\cthree}}})})}
    \right)}.
  $$
\item
  For every type $\typone$, there is also a process $\procdig{\typone}$ such
  that $\tygcf{\emcon}{\cone:\bang{\typone}}{\procdig{\typone}}{\ctwo:\bang{\bang{\typone}}}$, which
  turns a server into a server of servers. The reader is invited to define $\procdig{\unit}$ as an exercise.
\end{varitemize}
As we will see at the end of this section, only $\procder{\typone}$ can be given a type in \pidsll, while
$\proccont{\typone}$ and $\procdig{\typone}$ cannot.

In \pidsll, typing judgments become syntactical expressions in the form
$$
\tyg{\conone}{\contwo}{\conthree}{\procone}{\cone:\typone}.
$$
First of all, observe how the context is divided into \emph{three} chunks now:
$\conone$ and $\contwo$ have to be interpreted as exponential contexts, while $\conthree$
is the usual linear context from \pidill. The necessity of having \emph{two} exponential
contexts is a consequence of the finer, less canonical exponential discipline of \sll\ compared to the
one of \lilo. We use the following terminology: $\conone$ is said to be the \emph{auxiliary}
context, while $\contwo$ is the \emph{multiplexor} context.

Typing rules are in Figure~\ref{fig:pidslltr}.
\begin{figure}
\begin{center}
\fbox{\begin{minipage}[c]{.99\textwidth}
$$
\begin{array}{ccccc}
\infer[\Lone]
  {\tyg{\conone}{\contwo}{\conthree,\cone:\unit}{\procone}{\tcone}}
  {\tyg{\conone}{\contwo}{\conthree}{\procone}{\tcone}}
&
\hspace{10pt}
&
\infer[\Rone]
  {\tyg{\conone}{\contwo}{\emcon}{\emproc}{\cone:\unit}}
  {}
\end{array}
$$
\vspace{2pt}
$$
\begin{array}{ccc}
\infer[\Lten]
 {\tyg{\conone}{\contwo}{\conthree, \cone : A \tens B}{\inc{\cone}{\ctwo}{\procone}}{\tcone}}
 {\tyg{\conone}{\contwo}{\conthree, \ctwo: A, \cone: B}{\procone}{\tcone}}
&
\hspace{10pt}
&
\infer[\Rten]
 {\tyg{\conone_1,\conone_2}{\contwo}{\conthree_1,\conthree_2}{\rest{\ctwo}{\outc{\cone}{\ctwo}{(\para{\procone}{\proctwo})}}}{\cone: A \tens B}}
 {\tyg{\conone_1}{\contwo}{\conthree_1}{\procone}{\ctwo:A} & \tyg{\conone_2}{\contwo}{\conthree_2}{\proctwo}{\cone:B}}
\end{array}
$$
\vspace{2pt}
$$
\begin{array}{ccc}
\infer[\Llin]
 {\tyg{\conone_1,\conone_2}{\contwo}{\conthree_1,\conthree_2, \cone: A \lin B}
   {\rest{\ctwo}{\outc{\cone}{\ctwo}{(\para{\procone}{\proctwo})}}}{\tcone}}
 {\tyg{\conone_1}{\contwo}{\conthree_1,\ctwo: A}{\procone}{\tcone} & \tyg{\conone_2}{\contwo}{\conthree_2, \cone:B}{\proctwo}{\tcone}}
&
\hspace{-5pt}
&
\infer[\Rlin]
 {\tyg{\conone}{\contwo}{\conthree}{\inc{\cone}{\ctwo}{\procone}}{\cone : A \lin B}}
 {\tyg{\conone}{\contwo}{\conthree, \ctwo: A }{\procone}{\cone: B}}
\end{array}
$$
\vspace{2pt}
$$
\begin{array}{ccc}
\infer[\Lplus]
 {\tyg{\conone}{\contwo}{\cone:\typone\plus\typtwo,\conthree}{\case{\ctwo}{\procone}{\proctwo}}{\tcone}}
 {
   \tyg{\conone}{\contwo}{\conthree,\cone:\typone}{\procone}{\tcone} &
   \tyg{\conone}{\contwo}{\conthree,\cone:\typtwo}{\procone}{\tcone}
 }
&
\hspace{10pt}
&
\infer[\Rplusone]
  {\tyg{\conone}{\contwo}{\conthree}{\inl{\cone}{\procone}}{\cone:\typone\plus\typtwo}}
  {\tyg{\conone}{\contwo}{\conthree}{\procone}{\cone:\typone}} 
\end{array}
$$
\vspace{2pt}
$$
\begin{array}{ccc}
\infer[\Rplustwo]
  {\tyg{\conone}{\contwo}{\conthree}{\inr{\cone}{\procone}}{\cone:\typone\plus\typtwo}}
  {\tyg{\conone}{\contwo}{\conthree}{\procone}{\cone:\typtwo}} 
&
\hspace{10pt}
&
\infer[\Lwithone]
  {\tyg{\conone}{\contwo}{\conthree,\cone:\typone\with\typtwo}{\inl{\cone}{\procone}}{\tcone}}
  {\tyg{\conone}{\contwo}{\conthree,\cone:\typone}{\procone}{\tcone}} 
\end{array}
$$
\vspace{2pt}
$$
\begin{array}{ccc}
\infer[\Lwithtwo]
  {\tyg{\conone}{\contwo}{\conthree,\cone:\typone\with\typtwo}{\inr{\cone}{\procone}}{\tcone}}
  {\tyg{\conone}{\contwo}{\conthree,\cone:\typtwo}{\procone}{\tcone}} 
&
\hspace{10pt}
&
\infer[\Rwith]
 {\tyg{\conone}{\contwo}{\conthree}{\case{\ctwo}{\procone}{\proctwo}}{\cone:\typone\with\typtwo}}
 {
   \tyg{\conone}{\contwo}{\conthree}{\procone}{\cone:\typone} &
   \tyg{\conone}{\contwo}{\conthree}{\procone}{\cone:\typtwo}
 }
\end{array}
$$
\vspace{2pt}
$$
\begin{array}{ccc}
\infer[\bemd]
 {\tyg{\conone}{\contwo,\cone:A}{\conthree}{\rest{\ctwo}{\outc{\cone}{\ctwo}{\procone}}}{\tcone}}
 {\tyg{\conone}{\contwo,\cone:A}{\conthree,\ctwo: A}{\procone}{\tcone}}
&
\hspace{10pt}
&
\infer[\bemb]
 {\tyg{\conone,\cone:A}{\contwo}{\conthree}{\rest{\ctwo}{\outc{\cone}{\ctwo}{\procone}}}{\tcone}}
 {\tyg{\conone}{\contwo}{\conthree,\ctwo: A}{\procone}{\tcone}}
\end{array}
$$
\vspace{2pt}
$$
\begin{array}{ccccc}
\infer[\Lbangd]
{\tyg{\conone}{\contwo}{\conthree, \cone: \bang A}{\procone}{\tcone}}
{\tyg{\conone}{\contwo, \cone : A}{\conthree}{\procone}{\tcone}}
&
\hspace{10pt}
&
\infer[\Lbangb]
{\tyg{\conone}{\contwo}{\conthree, \cone: \bang A}{\procone}{\tcone}}
{\tyg{\conone, \cone : A}{\contwo}{\conthree}{\procone}{\tcone}}
&
\hspace{10pt}
&
\infer[\Rbang]
{\tyg{\emcon}{\contwo}{\bang\conone}{\binc{\cone}{\ctwo}{\proctwo}}{\cone: \bang A}} 
{\tyg{\conone}{\emcon}{\emcon}{\proctwo}{\ctwo:A}}
\end{array}
$$
\vspace{2pt}
$$
\infer[\cut]
 {\tyg{\conone_1,\conone_2}{\contwo}{\conthree_1,\conthree_2}{\rest{\cone}{(\para{\procone}{\proctwo})}}{\tcone}}
 {\tyg{\conone_1}{\contwo}{\conthree_1}{\procone}{\cone:A} & \tyg{\conone_2}{\contwo}{\conthree_2, \cone:A}{\proctwo}{\tcone}}
$$
\vspace{2pt}
$$
\infer[\cutd]
 {\tyg{\conone}{\contwo}{\conthree}{\rest{\cone}{(\para{\binc{\cone}{\ctwo}{\procone}}{\proctwo})}}{\tcone}}
 {\tyg{\contwo}{\emcon}{\emcon}{\procone}{\ctwo:A} & \tyg{\conone}{\contwo,\cone:A}{\conthree}{\proctwo}{\tcone}}
$$
\vspace{2pt}
$$
\infer[\cutb]
 {\tyg{\conone_1,\conone_2}{\contwo}{\conthree}{\rest{\cone}{(\para{\binc{\cone}{\ctwo}{\procone}}{\proctwo})}}{\tcone}}
 {\tyg{\conone_1}{\emcon}{\emcon}{\procone}{\ctwo:A} & \tyg{\conone_2, \cone:A}{\contwo}{\conthree}{\proctwo}{\tcone}}
$$
\end{minipage}}
\end{center}
\caption{Typing rules for \pidsll.}\label{fig:pidslltr} 
\end{figure}
The rules governing the typing constant $\unit$, the multiplicatives ($\tens$ and $\lin$) and the
additives ($\plus$ and $\with$) are exact analogues of the ones from \pidill. The only differences
come from the presence of two exponential contexts: in binary multiplicative
rules ($\Rten$ and  $\Llin$) the auxiliary context is treated multiplicatively, 
while the multiplexor context is treated additively,
as in \pidill\footnote{The reader familiar with linear logic and proof nets will recognize 
in the different treatment of the auxiliary and multiplexor contexts, one of the basic 
principles of $\sll$: \emph{contraction is forbidden on the auxiliary doors of exponential boxes}. 
The channel names contained in the auxiliary context correspond to the auxiliary doors of 
exponential boxes, so we treat them multiplicatively. The contraction effect induced by 
the additive treatment of the channel names in the multiplexor context corresponds to 
the multiplexing rule of $\sll$. }. Now, consider the rules governing the exponential 
connective $\bang$, which are $\bemb$, $\bemd$, $\Lbangb$, $\Lbangd$ and $\Rbang$:
\begin{varitemize}
\item
  The rules $\bemb$ and $\bemd$ both allow to spawn a server. This corresponds to turning
  an assumption $\cone:A$ in the linear context into one $\ctwo:A$ in one of the exponential
  contexts; in $\bemd$, $\cone:A$ could be already present in the multiplexor context, while
  in $\bemb$ this cannot happen;
\item
  The rules $\Lbangb$ and $\Lbangd$ lift an assumption in the exponential contexts to
  the linear context; this requires changing its type from $\typone$ to $\bang\typone$;
\item
  The rule $\Rbang$ allows to turn an ordinary process into a server, by packaging it into
  a replicated input and modifying its type.
\end{varitemize}
Finally there are \emph{three} cut rules in the system, namely $\cut$, $\cutb$ and $\cutd$:
\begin{varitemize}
\item
  $\cut$ is the usual linear cut rule, i.e. the natural generalization of the one from \pidill.
\item
  $\cutb$ and $\cutd$ allow to eliminate an assumption in one of the the two exponential contexts. In both
  cases, the process which allows to do that must be typable with empty linear and multiplexor
  contexts.
\end{varitemize}
\newcommand{\procmult}[2]{\mathit{MULT}_{#1}^{#2}}
Observe how both $\proccont{\typone}$ and $\procdig{\typone}$ are not typable in \pidsll. Take,
as an example, $\proccont{\unit}$: the two occurrences of $\cone$ are in the scope of a replicated
input, and this pattern is not allowed in the restricted setting of soft linear logic. On the other
hand, $\procder{\typone}$ is indeed typable. Actually, a generalization of it
called $\procmult{\typone}{n}$ (where $n\geq 0$) can be typed as follows
$$
\tyg{\emcon}{\emcon}{\cone:\bang{\typone}}{\procmult{\typone}{n}}
  {\ctwo:\underbrace{\typone\tens\ldots\tens\typone}_{\mbox{$n+2$ times}}}.
$$
For example, $\procmult{\unit}{2}$ is the following process:
$$
\rest{\cone}{\outc{\ctwo}{\cone}{\rest{\cone_1}{\outc{\ctwo}{\cone_1}{\rest{\cone_2}{\outwc{\ctwo}{\cone_2}}}}}}.
$$
\subsection{Back to Our Example}\label{sect:btoe}
Let us now reconsider the example processes introduced in Section~\ref{sect:onbint}. The basic
building block over which everything is built was the process
$\dupserver_i=\binc{\cone_i}{\ctwo}{\rest{\cthree}{\outc{\cone_{i+1}}{\cthree}{\rest{\cfour}{\outc{\cone_{i+1}}{\cfour}}}}}$.
We claim that for every $i$, the process $\dupserver_i$ is \emph{not} typable in \pidsll. To understand why, observe
that the only way to type a replicated input like $\dupserver_i$ is by the typing
rule $\Rbang$, and that its premise requires the body of the replicated input to be typable with empty linear and multiplexor
contexts. A quick inspection on the typing rules reveals that every name in the \emph{auxiliary} context
occurs (free) exactly once in the underlying process (provided we count two occurrences in the
branches of a $\casem$ as just \emph{a single} occurrence). However, the name $\cone_{i+1}$ appears
\emph{twice} in the body of $\dupserver_i$.
A slight variation on the example above, on the other hand, \emph{can} be typed in \pidsll, but this
requires changing its type. 

\subsection{Subject Reduction}
A basic property most type systems for functional languages satisfy is subject reduction: typing is preserved along
reduction. For processes, this is often true for internal reduction: if $\procone\reds\proctwo$ and $\vdash\procone:\typone$,
then $\vdash\proctwo:\typone$. In this section, a subject reduction result for \pidsll\ will be given and some ideas 
on the underlying proof will be described. Some concepts outlined here will become necessary ingredients in the
proof of bounded interaction, to be done in Section~\ref{sect:boundint} below. Subject reduction is proved by
closely following the path traced by Caires and Pfenning; as a consequence, we proceed quite quickly, concentrating our attention on
the differences with their proof.

\newcommand{\pttotd}[1]{\widehat{#1}}
When proving subject reduction, one constantly work with type derivations. This is 
particularly true here, where (internal) reduction corresponds to the cut-elimination process. A linear
notation for proofs in the form of \emph{proof terms} can be easily defined, allowing for more compact 
descriptions. As an example, a proof in the form
$$
\infer[\cut]
 {\tyg{\conone_1,\conone_2}{\contwo}{\conthree_1,\conthree_2}{\rest{\cone}{(\para{\procone}{\proctwo})}}{\tcone}}
 {\tdone:\tyg{\conone_1}{\contwo}{\conthree_1}{\procone}{\cone:A} & \tdtwo:\tyg{\conone_2}{\contwo}{\conthree_2, \cone:A}{\proctwo}{\tcone}}
$$
corresponds to the proof term $\Pcut{\ptone}{\cone}{\pttwo}$, where $\ptone$ is the proof term
for $\tdone$ and $\pttwo$ is the proof term for $\tdtwo$. If $\ptone$ is a proof term corresponding
to a type derivation for the process $\procone$, we write $\pttotd{\ptone}=\procone$. From now
on, proof terms will often take the place of processes: $\tyg{\conone}{\contwo}{\conthree}{\ptone}{\tcone}$ 
stands for the existence of a type derivation $\ptone$ with conclusion
$\tyg{\conone}{\contwo}{\conthree}{\pttotd{\ptone}}{\tcone}$. The notation
$\tyg{\conone}{\contwo}{\conthree}{\ptone\rightsquigarrow\procone}{\tcone}$ stands
for the existence of a type derivation $\ptone$ such that
$\tyg{\conone}{\contwo}{\conthree}{\ptone}{\tcone}$ and $\pttotd{\ptone}=\procone$.

A proof term $\ptone$ is said to be normal if it does not contain any instances of cut rules.
In Figure~\ref{fig:typextr}  we show in detail how processes are associated with proof terms.
\begin{figure}
\begin{center}
  \begin{tabular}[t]{lll}
    $\PLone{\cone}{\ptone}$      & $\rightsquigarrow$ & $\widehat{\ptone}^{z}$\\
    $\PRone$			    & $\rightsquigarrow$ &  $0$\\
    $\PLten{\cone}{\ctwo}{\cthree}{\pttwo}$ & $\rightsquigarrow$ &    $\inc{\cone}{\ctwo}{\widehat{\pttwo}^{z}}$\\ 
    $\PRten{\ptone}{\pttwo}$  & $\rightsquigarrow$ &   $\rest{\ctwo}{\outc{\cone}{\ctwo}{(\para{\widehat{\ptone}^{y}}{\widehat{\pttwo}^{x}})}}$ \\
    $\PLlin{\cone}{\ptone}{\ctwo}{\pttwo}$ &  & $\rest{\ctwo}{\outc{\cone}{\ctwo}{(\para{\widehat{\ptone}^{y}}{\widehat{\pttwo}^{z}})}}$ \\
    $\PRlin{\cone}{\ptone}$ & $\rightsquigarrow$ &   $\inc{\cone}{\ctwo}{\widehat{\pttwo}^{x}}$  \\
    $\Pcut{\ptone}{\cone}{\pttwo}$ & $\rightsquigarrow$ &  $\rest{\cone}{(\para{\widehat{\ptone}^{x}}{\widehat{\pttwo}^{z}})}$ \\
    $\Pcutb{\ptone}{\cone}{\pttwo}$ & $\rightsquigarrow$ &   $\rest{\cone}(\para{\binc{\cone}{\ctwo}{\widehat{\ptone}^{y}}}{\widehat{\pttwo}^{z}})$\\
    $\Pcutd{\ptone}{\cone}{\pttwo}$ & $\rightsquigarrow$ &  $\rest{\cone}(\para{\binc{\cone}{\ctwo}{\widehat{\ptone}^{y}}}{\widehat{\pttwo}^{z}})$ \\
    $\Pbemb{\cone}{\ctwo}{\pttwo}$  & $\rightsquigarrow$ &  $\rest{\ctwo}{\outc{\cone}{\ctwo}{\widehat{\pttwo}^{z}}}$\\ 
    $\Pbemd{\cone}{\ctwo}{\pttwo}$  & $\rightsquigarrow$ &  $\rest{\ctwo}{\outc{\cone}{\ctwo}{\widehat{\pttwo}^{z}}}$ \\ 
    $\PRbang{\ptone}{\cone_1,\ldots,\cone_n}$  & $\rightsquigarrow$ &   $\binc{\cone}{\ctwo}{\widehat{\ptone}^{y}}$\\
    $\PLbangb{\cone}{\ptone}$  & $\rightsquigarrow$ & $\widehat{\ptone}^{z}$  \\
    $\PLbangd{\cone}{\ptone}$  & $\rightsquigarrow$ & $\widehat{\ptone}^{z}$   \\
    $\PLplus{\cone}{\ctwo}{\ptone}{\cthree}{\pttwo}$  & $\rightsquigarrow$ &  $\case{\ctwo}{\widehat{\ptone}^{x}}{\widehat{\pttwo}^{z}}$ \\
    $\PRplusone{\ptone}$ & $\rightsquigarrow$ &    $\inl{\cone}{\widehat{\ptone}^{x}}$\\
    $\PRplustwo{\ptone}$ & $\rightsquigarrow$ &    $\inr{\ctwo}{\widehat{\ptone}^{y}}$ \\
    $\PLwithone{\cone}{\ctwo}{\pttwo}$  & $\rightsquigarrow$  &  $\inl{\cone}{\widehat{\ptone}^{z}}$ \\
    $\PLwithtwo{\cone}{\ctwo}{\ptone}$  &  $\rightsquigarrow$ &  $\inr{\ctwo}{\widehat{\ptone}^{z}}$ \\
    $\PRwith{\ptone}{\pttwo}$  &  $\rightsquigarrow$  &   $\case{\cthree}{\widehat{\ptone}^{z}}{\widehat{\pttwo}^{z}}$\\
  \end{tabular}
\end{center}
\caption{Extraction of processes from proof terms. }\label{fig:typextr}
\end{figure}

Subject reduction will be proved by showing that if $\procone$ is typable by a type derivation
$\ptone$ and $\procone\reds\proctwo$, then a type derivation $\pttwo$ for $\proctwo$ exists. Actually,
$\pttwo$ can be obtained by manipulating $\ptone$ using techniques derived from cut-elimination.
Noticeably, not every cut-elimination rule is necessary to prove subject reduction. In other words,
we are in presence of a weak correspondence between proof terms and processes, and remain
far from a genuine Curry-Howard correspondence. 

Those manipulations of proof-terms which are necessary to prove subject reduction can be classified
as follows:
\begin{varitemize}
\item
  First of all, a binary relation $\cred$ on proof terms called \emph{computational reduction} can
  be defined. At the logical level, this corresponds to proper cut-elimination steps, i.e. those
  cut-elimination steps in which two rules introducing the same connective interact. At the
  process level, computational reduction correspond to internal reduction. $\cred$ is not
  symmetric. Computational reduction rules are given in Figure \ref{fig:comred}. We stress that 
  Lemma \ref{liftlemma} below is needed in order to properly define some cases of computational reduction. 
\item
  A binary relation $\cpred$ on proof terms called \emph{shift reduction}, distinct from $\cred$,
  must be introduced. At the process level, it corresponds to structural congruence.
  As $\cred$, $\cpred$ is not a symmetric relation. Shift reduction rules are given in Figure \ref{fig:shiftred}.
\item
  Finally, an equivalence relation $\cequ$ on proof terms called \emph{proof equivalence} is necessary.
  At the logical level, this corresponds to the so-called commuting conversions, while at the process
  level, the induced processes are either structurally congruent or strongly bisimilar. Equivalence 
  rules are given in Figure \ref{fig:equred}.
\end{varitemize}
\begin{figure}
  \begin{center}
    \begin{tabular}[t]{crcl}
      $(\cut/\Rten/\Lten):$ & $\Pcut{(\PRten{\ptone}{\pttwo})}{\cone}{\PLten{\cone}{\ctwo}{\cone}{\ptthree}}$  & $\cred$ &
      $\Pcut{\ptone}{\ctwo}{\Pcut{\pttwo}{\cone}{\ptthree}}$\\

      $(\cut/\Llin/\Rlin):$ &
      $\Pcut{\PRlin{\ctwo}{\ptone}}{\cone}{\PLlin{\cone}{\pttwo}{\cone}{\ptthree}}$ &  $\cred$ &
      $\Pcut{\Pcut{\pttwo}{\ctwo}{\ptone}}{\cone}{\ptthree}$ \\

      $(\cut/\Rwith/\Lwithone):$ &
      $\Pcut{\PRwith{\ptone}{\pttwo}}{\cone}{\PLwithone{\cone}{\ctwo}{\ptthree}}$ &
      $\cred$ &  
      $\Pcut{\ptone}{\cone}{\ptthree}$\\

      $(\cut/\Rwith/\Lwithtwo):$ &
      $\Pcut{\PRwith{\ptone}{\pttwo}}{\cone}{\PLwithtwo{\cone}{\ctwo}{\ptthree}}$ & 
      $\cred$ &  
      $\Pcut{\pttwo}{\cone}{\ptthree}$\\

      $(\cut/\Rplusone/\Lplus):$ &
      $\Pcut{\PRplusone{\ptone}}{\cone}{\PLplus{\cone}{\ctwo}{\pttwo}{\cthree}{\ptthree}}$ & $\cred$ & 
      $\Pcut{\ptone}{\cone}{\pttwo}$\\
     
      $(\cut/\Rplustwo/\Lplus):$ &
      $\Pcut{\PRplustwo{\ptone}}{\cone}{\PLplus{\cone}{\ctwo}{\pttwo}{\cthree}{\ptthree}}$ &
      $\cred$ &
      $\Pcut{\ptone}{\cone}{\ptthree}$\\

      $(\cutb/-/\bemb):$ & 
      $\Pcutb{\ptone}{\cone}{\Pbemb{\cone}{\ctwo}{\pttwo}}$ & 
      $\cred$ &
      $\Pcut{\lift{\ptone}}{\ctwo}{\Pcutd{\ptone}{\cone}{\lift{\pttwo}}}$\\

      $(\cutd/-/\bemd):$ &
      $\Pcutd{\ptone}{\cone}{\Pbemd{\cone}{\ctwo}{\pttwo}}$ &
      $\cred$ & 
      $\Pcut{\lift{\ptone}}{\ctwo}{\Pcutd{\ptone}{\cone}{\pttwo}}$\\
    \end{tabular}
  \end{center}
  \caption{Computational reduction rules}\label{fig:comred}
\end{figure}
\begin{figure}
  \begin{center}
    \begin{tabular}[t]{llll}
      $(\cut/\Rbang/\Lbangb):$ &
      $\Pcut{\PRbang{\ptone}{\cone_1,\ldots,\cone_n}}{\cone}{\PLbangb{\cone}{\pttwo}}$ & 
      $\cpred$ &   
      $\PLbangb{\cone_1}{\PLbangb{\cone_2}{\ldots\PLbangb{\cone_n}{\Pcutd{\ptone}{\ctwo}{\pttwo}}\ldots}}$\\

      $(\cut/\Rbang/\Lbangd):$ &
      $\Pcut{\PRbang{\ptone}{\cone_1,\ldots,\cone_n}}{\cone}{\PLbangd{\cone}{\pttwo}}$ & 
      $\cpred$ &   
      $\PLbangb{\cone_1}{\PLbangb{\cone_2}{\ldots\PLbangb{\cone_n}{\Pcutd{\ptone}{\ctwo}{\pttwo}}\ldots}}$\\ 
    \end{tabular}
  \end{center}
  \caption{Shift reduction rules}\label{fig:shiftred}
\end{figure}
\begin{figure}
  \begin{center}
    \textbf{Structural Conversions}
  \end{center}
  \begin{center}
    \begin{tabular}[t]{crcl}
      $(\cut/-/\cut_1):$ &
      $\Pcut{\ptone}{\cone}{\Pcut{\pttwo_\cone}{\ctwo}{\ptthree_\ctwo}}$ &  $\cequ$ &
      $\Pcut{\Pcut{\ptone}{\cone}{\pttwo_\cone}}{\ctwo}{\ptthree_{\ctwo}}$\\
      $(\cut/-/\cut_2):$ &
      $\Pcut{\ptone}{\cone}{\Pcut{\pttwo}{\ctwo}{\ptthree_{\cone\ctwo}}}$ &  $\cequ$ &
      $\Pcut{\pttwo}{\cone}{\Pcut{\ptone}{\ctwo}{\ptthree_{\cone\ctwo}}}$\\
      $(\cut/-/\cutb):$ &
      $\Pcut{\ptone}{\cone}{\Pcutb{\pttwo}{\ctwo}{\ptthree_{\cone\ctwo}}}$ &  $\cequ$ &
      $\Pcutb{\pttwo}{\ctwo}{\Pcut{\ptone}{\cone}{\ptthree_{\cone\ctwo}}}$\\
      $(\cut/\cutb/-):$&
      $\Pcut
      {\Pcutb{\ptone}{\ctwo}{\pttwo_{\ctwo}}}
      {\cone}{\ptthree_{\cone}}$ &
      $\cequ$ & 
      $\Pcutb
      {\ptone}{\ctwo}{\Pcut{{\pttwo_{\ctwo}}}
        {\cone}{\ptthree_{\cone}}}$\\
      $(\cut/-/\cutd):$ &
      $\Pcut{\ptone}{\cone}{\Pcutd{\pttwo}{\ctwo}{\ptthree_{\cone\ctwo}}}$ & $\cequ$ & 
      $\Pcutd{\pttwo}{\ctwo}{\Pcut{\ptone}{\cone}{\ptthree_{\cone\ctwo}}}$\\
      $(\cut/\cutd/-):$&
      $\Pcut
      {\Pcutd{\ptone}{\ctwo}{\pttwo_{\ctwo}}}
      {\cone}{\ptthree_{\cone}}$&
      $\cequ$ & 
      $\Pcutd
      {\ptone}{\ctwo}{\Pcut{{\pttwo_{\ctwo}}}
        {\cone}{\ptthree_{\cone}}}$\\
      $(\cut/\Rone/\Lone):$&
      $\Pcut
      {\PRone}
      {\cone}{\PLone{\cone}{\ptone}}$&
      $\cequ$ &
      $\ptone$
    \end{tabular}
  \end{center}
  \begin{center}
    \textbf{Strong Bisimilarities}
  \end{center}
  \begin{center}
    \begin{tabular}[t]{crcl}
      $(\cutd/-/\cut):$ &
      $\Pcutd
      {\ptone}{\cone}{\Pcut{{\pttwo_{\cone}}}
        {\ctwo}{\ptthree_{\cone \ctwo}}}$
      & $\cequ$ & 
      $\Pcut
      {\Pcutd{\ptone}{\cone}{\pttwo_{\cone}}}
      {\ctwo}
      {\Pcutd{\ptone}{\cone}{\ptthree_{\cone \ctwo}}}$\\
      $(\cutd/- /\cutd):$&
      $\Pcutd
      {\ptone}{\cone}{\Pcutd{{\pttwo_{\cone}}}
        {\ctwo}{\ptthree_{\cone \ctwo}}}$
      & $\cequ$ & $\Pcutd
      {\ptone}
      {\cone}
      {\Pcutd{\pttwo_{\cone}}{\ctwo}
        {\Pcutd{\ptone}{\cone}{\ptthree_{\cone \ctwo}}}}$\\
      $(\cutd/- /\cutb):$ &
      $\Pcutd
      {\ptone}{\cone}{\Pcutb{{\pttwo_{\cone}}}
        {\ctwo}{\ptthree_{\cone \ctwo}}}$
      & $\cequ$ & $\Pcutb
      {\pttwo_{\cone}}
      {\ctwo}
      {\Pcutd{\ptone}{\cone}{\ptthree_{\cone \ctwo}}}
      $\\
      $(\cutb/-/\cut_1):$ &
      $\Pcutb{\ptone}{\cone}{\Pcut{\pttwo_\cone}{\ctwo}{\ptthree_\ctwo}}$ &
      $\cequ$ &
      $\Pcut{\Pcutb{\ptone}{\cone}{\pttwo_\cone}}{\ctwo}{\ptthree_{\ctwo}}$ \\
      $(\cutb/-/\cut_2):$&
      $\Pcutb{\ptone}{\cone}{\Pcut{\pttwo}{\ctwo}{\ptthree_{\cone\ctwo}}}$ &$\cequ$ & 
      $\Pcut{\pttwo}{\ctwo}{\Pcutb{\ptone}{\cone}{\ptthree_{\cone\ctwo}}}$ \\
      $(\cutb/-/\cutb)_1:$&
      $\Pcutb{\ptone}{\cone}{\Pcutb{\pttwo_\cone}{\ctwo}{\ptthree_\ctwo}}$ & $\cequ$&
      $\Pcutb{\Pcutb{\ptone}{\cone}{\pttwo_\cone}}{\ctwo}{\ptthree_{\ctwo}}$\\
      $(\cutb/-/\cutb)_2:$&
      $\Pcutb{\ptone}{\cone}{\Pcutb{\pttwo}{\ctwo}{\ptthree_{\cone\ctwo}}}$ &$\cequ$&
      $\Pcutb{\pttwo}{\cone}{\Pcutb{\ptone}{\ctwo}{\ptthree_{\cone\ctwo}}}$\\
      $(\cutb/- /\cutd):$&
      $\Pcutb
      {\ptone}{\cone}{\Pcutd{{\pttwo_{\cone}}}
        {\ctwo}{\ptthree_{\cone \ctwo}}}$
      & $\cequ$ &
      $\Pcutd
      {\pttwo_{\cone}}
      {\ctwo}
      {\Pcutb{\ptone}{\cone}{\ptthree_{\cone \ctwo}}}$\\
      $(\cutd/- /\cutd)_0:$&
      $\Pcutd
      {\ptone}{\cone}{\Pcutd{{\pttwo_{\cone}}}
        {\ctwo}{\ptthree_{\cone \ctwo}}}$
      & $\cequ$ &
      $\Pcutd
      {\pttwo_{\cone}}
      {\ctwo}
      {\Pcutd{\ptone}{\cone}{\ptthree_{\cone \ctwo}}}$ (if $\ctwo \notin FV(\widehat{\ptthree})$)\\
      $(\cutd/-/-_{0}):$ &
      $\Pcutd{\ptone}{\cone}{\pttwo}$ & $\cequ$ &
      $\pttwo$  (if $x \notin FN(\widehat{\pttwo})$)
    \end{tabular}  
  \end{center}
  \begin{center}
    \textbf{Commuting Conversions}
  \end{center}
  \begin{center}
    \begin{tabular}[t]{crcl}
      $(\cut/-/\Lone):$&
      $\Pcut{\ptone}{\cone}{\PLone{\ctwo}{\pttwo_{\cone}}}$ & $\cequ$  & $\PLone{\ctwo}{\Pcut{\ptone}{\cone}{\pttwo_{\cone}}} $\\
      $(\cut/-/\Lbangb):$&
      $\Pcut{\ptone}{\cone}{\PLbangb{\ctwo}{\pttwo_{\cone \cthree}}}$ & $\cequ$  & $\PLbangb{\ctwo}{\Pcut{\ptone}{\cone}{\pttwo_{\cone \cthree}}} $\\
      $(\cut/-/\Lbangd):$&
      $\Pcut{\ptone}{\cone}{\PLbangd{\ctwo}{\pttwo_{\cone \cthree}}}$ & $\cequ$  & $\PLbangd{\ctwo}{\Pcut{\ptone}{\cone}{\pttwo_{\cone \cthree}}} $\\
      $(\cut/\Lone/-):$&
      $\Pcut{\PLone{\ctwo}{\ptone}}{\cone}{\pttwo_{\cone}}$ & $\cequ$  & $\PLone{\ctwo}{\Pcut{\ptone}{\cone}{\pttwo_{\cone}}} $\\
      $(\cut/\Lbangb/-):$&
      $\Pcut{\PLbangb{\ctwo}{\ptone_{\cthree}}}{\cone}{\pttwo_{\cone}}$ & $\cequ$  & $\PLbangb{\ctwo}{\Pcut{\ptone_{\cthree}}{\cone}{\pttwo_{\cone \cthree}}} $\\
      $(\cut/\Lbangd/-):$&
      $\Pcut{\PLbangd{\ctwo}{\ptone_{\cthree}}}{\cone}{\pttwo_{\cone}}$ & $\cequ$  & $\PLbangd{\ctwo}{\Pcut{\ptone_{\cthree}}{\cone}{\pttwo_{\cone \cthree}}} $\\
      $(\cutb/-/\Lone):$&
      $\Pcutb{\ptone}{\cone}{\PLone{\ctwo}{\pttwo_{\cone}}}$ & $\cequ$  & $\PLone{\ctwo}{\Pcutb{\ptone}{\cone}{\pttwo_{\cone}}} $\\
      $(\cutb/-/\Lbangb):$&
      $\Pcutb{\ptone}{\cone}{\PLbangb{\ctwo}{\pttwo_{\cone \cthree}}}$ & $\cequ$  & $\PLbangb{\ctwo}{\Pcutb{\ptone}{\cone}{\pttwo_{\cone \cthree}}} $\\
      $(\cutb/-/\Lbangd):$&
      $\Pcutb{\ptone}{\cone}{\PLbangd{\ctwo}{\pttwo_{\cone \cthree}}}$ & $\cequ$  & $\PLbangd{\ctwo}{\Pcutb{\ptone}{\cone}{\pttwo_{\cone \cthree}}} $\\
      $(\cutd/-/\Lone):$&
      $\Pcutd{\ptone}{\cone}{\PLone{\ctwo}{\pttwo_{\cone}}}$ & $\cequ$  & $\PLone{\ctwo}{\Pcutd{\ptone}{\cone}{\pttwo_{\cone}}} $\\
      $(\cutd/-/\Lbangb):$&
      $\Pcutd{\ptone}{\cone}{\PLbangb{\ctwo}{\pttwo_{\cone \cthree}}}$ & $\cequ$  & $\PLbangb{\ctwo}{\Pcutd{\ptone}{\cone}{\pttwo_{\cone \cthree}}} $\\
      $(\cutd/-/\Lbangd):$&
      $\Pcutd{\ptone}{\cone}{\PLbangd{\ctwo}{\pttwo_{\cone \cthree}}}$ & $\cequ$  & $\PLbangd{\ctwo}{\Pcutd{\ptone}{\cone}{\pttwo_{\cone \cthree}}} $
    \end{tabular}
  \end{center}\caption{Equivalence rules}\label{fig:equred}
\end{figure}
The reflexive and transitive closure of $\cpred\cup\cequ$ is denoted with $\cpredequ$,
i.e. $\cpredequ=(\cpred\cup\cequ)^*$. To help the reader understand the rules defining $\cred$, $\cpred$ and $\cequ$, let us give some relevant examples:
\begin{varitemize}
\item
  Let us consider the proof term $\ptone = \Pcut{(\PRten{\ptthree}{\ptfour})}{\cone}{\PLten{\cone}{\ctwo}{\cone}{\ptfive}}$ 
  which corresponds to the $\tens$-case of cut elimination. By a computational reduction rule, 
  $\ptone \cred \pttwo =\Pcut{\ptthree}{\ctwo}{\Pcut{\ptfour}{\cone}{\ptfive}}$. 
  From the process side, 
  $\widehat{\ptone} = \rest{\cone}{((\para{\rest{\ctwo}{\outc{\cone}{\ctwo}{(\para{\widehat{\ptthree}}
          {\widehat{\ptfour}}))}}}{\inc{\cone}{\ctwo}{\widehat{\ptfive}}})}$ and  
  $\widehat{\pttwo}= \rest{\cone}\rest{\ctwo}{(\para{(\para{\widehat{\ptthree}}{\widehat{\ptfour}})}}{\widehat{\ptfive})}$, 
  where $\widehat{\pttwo}$  is the process obtained from $\widehat{\ptone}$ by internal passing the 
  channel $\ctwo$ through the channel $\cone$.
\item
  Let $\ptone=\Pcut{\PRbang{\ptthree}{\cone_1,\ldots,\cone_n}}{\cone}{\PLbangb{\cone}{\ptfour}}$ be 
  the proof obtained by composing a proof $\ptthree$ (whose last rule is $\Rbang$) with a proof $\ptfour$ 
  (whose last rule is $\Lbangb$) through a $\cut$ rule. A shift reduction rule tells us that  
  $\ptone \cpred \pttwo=\PLbangb{\cone_1}{\PLbangb{\cone_2}{\ldots\PLbangb{\cone_n}{\Pcutb{\ptthree}
        {\ctwo}{\ptfour}}\ldots}}$, which corresponds to the opening of a box in
  $\sll$. The shift reduction does not have a corresponding reduction step at process level, since 
  $\widehat{\ptone} \cequ \widehat{\pttwo}$; nevertheless, it is defined as an asymmetric relation, 
  for technical reasons connected to the proof of bounded interaction.
\item
  Let $\ptone=\Pcutd{\ptthree}{\cone}{\Pcut{{\ptfour}}{\ctwo}{\ptfive}}$. A defining rule 
  for proof equivalence $\cequ$, states that in $\ptone$ the $\cutd$ rule can be permuted 
  over the $\cut$ rule, by duplicating $\ptthree$; namely 
  $\ptone \cequ \pttwo=\Pcut{\Pcutd{\ptthree}{\cone}{\ptfour}}{\ctwo}{\Pcutd{\ptthree}{\cone}{\ptfive}}$. 
  This is possible because the channel $\cone$ belongs to the multiplexor contexts of both $\ptfour,\ptfive$, 
  such  contexts being treated additively. At the process level, 
  $\widehat{\ptone}=\rest{\cone}{(\para{\para{(\binc{\cone}{\ctwo}{\widehat{\ptthree}})}
      {\rest{\ctwo}(\widehat{\ptfour}}}}{\widehat{\ptfive}))}$ , while 
  $\widehat{\pttwo}=\rest{\ctwo}{((\para{\rest{\cone}{\para{(\binc{\cone}{\ctwo}{\widehat{\ptthree}})}
        {\widehat{\ptfour}}}))}{(\rest{\cone}{\para{(\binc{\cone}{\ctwo}{\widehat{\ptthree}})}{\widehat{\ptfive})))}}}}$, 
  $\widehat{\ptone}$ and $\widehat{\pttwo}$ being strongly bisimilar.
\end{varitemize}
The rest of this section is devoted to proving the following result:
\begin{theorem}[Subject Reduction]\label{theo:subjred}
Let $\tyg{\conone}{\contwo}{\conthree}{\ptone}{\tcone}$. 
Suppose that $\widehat{\ptone}=\procone\reds\proctwo$. Then there
is $\pttwo$ such that $\widehat{\pttwo}=\proctwo$, 
$\ptone\cpredequ\cred\cpredequ\pttwo$ and
$\tyg{\confour}{\confive}{\conthree}{\pttwo}{\tcone}$, where 
$\conone,\contwo=\confour,\confive$.
\end{theorem}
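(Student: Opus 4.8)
The plan is to follow the route of Caires and Pfenning, adapting it to the two exponential contexts and the soft discipline, and to proceed by induction on the derivation of the reduction $\procone\reds\proctwo$. This relation is generated by a few base communication and choice rules, closed under the evaluation contexts (restriction, parallel composition and the branches of a $\casem$), and closed under structural congruence $\scon$ on both sides. The first step is to dispatch the closure under $\scon$ once and for all, by a preliminary lemma stating that whenever $\widehat{\ptone}\scon\procone'$ there is a proof term $\ptone'$ with $\ptone\cpredequ\ptone'$, $\widehat{\ptone'}\scon\procone'$, and a conclusion differing from that of $\ptone$ only in the partition of the exponential context, i.e. $\conone,\contwo=\confour,\confive$. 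This is exactly what the structural conversions and strong bisimilarities of Figure~\ref{fig:equred}, together with the shift rules of Figure~\ref{fig:shiftred}, are meant to realise at the level of proof terms. Given this lemma, the two $\cpredequ$ wrappers in the statement absorb all the structural reshuffling, and it suffices to treat base reductions.

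The core is the base case, where $\procone$ exhibits a top-level redex: two parallel subprocesses communicating on a restricted channel $\cone$, such as $\rest{\cone}{(\para{\outc{\cone}{\ctwo}{\procthree}}{\inc{\cone}{\cfour}{\procfour}})}$ in the multiplicative/linear case, or the matching $\inlm$/$\casem$ pair in the additive case. In $\ptone$ the restriction on $\cone$ is introduced by a $\cut$ (or, for exponential channels, by $\cutb$ or $\cutd$), while the two communicating prefixes come from the rules introducing the principal connective of the type of $\cone$. The subtlety is that these introduction rules need not occur immediately above the cut. I would therefore use proof equivalence $\cequ$ — the structural conversions $(\cut/-/\cut_1)$, $(\cut/-/\cut_2)$, $(\cut/\cutb/-)$, $(\cut/\cutd/-)$ to reassociate nested cuts, and the commuting conversions $(\cut/-/\Lone)$, $(\cut/-/\Lbangb)$, $(\cut/-/\Lbangd)$ together with their symmetric variants to slide left rules past the cut — so as to transform $\ptone$ into a proof term $\ptone'$ whose redex is a \emph{principal} cut, i.e. a cut whose two premises are precisely the matching introduction rules. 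One computational step $\cred$ from Figure~\ref{fig:comred} then fires this cut; reading off the process column of Figure~\ref{fig:typextr} on both sides of the fired rule shows that its extraction is the contractum $\proctwo$ (up to the cleanup absorbed by the trailing $\cpredequ$), and the conclusion is preserved because each $\cred$ rule is type-preserving with the stated contexts.

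The delicate cases, and the ones I expect to be the main obstacle, are the exponential ones, where a client spawns a session from a replicated server, as in $\para{\binc{\cone}{\cfour}{\procthree}}{\rest{\ctwo}{\outc{\cone}{\ctwo}{\procfour}}}$: the server is introduced by $\Rbang$, consumed through $\cutb$ or $\cutd$, and the spawning prefix comes from $\bemb$ or $\bemd$. Two phenomena specific to the soft setting arise. First, the logical act of opening the box — the shift reductions $(\cut/\Rbang/\Lbangb)$ and $(\cut/\Rbang/\Lbangd)$ of Figure~\ref{fig:shiftred} — has no counterpart at the process level, since there $\widehat{\ptone}\cequ\widehat{\pttwo}$; it must therefore be carried out inside the leading $\cpredequ$, so that a genuine $(\cutb/-/\bemb)$ or $(\cutd/-/\bemd)$ computational step becomes available. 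Second, that spawning step replaces a $\cutb$ by a $\cut$ over a $\cutd$ and \emph{lifts} the server subderivations, which is where Lemma~\ref{liftlemma} is invoked to ensure that $\lift{\ptone}$ and $\lift{\pttwo}$ remain well-typed; it is precisely this soft multiplexing move, which migrates assumptions between the auxiliary and the multiplexor context, that forces the conclusion to be read with the redistributed split $\confour,\confive$ rather than $\conone,\contwo$. The hard part will be this exponential bookkeeping: checking that after interleaving box openings, bisimilarity rearrangements and the single computational step, the underlying multiset $\conone,\contwo$ of exponential assumptions is preserved while only its partition changes, and that — despite the merely weak, non-bijective correspondence between proof terms and processes — the net extraction equals $\proctwo$ and not just a bisimilar process. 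The remaining contextual cases are routine: a reduction under a restriction, a parallel composition, or inside a $\casem$ branch is handled by applying the induction hypothesis to the relevant subderivation and reassembling with the same outer rule, the $\cpredequ$ steps of the subderivation lifting to $\cpredequ$ steps of the whole.
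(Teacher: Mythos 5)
Your raw materials are the right ones (the three relations of Figures~\ref{fig:comred}--\ref{fig:equred}, Lemma~\ref{liftlemma}, the redistribution $\confour,\confive$ of the exponential contexts, the special role of the shift reductions for spawning), and your reading of the exponential cases matches the paper's. But there is a genuine gap at the very first step: the preliminary ``structural congruence absorption'' lemma cannot do the job you assign to it. As you state it --- given $\widehat{\ptone}\scon\procone'$, find $\ptone'$ with $\ptone\cpredequ\ptone'$ and $\widehat{\ptone'}\scon\procone'$ --- it is satisfied by taking $\ptone'=\ptone$, so it gives you nothing; what your base-case analysis actually needs is the strong form in which $\widehat{\ptone'}$ syntactically matches $\procone'$, and that form is false. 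Intuitionistic cuts are asymmetric: $\widehat{\Pcut{\ptone}{\cone}{\pttwo}}=\rest{\cone}{(\para{\widehat{\ptone}}{\widehat{\pttwo}})}$ always lists the provider first, whereas $\scon$ may swap the two parallel components, and no rule of Figure~\ref{fig:equred} (nor, in general, any proof term at all) has extraction $\rest{\cone}{(\para{\widehat{\pttwo}}{\widehat{\ptone}})}$. So the announced reduction of the theorem to ``base reductions'' does not go through as described.

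The paper sidesteps this by a different decomposition, and that decomposition is also what discharges your second step, which you assert rather than prove. The paper inducts on the structure of $\ptone$ (not on the derivation of $\procone\reds\proctwo$), observes that the last rule must be $\Lone$, $\Lbangb$, $\Lbangd$ or one of the three cuts, and then splits the internal reduction, via the \emph{labelled} semantics, into either a reduction of a single premise (induction hypothesis) or a pair of dual visible actions performed by the two premises of a cut. The per-connective communication lemmas (Lemma~\ref{lem:visible} together with Lemmas~\ref{lemma:tens}--\ref{lemma:cutd} and Corollaries~\ref{cor:cutb}, \ref{cor:cutd}), proved by simultaneous induction on the two premise derivations, are exactly the content of your claim that ``proof equivalence can expose a principal cut, after which one $\cred$ step fires'': the dual prefixes may sit arbitrarily deep under nested cuts and $\Lone$/$\Lbangb$/$\Lbangd$ rules, and showing that the available conversions always suffice is precisely what those inductions establish. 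Your proposal never invokes the labelled transition system, and without it (or an equivalent inversion-up-to-$\scon$ argument) both the congruence case and the base case remain open. A further minor slip: reduction is not closed under the branches of a $\casem$ (they are guarded), so that contextual case should not appear in your induction.
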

The structure of the proof of Theorem \ref{theo:subjred} is divided into three steps, 
each of them consisting in one or more auxiliary results:
\begin{varenumerate}
\item 
  First, given a process $\procone$ and a typing derivation $\ptone$ of $\procone$, 
  we establish a connection between typing and labelled semantics, showing that the visible 
  actions of $\procone$ behave according to the types assigned to the channels in $\procone$ 
  by  $\ptone$ (Lemma \ref{lem:visible}).
\item 
  Second, we take  two processes $\procone$ and $\proctwo$  communicating with each other 
  on the same channel $\cone$, and the corresponding typing derivations $\ptone, \pttwo$, 
  respectively. For all possible type assignement of $\cone$,  we show that by composing 
  $\ptone$ and $\pttwo$ with a cut rule and performing some proof manipulation we can obtain a 
  proof $\ptthree$ such that $\ptthree$ is a typing derivation for the process  $\procthree$ 
  obtained by performing the communication  of $\procone$ and $\proctwo$ (lemmas 
  \ref{lemma:tens}, \ref{lemma:lin}, \ref{lemma:bang}, \ref{lemma:with}, \ref{lemma:plus}, 
  \ref{lemma:cutbang}, \ref{lemma:cutd}). 
\item 
  Finally, we show that  if a process $\procone$ is typable by a type derivation
$\ptone$ and $\procone\reds\proctwo$, then a type derivation $\pttwo$ for $\proctwo$ exists. 
  This is done by showing that the internal reduction which brings from  $\procone$ to 
  $\proctwo$ is a consequence of the communication of two subprocesses of $\procone$. 
  This communication  can  only happen in presence of a cut on the corresponding proof terms, 
  so we conclude using the previous lemmas. 
\end{varenumerate}
The following propositions state the correspondences between the proof terms manipulation 
rules described above and relations over processes: we omit the 
proofs, leaving to the reader the verification of each case. 
\begin{proposition}
 Let $\tyg{\conone}{\contwo}{\conthree}{\ptone}{\tcone}$ and $\tyg{\Phi}{\Psi}{\Sigma}{\pttwo}{S}$. If $\ptone \cred \pttwo$, then 
$\widehat{\ptone} \reds \widehat{\pttwo}$.
\end{proposition}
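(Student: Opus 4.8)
The plan is to reason by a case analysis on which of the eight rules of Figure~\ref{fig:comred} was used to derive $\ptone\cred\pttwo$. In each case I would first read off the two processes $\widehat{\ptone}$ and $\widehat{\pttwo}$ by unfolding the extraction map of Figure~\ref{fig:typextr} on the redex and on its contractum, and then exhibit a single internal reduction $\widehat{\ptone}\reds\widehat{\pttwo}$, working modulo the structural congruence $\scon$ so that restrictions and parallel compositions may be freely rearranged (recall that $\reds$ is closed under $\scon$). The two typing hypotheses are used throughout to guarantee that the side conditions built into the rules hold --- in particular that in the exponential cases the body of a replicated input is typed with empty linear and multiplexor contexts --- and to ensure that the lifted derivations occurring on the right-hand sides are well defined.

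The common pattern is that the left-hand side of every rule is a $\cut$, $\cutb$ or $\cutd$ whose two premises introduce dual connectives on the cut channel $\cone$. Hence $\widehat{\ptone}$ is always, up to $\scon$, a parallel composition under a restriction $\rest{\cone}{(\cdots)}$ on $\cone$, in which one component offers an output or a selection prefix on $\cone$ and the other offers the matching input or $\casem$ prefix on $\cone$, so that exactly one communication on $\cone$ fires the redex. For the six multiplicative and additive rules this is immediate. In $(\cut/\Rten/\Lten)$ and $(\cut/\Llin/\Rlin)$ the prefix on $\cone$ is a bound output $\rest{\ctwo}{\outc{\cone}{\ctwo}{\cdots}}$ meeting an input $\inc{\cone}{\ctwo}{\cdots}$; one step passes the fresh channel $\ctwo$, and scope extrusion rearranges the residual restrictions into the nested cuts appearing on the right-hand side. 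In the four $\plus$/$\with$ rules the prefix is a label $\inl{\cone}{\cdot}$ or $\inr{\cone}{\cdot}$ meeting a $\case{\cone}{\cdot}{\cdot}$; the communication selects the matching branch and discards the other, which is exactly the single cut appearing in the contractum.

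The exponential rules $(\cutb/-/\bemb)$ and $(\cutd/-/\bemd)$ are the main obstacle, and they are precisely where Lemma~\ref{liftlemma} enters. Here $\widehat{\ptone}$ has the shape $\rest{\cone}{(\para{\binc{\cone}{\cfour}{\widehat{\ptone}}}{\rest{\ctwo}{\outc{\cone}{\ctwo}{\widehat{\pttwo}}}})}$: a replicated input on $\cone$ facing a bound output on $\cone$. The replicated-input reduction rule keeps the server $\binc{\cone}{\cfour}{\widehat{\ptone}}$ in place and spawns a single fresh copy of its body, instantiated along the communicated channel $\ctwo$. To see that the result coincides with $\widehat{\pttwo}$, I would appeal to Lemma~\ref{liftlemma}, whose effect at the process level is that lifting leaves the extracted process unchanged up to renaming of bound names, so that $\widehat{\lift{\ptone}}$ coincides with a suitably renamed copy of $\widehat{\ptone}$ (and likewise for $\widehat{\lift{\pttwo}}$ in the $\bemb$ case). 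The spawned instance then realises the outer $\cut$ on $\ctwo$ against $\widehat{\lift{\ptone}}$, while the persisting server realises the inner $\cutd$ on $\cone$, and a final rearrangement by $\scon$ of the restrictions on $\cone$ and $\ctwo$ matches the right-hand sides $\Pcut{\lift{\ptone}}{\ctwo}{\Pcutd{\ptone}{\cone}{\lift{\pttwo}}}$ and $\Pcut{\lift{\ptone}}{\ctwo}{\Pcutd{\ptone}{\cone}{\pttwo}}$ respectively. The delicate points I expect are keeping bound names fresh across the duplication and verifying that the persistence of the replicated server corresponds exactly to the retention of the $\cutd$-cut on $\cone$ in the contractum.
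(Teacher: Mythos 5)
Your proposal is correct, and it is in fact the very argument the paper asks for: the paper omits this proof altogether (``we omit the proofs, leaving to the reader the verification of each case''), and your case-by-case analysis of the eight rules of Figure~\ref{fig:comred} through the extraction map of Figure~\ref{fig:typextr}, working modulo $\scon$ and invoking Lemma~\ref{liftlemma} exactly where the paper says it is needed (the two exponential cases), is precisely that verification. Your reading of the replicated-input cases --- the spawned copy realising the outer $\cut$ on $\ctwo$ while the persisting server realises the inner $\cutd$ on $\cone$, with typing guaranteeing the freshness/scope-extrusion side conditions --- also matches the paper's own informal discussion of these reduction rules.
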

\begin{proposition}
 Let $\tyg{\conone}{\contwo}{\conthree}{\ptone}{\tcone}$ and $\tyg{\Phi}{\Psi}{\Sigma}{\pttwo}{S}$. If $\ptone \cpred \pttwo$, then 
$\widehat{\ptone}$ is equivalent to $\widehat{\pttwo}$ modulo structural congruence.
\end{proposition}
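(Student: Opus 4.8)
The plan is to prove the statement by induction on the derivation of $\ptone\cpred\pttwo$, that is, on the proof-term context inside which one of the two rules of Figure~\ref{fig:shiftred} is fired. The two typability hypotheses are used only to guarantee that the fired redex really has one of those two shapes (the cut channel matching the $\Rbang$ subject and the principal formula of $\Lbangb$/$\Lbangd$). The crucial observation is that structural congruence is, by definition, the \emph{least congruence} on processes satisfying the seven axioms, hence is closed under every process constructor. Consequently, in each contextual (inductive) case it will suffice to push the extraction $\widehat{\cdot}$ through the surrounding proof-term context, read off the corresponding process context, and then combine the induction hypothesis with congruence of $\scon$. A convenient point is that the left rules $\Lbangb$ and $\Lbangd$ are \emph{silent}: by Figure~\ref{fig:typextr} one has $\widehat{\PLbangb{\cone}{\ptone}}=\widehat{\PLbangd{\cone}{\ptone}}=\widehat{\ptone}$, so the contexts $\PLbangb{\cone}{[\cdot]}$ and $\PLbangd{\cone}{[\cdot]}$ extract to the identity process context and those inductive cases are immediate.

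It thus remains to settle the two base cases, where the redex sits at the root; here I would compute the extracted processes directly. Take $(\cut/\Rbang/\Lbangb)$, so that $\ptone=\Pcut{\PRbang{\ptthree}{\cone_1,\ldots,\cone_n}}{\cone}{\PLbangb{\cone}{\ptfour}}$. Extraction gives $\widehat{\PRbang{\ptthree}{\cone_1,\ldots,\cone_n}}=\binc{\cone}{\ctwo}{\widehat{\ptthree}}$ and $\widehat{\PLbangb{\cone}{\ptfour}}=\widehat{\ptfour}$, whence
\[
\widehat{\ptone}=\rest{\cone}{(\para{\binc{\cone}{\ctwo}{\widehat{\ptthree}}}{\widehat{\ptfour}})}.
\]
On the reductum $\pttwo=\PLbangb{\cone_1}{\ldots\PLbangb{\cone_n}{\Pcutd{\ptthree}{\cone}{\ptfour}}\ldots}$ the whole leading block of $\Lbangb$ applications vanishes under extraction, so $\widehat{\pttwo}=\widehat{\Pcutd{\ptthree}{\cone}{\ptfour}}$. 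Since both exponential cuts extract identically, $\widehat{\Pcutd{\ptthree}{\cone}{\ptfour}}=\widehat{\Pcutb{\ptthree}{\cone}{\ptfour}}=\rest{\cone}{(\para{\binc{\cone}{\ctwo}{\widehat{\ptthree}}}{\widehat{\ptfour}})}$, the processes $\widehat{\ptone}$ and $\widehat{\pttwo}$ coincide up to $\alpha$-renaming of the restricted channel $\cone$; in particular $\widehat{\ptone}\scon\widehat{\pttwo}$. The case $(\cut/\Rbang/\Lbangd)$ is handled identically, using that $\Lbangd$ is silent as well and that its reductum is again built from an exponential cut on the same channel.

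I expect no genuine difficulty here: the argument reduces to a symbol-pushing verification on the two rules of Figure~\ref{fig:shiftred} together with the congruence property of $\scon$. The only thing to watch is the bookkeeping around the silent left rules: because they disappear under extraction, the block $\PLbangb{\cone_1}{\ldots}$ produced by shift reduction has no process-level counterpart, which is exactly why $\cpred$ induces no reduction step but only equality up to $\scon$ (in contrast with $\cred$, which by the previous proposition induces a genuine $\reds$). This also makes transparent the earlier remark that $\cpred$ is kept asymmetric solely for reasons internal to the bounded-interaction development, without any operational meaning on processes.
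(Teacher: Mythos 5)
Your proposal is correct and takes essentially the same approach as the paper: the paper explicitly omits this proof, ``leaving to the reader the verification of each case,'' and your argument is precisely that verification. Computing the extractions of the two rules of Figure~\ref{fig:shiftred} (where the silent $\Lbangb$/$\Lbangd$ wrappers and the identical extractions of $\cutb$ and $\cutd$ make $\widehat{\ptone}$ and $\widehat{\pttwo}$ coincide up to $\alpha$-renaming), and then closing under proof-term contexts via the fact that $\scon$ is a congruence, is exactly the intended case analysis.
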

\begin{proposition}
 Let $\tyg{\conone}{\contwo}{\conthree}{\ptone}{\tcone}$ and $\tyg{\Phi}{\Psi}{\Sigma}{\pttwo}{S}$. If $\ptone \cequ \pttwo$, then 
$\widehat{\ptone}$ is equivalent to $\widehat{\pttwo}$ modulo structural congruence or strong bisimilarity.
\end{proposition}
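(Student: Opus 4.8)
The plan is to proceed by induction on the derivation of $\ptone\cequ\pttwo$. Recall that $\cequ$ is the least equivalence relation on proof terms that is closed under all proof-term contexts and contains every instance of the axiom schemes of Figure~\ref{fig:equred}. Since structural congruence is contained in strong bisimilarity, which I write $\approx$ and take in its congruence-closed form (so that it is preserved by every process constructor), it suffices to prove the single statement $\widehat{\ptone}\approx\widehat{\pttwo}$; the cases in which the two extracted processes are actually $\scon$ are then absorbed as the stronger sub-case. With this reformulation the closure steps of the induction are immediate: reflexivity, symmetry and transitivity hold because $\approx$ is an equivalence, and the contextual cases hold because the extraction map $\widehat{\cdot}$ of Figure~\ref{fig:typextr} is compositional -- plugging a proof term into a proof-term context yields the corresponding process inside the corresponding process context -- so that $\widehat{\ptone}\approx\widehat{\pttwo}$ propagates through $\para{}{}$, $\rest{}{}$, the prefixes and $\casem$, all of which preserve $\approx$.

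It then remains to verify one base case per axiom scheme. For the \textbf{Structural Conversions} and the \textbf{Commuting Conversions} I expect $\widehat{\ptone}\scon\widehat{\pttwo}$, so that only the seven axioms of structural congruence are needed. Concretely I would unfold both sides through Figure~\ref{fig:typextr} and reassociate: the cut-permutation rules $(\cut/-/\cut_1)$, $(\cut/-/\cut_2)$, $(\cut/-/\cutb)$, $(\cut/-/\cutd)$ and their symmetric variants become instances of commutativity and associativity of $\para{}{}$ together with the exchange axiom $\rest{\cone}{\rest{\ctwo}{\procone}}\scon\rest{\ctwo}{\rest{\cone}{\procone}}$ and the scope-extrusion axiom $\para{(\rest{\cone}{\procone})}{\proctwo}\scon\rest{\cone}{(\para{\procone}{\proctwo})}$ (whose side condition $\cone\notin\fn{\proctwo}$ is guaranteed by the typing, as the cut name is not shared with the premise into which the scope is extruded); the rule $(\cut/\Rone/\Lone)$ reduces to $\rest{\cone}{(\para{\emproc}{\procone})}\scon\procone$, obtained from $\para{\procone}{\emproc}\scon\procone$ and $\rest{\cone}{\emproc}\scon\emproc$; and the commuting conversions $(\cut/-/\Lone)$, $(\cut/-/\Lbangb)$, $(\cut/\Lbangd/-)$ and the rest are again pure scope-extrusion and reassociation, since the left rule being permuted does not act on the cut channel. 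Each is a routine syntactic check.

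The genuine difficulty lies in the \textbf{Strong Bisimilarities}, where structural congruence does not suffice. Every such rule permutes a $\cutd$ (or $\cutb$) past another cut, and since the multiplexor context is treated additively this forces the \emph{duplication} of the replicated server $\binc{\cone}{\ctwo}{\widehat{\ptone}}$: e.g.\ $(\cutd/-/\cut)$ yields on the left a single server shared between two clients and on the right two independent copies, which are not $\scon$. The key lemma I would isolate and prove once is the server-distribution law
$$
\rest{\cone}{\para{\binc{\cone}{\ctwo}{\procone}}{(\para{\proctwo}{\procthree})}}
\;\approx\;
\para{\rest{\cone}{\para{\binc{\cone}{\ctwo}{\procone}}{\proctwo}}}{\rest{\cone}{\para{\binc{\cone}{\ctwo}{\procone}}{\procthree}}},
$$
valid whenever $\cone\notin\fn{\procone}$ -- which is exactly what the typing of the left premise of $\cutd$ and $\cutb$ (empty linear and multiplexor contexts, server channel freshly bound) guarantees, so the server body cannot call the server itself and there is no recursive interaction to track. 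Each strong-bisimilarity axiom, namely $(\cutd/-/\cut)$, $(\cutd/-/\cutd)$, $(\cutd/-/\cutb)$, their $\cutb$-counterparts, and the collection rule $(\cutd/-/-_0)$ asserting $\rest{\cone}{(\para{\binc{\cone}{\ctwo}{\procone}}{\proctwo})}\approx\proctwo$ when $\cone\notin\fn{\proctwo}$ (a replicated input with no matching output is inert, hence discardable), is then an instance or an immediate consequence of this law.

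The hard part will be the bisimulation proof of the server-distribution law itself. I would exhibit an explicit relation $\mathcal{R}$ and check it is a strong bisimulation, the delicate point being that a single transition on one side must be matched on the other: an output of a client on $\cone$ spawns a fresh instance of $\procone$ in both configurations, but on the left all spawned instances sit in one restricted scope whereas on the right each client accumulates its own. Because the server is replicated (hence persistent, so each communication leaves $\binc{\cone}{\ctwo}{\procone}$ intact) and $\cone\notin\fn{\procone}$, the spawned instances never interact with one another nor with the servers and can be freely regrouped. I would therefore take $\mathcal{R}$ to be a bisimulation up to structural congruence, relating each left-hand state to the right-hand state obtained by redistributing the already-spawned copies of $\procone$ to the side whose client triggered them; invariance of $\mathcal{R}$ under the labelled transitions then follows by a case analysis on the available actions (communication with the server, and the visible actions of the clients), the freshness $\cone\notin\fn{\procone}$ ensuring that the copies are pairwise non-interfering.
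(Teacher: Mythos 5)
The paper gives no proof of this proposition at all: it explicitly omits the verification, ``leaving to the reader the verification of each case'' of Figure~\ref{fig:equred}. Your overall architecture --- induction on the derivation of $\cequ$, contextual closure via compositionality of the extraction map $\widehat{\cdot}$, plain structural congruence for the structural and commuting conversions, and a replicated-server distribution law for the strong-bisimilarity block --- is exactly the kind of case-by-case verification the paper intends, and is in fact more detailed than anything in the paper. Two points of imprecision are worth noting, though neither is fatal. First, the commuting conversions are even easier than you suggest: $\widehat{\cdot}$ erases $\Lone$, $\Lbangb$ and $\Lbangd$, so both sides of each commuting conversion extract to literally the same process; no scope extrusion is needed. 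Second, not every rule in the strong-bisimilarity block is an instance or immediate consequence of your single distribution law: $(\cutb/-/\cut_1)$, for instance, is plain structural congruence, while the nested duplication rule $(\cutd/-/\cutd)$ needs, beyond distribution, a further law saying that handing a sub-client its own private copy of a server preserves behaviour.

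The genuine gap is in your key lemma as stated. The side condition $\cone\notin\fn{\procone}$ alone does not make the server-distribution law true in the untyped calculus: take $\procone=\emproc$, $\proctwo=\inc{\cone}{\ctwo}{\inl{a}{\emproc}}$ and $\procthree=\rest{\cfour}{\outwc{\cone}{\cfour}}$. On the left-hand side, $\procthree$'s output on $\cone$ can synchronize with $\proctwo$'s \emph{input} on $\cone$, unlocking a visible action on the free name $a$; on the right-hand side the two clients sit under disjoint restrictions of $\cone$, the input of $\proctwo$ can never fire (the server only inputs), and no action on $a$ is ever reachable. The two sides are not bisimilar. The missing hypothesis is that the clients use $\cone$ only in output subject position, i.e.\ only to call the server; this is exactly what the type system guarantees for channels in the exponential contexts (the only rules consuming such assumptions are $\bemb$ and $\bemd$, which produce bound outputs, cf.\ item~11 of Lemma~\ref{lem:visible}), but you never state it, and your bisimulation sketch silently assumes it --- your case analysis considers only ``communication with the server, and the visible actions of the clients'', which is exhaustive \emph{only} under that output-only hypothesis. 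So the lemma must either be stated for typed processes or carry the explicit output-subject-only side condition, and its proof must invoke the typing to exclude client inputs on $\cone$ and client-to-client synchronization along $\cone$; with that repair, your argument goes through.
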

Before proceeding to Subject Reduction, we give the following two lemmas, concerning structural properties of the type system: the first one states that in a proof derivation the multiplexor context can be weakened. The second says that in a proof derivation assumptions in the auxiliary context can be ``lifted'' to the multiplexor context, while the underlying process stays the same.
\begin{lemma}[Weakening lemma]\label{weaklemma}
If $\tyg{\conone}{\contwo}{\conthree}{\ptone}{\tcone}$ and
whenever $\contwo\subseteq\confour$, it holds
that $\tyg{\conone}{\confour}{\conthree}{\ptone}{\tcone}$.
\end{lemma}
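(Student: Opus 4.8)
The plan is to prove the statement by induction on the structure of the type derivation (equivalently, the proof term) $\ptone$, showing that the multiplexor context of the conclusion can be enlarged from $\contwo$ to any $\confour$ with $\contwo\subseteq\confour$ while the auxiliary context $\conone$, the linear context $\conthree$, the underlying process $\widehat{\ptone}$, and the type $\tcone$ all stay fixed. Throughout I assume the usual well-formedness convention that the domains of the three contexts are disjoint, so that any $\confour$ occurring in a judgment avoids the names bound elsewhere in that judgment.

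I would first dispatch the cases in which the multiplexor context plays no active role. At the axiom $\Rone$ and at $\Rbang$ the multiplexor context of the conclusion is arbitrary and does not constrain the premise, so I simply re-instantiate the rule with $\confour$ in place of $\contwo$, needing no induction hypothesis. For the unary logical rules ($\Lone$, $\Lten$, $\Rlin$, $\Rplusone$, $\Rplustwo$, $\Lwithone$, $\Lwithtwo$) together with $\bemb$ and $\Lbangb$, the multiplexor context is transmitted unchanged to the unique premise, so I apply the induction hypothesis to that premise with the same $\confour$ and reapply the rule. For the binary rules whose multiplexor context is shared additively ($\Rten$, $\Llin$, $\Lplus$, $\Rwith$) and for the linear cut $\cut$, I apply the induction hypothesis to \emph{each} premise with $\confour$ and reapply the rule; it is precisely the additive (soft) treatment of the multiplexor context that licenses copying the same enlarged $\confour$ into both premises.

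Next come the rules that manipulate a distinguished multiplexor assumption. For $\bemd$ the enlarged context $\confour\supseteq\contwo,\cone:A$ already contains $\cone:A$, so writing $\confour=\confive,\cone:A$ with $\contwo\subseteq\confive$ I apply the induction hypothesis to the premise (whose multiplexor is $\contwo,\cone:A\subseteq\confour$) and reapply $\bemd$. For $\Lbangd$ the conclusion's multiplexor $\contwo$ is obtained from the premise's multiplexor $\contwo,\cone:A$; choosing $\cone$ fresh for $\confour$ (legitimate since $\cone$ already occurs in the linear context of the conclusion), I apply the induction hypothesis to the premise with $\confour,\cone:A$ and reapply the rule. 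The exponential cut $\cutb$ is equally routine: its multiplexor context is threaded only through the second premise, so I weaken that premise by the induction hypothesis and reapply $\cutb$, leaving the promoted left premise untouched.

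The main obstacle is the rule $\cutd$. Here the conclusion's multiplexor context $\contwo$ is \emph{not} free: it coincides with the auxiliary context of the left (promotable) premise $\tyg{\contwo}{\emcon}{\emcon}{\procone}{\ctwo:A}$, while the right premise carries $\contwo,\cone:A$ in its multiplexor. Enlarging the conclusion from $\contwo$ to $\confour$ therefore forces the left premise's auxiliary context to be enlarged to $\confour$ as well, and this is \emph{not} an instance of the present induction hypothesis, which weakens the multiplexor context rather than the auxiliary one. The right premise is handled directly by the induction hypothesis, but to rebuild the left premise I must propagate the new assumptions of $\confour\setminus\contwo$ into its auxiliary context — that is, realize them as fresh auxiliary doors of the box encoded by that promotable subderivation. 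I expect the interaction of this auxiliary enlargement with the $\Rbang$ nodes occurring inside the left premise to be the genuinely delicate point of the whole argument, and the place where the soft discipline governing the auxiliary doors of exponential boxes has to be invoked with care; once the enlarged left premise is available, the two reconstructed premises are recombined by $\cutd$ to yield the desired conclusion with multiplexor context $\confour$.
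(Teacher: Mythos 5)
Your handling of every rule except $\cutd$ is correct, and it is precisely the ``simple induction on the structure of $\ptone$'' that the paper's own proof consists of (the paper gives only that one sentence and works out no cases). The problem is that your $\cutd$ case is not a proof but a statement of intent, and the intent cannot be carried out. You correctly observe that enlarging the conclusion's multiplexor context from $\contwo$ to $\confour$ forces the left premise $\tyg{\contwo}{\emcon}{\emcon}{\ptone_1}{\ctwo:\typone}$ to be re-derived with auxiliary context $\confour$, which the induction hypothesis does not provide; you then appeal to ``realizing'' the names of $\confour\setminus\contwo$ as fresh auxiliary doors of that promotable subderivation. No such realization exists. The unique rule whose proof-term constructor is $\PRbang{\cdot}{\cdot}$ is $\Rbang$, and its conclusion has \emph{literally empty} auxiliary context, so no judgment $\tyg{\confour}{\emcon}{\emcon}{\PRbang{\ptone'}{\vec{\cone}}}{\ctwo:\typone}$ with $\confour\neq\emcon$ is derivable, for any $\ptone'$ whatsoever. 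Nothing prevents the left premise of a $\cutd$ from being exactly such a term: take $\ptone_1=\PRbang{\PRone}{}$, the doorless box whose process is $\binc{\ctwo}{\cfour}{\emproc}$ and whose typings are exactly $\tyg{\emcon}{\contwo}{\emcon}{\ptone_1}{\ctwo:\bang\unit}$, and any $\pttwo$ with $\tyg{\emcon}{\cone:\bang\unit}{\emcon}{\pttwo}{\cthree:\unit}$, e.g. $\pttwo=\Pbemd{\cone}{\cfour}{\PLbangd{\cfour}{\PRone}}$ or simply $\pttwo=\PRone$. Then $\Pcutd{\ptone_1}{\cone}{\pttwo}$ is typable, with conclusion $\tyg{\emcon}{\emcon}{\emcon}{\Pcutd{\ptone_1}{\cone}{\pttwo}}{\cthree:\unit}$, yet it admits \emph{no} typing with a nonempty multiplexor context: every typing of a $\Pcutd{\cdot}{\cdot}{\cdot}$ term ends with $\cutd$, whose conclusion's multiplexor context equals the left premise's auxiliary context, and that is forced to be $\emcon$. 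Adding a door to an already formed box is exactly what the soft discipline prohibits, so invoking it ``with care'' cannot manufacture the premise you need.

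The upshot for the comparison you were asked to make: the obstacle you isolated is genuine, and the paper's one-line proof silently steps over it, so your case analysis is more candid than the text; but as submitted your argument does not establish the lemma, and no argument can establish it in the form stated, since the term above refutes it. Any repair has to change the statement rather than sharpen the induction --- for instance, assert only (in the style of the Lifting Lemma, Lemma~\ref{liftlemma}) the existence of \emph{some} derivation $\pttwo$ with $\widehat{\pttwo}=\widehat{\ptone}$ and the enlarged multiplexor context, which for the instance above one obtains by retyping the process via $\Rbang$ (whose conclusion's multiplexor context is arbitrary) cut linearly against an $\Lbangd$ --- or else restrict the lemma to the instances actually invoked in the subject-reduction development; and even the process-level restatement is delicate, since variants of the example (a $\bemb$ wrapped around a box on the left, a right premise spawning twice on the cut channel) defeat it as well.
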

\begin{proof}
 By a simple induction on the structure of $\ptone$.
\end{proof}
\begin{lemma}[Lifting lemma]\label{liftlemma}
If $\tyg{\conone}{\contwo}{\conthree}{\ptone}{\tcone}$
then there exists an  $\pttwo$ such that
$\tyg{\emcon}{\conone,\contwo}{\conthree}{\pttwo}{\tcone}$
where $\widehat{\pttwo}=\widehat{\ptone}$.
We denote $\pttwo$ by $\lift{\ptone}$.
\end{lemma}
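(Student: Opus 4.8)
The plan is to proceed by structural induction on the derivation $\ptone$, building $\lift{\ptone}$ by rewriting $\ptone$ rule by rule so that every assumption it keeps in the auxiliary context is instead kept in the multiplexor context. Two observations drive the whole argument and, crucially, guarantee that the underlying process is untouched. First, by inspection of Figure~\ref{fig:typextr}, each ``$!$-flavored'' rule extracts to exactly the same process as its ``$\#$-flavored'' companion: $\bemb$ and $\bemd$ both give $\rest{\ctwo}{\outc{\cone}{\ctwo}{\widehat{\pttwo}}}$, $\Lbangb$ and $\Lbangd$ both give $\widehat{\ptone}$, and $\cutb$ and $\cutd$ both give $\rest{\cone}{(\para{\binc{\cone}{\ctwo}{\widehat{\ptone}}}{\widehat{\pttwo}})}$. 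Hence replacing a $!$-rule by its $\#$-counterpart never changes $\widehat{\cdot}$. Second, the Weakening lemma (Lemma~\ref{weaklemma}) enlarges a multiplexor context without altering the proof term, hence without altering the process; this is the tool that repairs the bookkeeping whenever moving an assumption from the (multiplicative) auxiliary context to the (additive) multiplexor context forces two contexts to be made equal.

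The routine cases are those whose last rule does not inspect the auxiliary context in an essential way: the base case $\Rone$ and the unary rules $\Lone$, $\Lten$, $\Rlin$, $\Lplus$, $\Rplusone$, $\Rplustwo$, $\Lwithone$, $\Lwithtwo$, $\Rwith$, together with $\bemd$ and $\Lbangd$, which carry the auxiliary context along unchanged; the case $\Rbang$ is trivial, since its conclusion already has an empty auxiliary context, so that $\lift{\ptone}=\ptone$. For each unary rule one applies the induction hypothesis to the premise and reapplies the same rule; the names formerly in the auxiliary context now sit in the multiplexor context, where the rule is equally applicable. The binary multiplicative rules $\Rten$, $\Llin$ and $\cut$ need more care, because there the auxiliary context is split multiplicatively into $\conone_1,\conone_2$ across the two premises, so after the induction hypothesis the two lifted premises carry the \emph{different} multiplexor contexts $\conone_1,\contwo$ and $\conone_2,\contwo$. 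Since the multiplexor context is shared additively, I first use Lemma~\ref{weaklemma} to weaken both premises up to the common multiplexor context $\conone_1,\conone_2,\contwo$ and only then reapply the rule, whose conclusion then carries multiplexor context $\conone_1,\conone_2,\contwo$, exactly the lift of the original auxiliary context $\conone_1,\conone_2$ and multiplexor context $\contwo$.

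The spawn/dereliction rules $\bemb$ and $\Lbangb$ are handled by a clean $!$-to-$\#$ swap. For $\bemb$, whose premise has $\ctwo:A$ in the linear context and whose conclusion places $\cone:A$ in the auxiliary context, I apply the induction hypothesis, use Lemma~\ref{weaklemma} to insert $\cone:A$ into the multiplexor context, and then close with $\bemd$ instead of $\bemb$; as $\bemb$ and $\bemd$ extract to the same process, the outcome is a derivation of the very same process with $\cone:A$ now in the multiplexor context. The case $\Lbangb$ is symmetric: after the induction hypothesis the lifted assumption $\cone:A$ already sits in the multiplexor context, so the promotion to $\cone:\bang A$ in the linear context can be carried out by $\Lbangd$ rather than $\Lbangb$, again without touching the process.

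The main obstacle is the exponential cut, i.e.\ the cases where $\ptone$ ends in $\cutb$ or $\cutd$, because there the auxiliary-versus-multiplexor distinction is most rigid. In $\cutb$ the server $\procone$ and the body $\proctwo$ depend on \emph{disjoint} auxiliary contexts $\conone_1$ and $\conone_2$ (over a shared $\contwo$), whereas in $\cutd$ the server and the body are forced to \emph{share} a single exponential context, equal to the conclusion's multiplexor context. Turning a lifted $\cutb$ into a $\cutd$ therefore requires re-deriving the server so that its doors range over the enlarged common multiplexor context, reconciling this with the body (lifted by the induction hypothesis and widened via Lemma~\ref{weaklemma}) and checking that the cut formula $\cone:A$ lands in the multiplexor context as $\cutd$ demands; this is precisely where the soft discipline is exploited and where the context bookkeeping is most delicate. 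The $\cutd$ case is treated analogously, the only difference being that its server already lives over (part of) the multiplexor context. In all cases, since the manipulations used---induction hypotheses, weakenings, and $!$-to-$\#$ rule swaps---preserve $\widehat{\cdot}$, we obtain the required identity $\widehat{\lift{\ptone}}=\widehat{\ptone}$.
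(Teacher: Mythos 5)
Your overall strategy---structural induction on $\ptone$, swapping each $!$-flavoured rule for its $\#$-flavoured companion, and repairing the multiplicative/additive mismatch with Lemma~\ref{weaklemma}---is exactly the induction the paper invokes (its own proof is a one-liner), and your treatment of the unary rules, the binary multiplicatives, $\bemb$, $\Lbangb$ and $\Rbang$ is correct. The gap is in the case you yourself flag as ``most delicate'', and it is a genuine missing idea, not bookkeeping. To close a lifted $\cutb$ with the rule $\cutd$, the server premise must be re-typed with auxiliary context equal to the \emph{whole} multiplexor context of the desired conclusion: you must pass from $\tyg{\conone_1}{\emcon}{\emcon}{\ptone_1}{\ctwo:\typone}$ to $\tyg{\conone_1,\conone_2,\contwo}{\emcon}{\emcon}{\ptone_1'}{\ctwo:\typone}$. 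This is weakening of the \emph{auxiliary} context, which Lemma~\ref{weaklemma} does not give (it enlarges only the multiplexor context), and which the induction hypothesis does not give either: applying the IH to $\ptone_1$ yields $\tyg{\emcon}{\conone_1}{\emcon}{\lift{\ptone_1}}{\ctwo:\typone}$, whose nonempty multiplexor context disqualifies it as the first premise of $\cutd$ (or of $\cutb$), since that premise must have empty multiplexor and linear contexts. So the phrase ``re-deriving the server so that its doors range over the enlarged common multiplexor context'' names precisely the step that is never constructed.

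Moreover, this step would in fact fail: auxiliary weakening is not admissible in \pidsll, because $\Rbang$ forces an empty auxiliary context in its conclusion, and this can pin a server's auxiliary context exactly. Concretely, let $\ptone_0$ derive $\tyg{v:\unit}{\emcon}{\emcon}{\ptone_0}{w:\unit}$ with $\widehat{\ptone_0}=\rest{z}{\outc{v}{z}{\emproc}}$, and put $\ptone_1=\Pbemb{\cone}{v}{\PRbang{\ptone_0}{v}}$, so that $\tyg{\cone:\bang\unit}{\emcon}{\emcon}{\ptone_1}{s:\bang\unit}$ and $\widehat{\ptone_1}=\rest{v}{\outc{\cone}{v}{\binc{s}{w}{\widehat{\ptone_0}}}}$. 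A syntax-directed inspection of Figure~\ref{fig:pidslltr} shows that \emph{every} derivation of this process with empty multiplexor and linear contexts must end with $\bemb$ over $\Rbang$, hence has auxiliary context exactly $\cone:\bang\unit$: this judgment can be lifted (swap $\bemb$ for $\bemd$, letting $\cone:\bang\unit$ ride in the arbitrary multiplexor context of $\Rbang$'s conclusion) but never weakened. Consequently, if such a rigid $\ptone_1$ is the server of a $\cutb$ whose body carries any further exponential assumption ($\conone_2,\contwo\neq\emcon$, say $\contwo=d:\unit$ and $\conthree=\emcon$), your construction cannot be completed, and indeed no rule among $\cut$, $\cutb$, $\cutd$ can then derive the target $\tyg{\emcon}{\conone_1,\conone_2,\contwo}{\emcon}{\pttwo}{\tcone}$ for the process $\rest{\cone}{(\para{\binc{\cone}{\ctwo}{\widehat{\ptone_1}}}{\widehat{\ptone_2}})}$: each of them forces the sub-derivation of $\widehat{\ptone_1}$ to have auxiliary context either empty or equal to all of $\conone_1,\conone_2,\contwo$, neither of which is achievable. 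So the $\cutb$/$\cutd$ cases are not merely delicate; as set up, they break, and any honest proof has to confront this (the paper's own ``simple induction'' does not say how, so this appears to be a difficulty with the lemma itself and not only with your write-up).
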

\begin{proof}
Again, a simple induction on the structure of the proof term $\ptone$.
\end{proof}
The following is sort of a generation lemma ($s(\alpha)$ denotes the subject
of the action $\alpha$):
\begin{lemma}\label{lem:visible}
Let $\tyg{\conone}{\contwo}{\conthree}{\ptone \rightsquigarrow \procone}{\cone : \tcone}$.
\begin{varenumerate}
 \item 
   If $\procone \xrightarrow{\alpha} \proctwo$ and $\tcone = \unit$ then $s(\alpha) \neq \cone$.
 \item 
   If $\procone \xrightarrow{\alpha} \proctwo$ and $\ctwo : \unit \in \conthree$ then 
   $s(\alpha) \neq \ctwo$.
 \item 
   If $\procone \xrightarrow{\alpha} \proctwo$ and $s(\alpha) = \cone$ and 
   $\tcone = \typone \otimes \typtwo$ then $\alpha = \overline{\rest{\ctwo}{\outsc{\cone}{\ctwo}}}$.
 \item 
   If $\procone \xrightarrow{\alpha} \proctwo$ and $s(\alpha) = \ctwo$ and 
   $\ctwo : \typone \otimes \typtwo \in \conthree$ then $\alpha = \insc{\ctwo}{\cthree}$.
 \item 
   If $\procone \xrightarrow{\alpha} \proctwo$ and $s(\alpha) = \cone$ and 
   $\tcone = \typone \lin \typtwo$ then $\alpha = \insc{\cone}{\ctwo}$.
 \item 
   If $\procone \xrightarrow{\alpha} \proctwo$ and $s(\alpha) = \ctwo$ and 
   $\ctwo : \typone \lin \typtwo \in \conthree$ then $\alpha = \overline{\rest{\cthree}{\outsc{\ctwo}{\cthree}}}$.
 \item 
   If $\procone \xrightarrow{\alpha} \proctwo$ and $s(\alpha) = \cone$ and 
   $\tcone = \typone \with \typtwo$ then $\alpha = \inl{\cone}{}$ or $\alpha = \inr{\cone}{}$.
 \item 
   If $\procone \xrightarrow{\alpha} \proctwo$ and $s(\alpha) = \ctwo$ and 
   $\ctwo : \typone \with \typtwo \in \conthree$ then $\alpha = \overline{\inl{\ctwo}{}}$ or $\alpha = \overline{\inr{\ctwo}{}}$.
 \item 
   If $\procone \xrightarrow{\alpha} \proctwo$ and $s(\alpha) = \cone$ and 
   $\tcone = \typone \plus \typtwo$ then
   $\alpha = \overline{\inl{\cone}{}}$ or $\alpha = \overline{\inr{\cone}{}}$.
 \item 
   If $\procone \xrightarrow{\alpha} \proctwo$ and $s(\alpha) = \ctwo$ and 
   $\ctwo : \typone \plus \typtwo \in \conthree$ then  $\alpha = \inl{\ctwo}{}$ or $\alpha = \inr{\ctwo}{}$
 \item 
   If $\procone \xrightarrow{\alpha} \proctwo$ and $s(\alpha) = \cone$ and $\tcone = \bang \typone$ 
   then $\alpha = \insc{\cone}{\ctwo}$.
 \item 
   If $\procone \xrightarrow{\alpha} \proctwo$ and $s(\alpha) = \ctwo$ and $\ctwo : \bang \typone$ 
   or $\ctwo \in \conone$ or $\ctwo \in \contwo$ or $\ctwo \in \confour$ then 
   $\alpha = \overline{\rest{\cthree}{\outsc{\ctwo}{\cthree}}}$.
\end{varenumerate}
\end{lemma}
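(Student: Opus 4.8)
The plan is to argue by induction on the structure of the typing derivation $\tdone$, proving all twelve statements simultaneously. The guiding principle is operational: a visible action $\alpha$ with a prescribed subject can only be fired by a top-level prefix acting on that subject, and for each connective the typing rule that places such a prefix on the corresponding channel is uniquely determined. Accordingly, for every possible last rule of $\tdone$ I distinguish two situations. Either the last rule is the one that \emph{introduces} an action on the subject named in the statement under consideration, in which case the rule fixes the syntactic shape of $\pttotd{\tdone}=\procone$ and I read off the admissible $\alpha$ directly; or the last rule acts on some other channel (or is a cut), in which case the transition on the fixed subject must \emph{propagate} from a premise whose subderivation still assigns that subject the same type, and I close the case with the induction hypothesis.

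The introducing cases are immediate once the process shape is known. For the right channel $\cone:\tcone$, the relevant right rule is forced: $\Rten$ gives $\procone=\rest{\ctwo}{\outc{\cone}{\ctwo}{(\para{\proctwo}{\procthree})}}$, whose only action on $\cone$ is the bound output $\overline{\rest{\ctwo}{\outsc{\cone}{\ctwo}}}$ (part 3); $\Rlin$ gives $\procone=\inc{\cone}{\ctwo}{\proctwo}$, hence the input $\insc{\cone}{\ctwo}$ (part 5); $\Rwith$ and $\Rplusone/\Rplustwo$ yield the branch offers and selections of parts 7 and 9; $\Rbang$ forces a replicated input and so the input $\insc{\cone}{\ctwo}$ (part 11); and $\Rone$ forces $\procone=\emproc$, which has no action at all (part 1). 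Symmetrically, for a channel in the linear context $\conthree$ the introducing left rule fixes the prefix: $\Lten$ yields the input of part 4, $\Llin$ the bound output of part 6, and $\Lwithone/\Lwithtwo$, $\Lplus$ the selections and offers of parts 8 and 10. For the exponential context channels of part 12, the introducing rules are $\bemb$ and $\bemd$, which both force $\procone=\rest{\ctwo}{\outc{\cone}{\ctwo}{\proctwo}}$ and hence a bound output on the exponential-context channel. Parts 1 and 2 (the $\unit$ case) rest on the complementary observation that \emph{no} rule ever puts a top-level prefix on a $\unit$-typed channel, so its name is never the subject of a visible action.

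The remaining cases are the propagation cases, where the transition $\procone\xrightarrow{\alpha}\proctwo$ must be inferred, through the PAR and RES rules of the labelled semantics, from a transition of a subprocess typed by one of the premises; there I simply invoke the induction hypothesis, noting that these rules leave the subject of $\alpha$ unchanged. The main obstacle, and the only place requiring real care, is the multi-premise rules $\cut$, $\cutb$, $\cutd$, $\Rten$ and $\Llin$: I must check that the fixed subject is free in exactly the premise where it retains its type, and that $\alpha$ genuinely originates there rather than from a synchronization. Freshness guarantees that the subject differs from the name bound by a cut's restriction, so RES transmits $\alpha$ unchanged, and disjointness of the linear and auxiliary contexts confines the subject to a single premise in those cases. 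The genuinely delicate point is the \emph{shared} multiplexor context $\contwo$, whose channels may occur free in both premises of a cut: an action on such a channel could a priori arise from either side, but the induction hypothesis (part 12) classifies it as a bound output on both sides, and any communication on it would produce an internal action with no subject, so the conclusion of part 12 survives. A final subtlety is that $\Lbangb$ and $\Lbangd$ leave $\procone$ unchanged while relabelling a channel from an exponential context (type $\typone$) to the linear context (type $\bang\typone$); tracing $\alpha$ back through these rules and applying the induction hypothesis to the premise, where the channel sits in an exponential context, again delivers the bound output demanded by part 12.
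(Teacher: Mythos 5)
Your proposal is correct, and it takes the only approach available: the paper itself offers no argument, dismissing the lemma with ``Trivial from definitions,'' and your induction on the typing derivation---introducing rules fixing the prefix shape, propagation through cuts and through $\Lbangb$/$\Lbangd$ handled by the induction hypothesis---is precisely the routine verification that remark leaves implicit. The care you take with the shared multiplexor context and with the relabelling performed by $\Lbangb$/$\Lbangd$ goes beyond what the paper records, but it does not constitute a different method, only a written-out version of the same inspection of the typing rules and the labelled semantics.
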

\begin{proof}
Trivial from definitions.
\end{proof}
Crucial to the proof of the Subject Reduction Theorem is an analysis of how processes interacting
with their environments performing dual action can communicate when composed by a cut rule.
\begin{lemma}\label{lemma:tens}
 Assume that:
\begin{varenumerate}
  \item   
    $\tyg{\conone_1}{\contwo}{\conthree_1}{\ptone}{\cone : \typone\tens \typtwo}$ 
    with $\widehat{\ptone}=\procone\xrightarrow{\overline{\rest{\ctwo}{\outsc{\cone}{\ctwo}}}} \proctwo$;      
  \item 
    $\tyg{\conone_2}{\contwo}{\conthree_2,\cone : \typone \tens \typtwo}{\pttwo}{\cthree : \typthree}$ 
    with $\widehat{\pttwo}=\procthree \xrightarrow{\insc{\cone}{\ctwo}} \procfour$.
\end{varenumerate}
Then:
\begin{varenumerate}
 \item 
   $\Pcut{\ptone}{\cone}{\pttwo}\cpredequ\cred\cpredequ\ptthree$ for some $\ptthree$;
 \item 
   $\tyg{\conone_1, \conone_2}{\contwo}{\conthree_1, \conthree_2}{\ptthree}{\cthree : \typthree}$,
   where $\widehat{\ptthree}\cequ\rest{\cone}{(\para{\proctwo}{\procfour})}$. 
\end{varenumerate}
\end{lemma}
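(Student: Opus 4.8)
The plan is to rewrite the composite proof term $\Pcut{\ptone}{\cone}{\pttwo}$ into a shape in which the cut on $\cone$ is \emph{principal} --- an immediate composition of an $\Rten$ rule (introducing $\cone:\typone\tens\typtwo$ on the right) with an $\Lten$ rule consuming $\cone$ --- so that the single computational rule $(\cut/\Rten/\Lten)$ of Figure~\ref{fig:comred} applies and produces the nested cuts mirroring the communication. First I would perform an inversion on $\ptone$ and $\pttwo$ driven by the fact that their extracted processes fire a top-level action whose subject is $\cone$. By inspecting how processes are built from proof terms (Figure~\ref{fig:typextr}), any last rule that emits a prefix on a channel $\neq\cone$ would guard the whole process and forbid that action; hence, using Lemma~\ref{lem:visible} to fix the shape of the actions, the last rule of $\ptone$ must be either the $\Rten$ on $\cone$ (the principal base case) or one of the \emph{prefix-free} rules $\Lone,\Lbangb,\Lbangd$, or a cut $\cut,\cutb,\cutd$. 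The symmetric statement holds for $\pttwo$, with $\Lten$ on $\cone$ replacing $\Rten$.

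The heart of the argument is then a permutation carried out by induction on $\ptone$ and $\pttwo$. When a last rule is prefix-free, the commuting conversions $(\cut/\Lone/-),(\cut/\Lbangb/-),(\cut/\Lbangd/-)$ (and their $(\cut/-/\cdot)$ variants for $\pttwo$) of Figure~\ref{fig:equred} pull it outside the cut on $\cone$; when it is a cut, the structural conversions $(\cut/\cutb/-),(\cut/\cutd/-),(\cut/-/\cutb),(\cut/-/\cutd)$, the associativities $(\cut/-/\cut_1),(\cut/-/\cut_2)$, and the relevant strong bisimilarities commute it past the cut on $\cone$. In every case the rewriting is an instance of $\cpredequ$, and the induction hypothesis applies to the strictly smaller residual derivation, eventually exposing the principal redex $\Pcut{(\PRten{\ptone_1}{\ptone_2})}{\cone}{\PLten{\cone}{\ctwo}{\cone}{\pttwo'}}$ underneath the rules commuted out. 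Throughout, each commutation must respect the splitting discipline of the contexts --- the auxiliary context $\conone$ and linear context $\conthree$ are split multiplicatively at $\cut$ and $\Rten$, whereas the multiplexor context $\contwo$ is shared --- which is exactly what the side conditions of the conversion rules and Lemma~\ref{weaklemma} ensure.

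With the redex exposed I would fire the computational step
$$
\Pcut{(\PRten{\ptone_1}{\ptone_2})}{\cone}{\PLten{\cone}{\ctwo}{\cone}{\pttwo'}}\;\cred\;\Pcut{\ptone_1}{\ctwo}{\Pcut{\ptone_2}{\cone}{\pttwo'}},
$$
where $\ptone_1$ offers $\ctwo:\typone$, $\ptone_2$ offers $\cone:\typtwo$, and $\pttwo'$ uses $\ctwo:\typone,\cone:\typtwo$. Re-typing the right-hand side with two nested $\cut$ rules produces auxiliary context $\conone_1,\conone_2$, shared multiplexor context $\contwo$, and linear context $\conthree_1,\conthree_2$; reinstating the prefix-free rules that were commuted away leaves this typing --- and the claimed conclusion $\tyg{\conone_1,\conone_2}{\contwo}{\conthree_1,\conthree_2}{\ptthree}{\cthree:\typthree}$ --- intact. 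Finally, the extracted process $\widehat{\ptthree}=\rest{\ctwo}{(\para{\widehat{\ptone_1}}{\rest{\cone}{(\para{\widehat{\ptone_2}}{\widehat{\pttwo'}})}})}$ is, by scope extrusion and the monoid laws for $|$, structurally congruent to $\rest{\cone}{(\para{(\para{\widehat{\ptone_1}}{\widehat{\ptone_2}})}{\widehat{\pttwo'}})}=\rest{\cone}{(\para{\proctwo}{\procfour})}$, which gives $\widehat{\ptthree}\cequ\rest{\cone}{(\para{\proctwo}{\procfour})}$.

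The step I expect to be the main obstacle is the permutation itself: one must check that \emph{every} rule that can stand between the root and the principal $\tens$ rule genuinely commutes with the cut on $\cone$ through a rule of Figure~\ref{fig:equred}, paying particular attention to the exponential cuts $\cutb,\cutd$ whose bisimilarity conversions duplicate subderivations, and one must verify that after all these commutations the three context zones still split exactly as required. The principal reduction and the process-level congruence are routine by comparison; it is the bookkeeping of the context splittings across the commutations that demands care.
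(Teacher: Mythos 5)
Your proposal is correct and takes essentially the same route as the paper: the paper proves Lemma~\ref{lemma:tens} by simultaneous induction on the two derivations and explicitly defers the details to the analogous result for \pidill\ in Caires and Pfenning, which is precisely the permute-until-principal-then-reduce argument you spell out. Your case analysis (principal $\Rten$/$\Lten$ versus the transparent rules $\Lone,\Lbangb,\Lbangd$ versus the three cuts), the commutations via Figure~\ref{fig:equred}, and the final $(\cut/\Rten/\Lten)$ step followed by scope extrusion at the process level are a faithful expansion of that omitted proof.
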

\begin{proof}
By simultaneous induction on $\ptone_1, \ptone_2$. The property stated in the lemma holds also for the system $\pidill$ (see ~\cite{Caires10}); 
since the proof technique is essentially the same modulo some minor details, we omit the proof.
\end{proof}
\begin{lemma}\label{lemma:lin}
 Assume 
\begin{varenumerate}
\item $\tyg{\conone_1}{\contwo}{\conthree_1}{\ptone_1 \rightsquigarrow \procone_1}{\cone : \typone \lin \typtwo}$ with $\procone_1 \xrightarrow{\insc{\cone}{\ctwo}} \proctwo_1$
\item $\tyg{\conone_2}{\contwo}{\conthree_2,\cone : \typone \lin \typtwo}{\ptone_2 \rightsquigarrow \procone_2}{\cthree : \typthree}$ with $\procone_2 \xrightarrow{\overline{\rest{\ctwo}{\outsc{\cone}{\ctwo}}}} \proctwo_2$
\end{varenumerate}
Then
\begin{varenumerate}
 \item $\Pcut{\ptone_1}{\cone}{\ptone_2} \cpredequ\cred\cpredequ \ptone $ for some $\ptone$;
 \item $\tyg{\conone_1, \conone_2}{\contwo}{\conthree_1, \conthree_2}{\ptone \rightsquigarrow \procthree}{\cthree : \typthree}$ for some $\procthree \cequ \rest{\cone}{\rest{\ctwo}{\para{(\proctwo_1}{\proctwo_2)}}}$. 
\end{varenumerate}
\end{lemma}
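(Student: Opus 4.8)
The plan is to prove Lemma~\ref{lemma:lin} by simultaneous induction on the structures of $\ptone_1$ and $\ptone_2$, following the same strategy as the dual statement, Lemma~\ref{lemma:tens}. The present lemma is indeed its exact mirror image: the process offering $\cone:\typone\lin\typtwo$ is now the one performing the input $\insc{\cone}{\ctwo}$, whereas the process using $\cone$ in its linear context performs the dual bound output $\overline{\rest{\ctwo}{\outsc{\cone}{\ctwo}}}$, so the argument is obtained from the $\tens$ case by swapping the roles of input and output and of the rules $\Rlin$ and $\Llin$. Throughout, I would invoke parts~(5) and~(6) of Lemma~\ref{lem:visible}: they guarantee that any action of $\procone_1$ with subject $\cone$ is the input $\insc{\cone}{\ctwo}$ and any action of $\procone_2$ with subject $\cone$ is the bound output, so that the two redexes form a genuine matching input/output pair and, crucially, that the last rule of each derivation acting on $\cone$ can only be the corresponding introduction rule.

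The base (principal) case is the one in which $\ptone_1$ ends with $\Rlin$ introducing $\typone\lin\typtwo$ on $\cone$ and $\ptone_2$ ends with $\Llin$ acting on the same $\cone$, say $\ptone_1=\PRlin{\ctwo}{\ptthree}$ and $\ptone_2=\PLlin{\cone}{\ptfour}{\cone}{\ptfive}$. Then $\Pcut{\ptone_1}{\cone}{\ptone_2}$ is exactly the left-hand side of the computational reduction rule $(\cut/\Llin/\Rlin)$ of Figure~\ref{fig:comred}, and a single $\cred$ step rewrites it to $\ptone=\Pcut{\Pcut{\ptfour}{\ctwo}{\ptthree}}{\cone}{\ptfive}$, with no $\cpredequ$ wrapping needed. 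One then checks, by reading off the premises of $\Rlin$ and $\Llin$ and of the two cut rules, that $\ptone$ derives $\tyg{\conone_1,\conone_2}{\contwo}{\conthree_1,\conthree_2}{\ptone}{\cthree:\typthree}$ --- the auxiliary contexts $\conone_1,\conone_2$ being merged multiplicatively by the two cuts while the shared multiplexor context $\contwo$ is left untouched --- and that, after scope extrusion, $\widehat{\ptone}\scon\rest{\cone}{\rest{\ctwo}{(\para{\proctwo_1}{\proctwo_2})}}$, which is the claimed communicated process.

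For the inductive cases the key observation is that, by the transition rules together with Lemma~\ref{lem:visible}, a process that can fire an action on $\cone$ cannot have a blocking prefix on another channel at top level; hence the only root rules of $\ptone_1$ other than $\Rlin$ on $\cone$ are the \emph{transparent} ones whose extracted process is a restriction, a parallel composition or a pure relabelling, namely $\cut$, $\cutb$, $\cutd$, $\Lone$, $\Lbangb$ and $\Lbangd$ (and symmetrically for $\ptone_2$ besides $\Llin$ on $\cone$). These are precisely the rules for which Figure~\ref{fig:equred} supplies a commuting conversion, such as $(\cut/\cutd/-)$, $(\cut/\Lbangb/-)$ or $(\cut/-/\cutb)$. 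In each such case I would use the relevant equivalence to permute the cut on $\cone$ past the offending last rule, obtaining a cut on $\cone$ against a strictly smaller proof term; the induction hypothesis then applies, and re-applying the permuted rule on top of its output yields $\ptone$, the leading and trailing commutations accounting for the $\cpredequ\cred\cpredequ$ shape (in fact only structural conversions, i.e.\ $\cequ$-steps, are used here). The Weakening Lemma~\ref{weaklemma} and the Lifting Lemma~\ref{liftlemma} are called upon to realign the multiplexor and auxiliary contexts each time a rule is transported across the cut.

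The main obstacle is the threading of the two-context discipline of \pidsll\ through the genuinely new transparent cases $\cutb$, $\cutd$, $\Lbangb$ and $\Lbangd$, which have no counterpart in \pidill\ and hence are not covered by Caires and Pfenning's argument. One must verify that commuting a purely multiplicative cut on $\cone$ past an exponential cut keeps the auxiliary context split multiplicatively and the multiplexor context shared additively, and that the side conditions on free names guarding the conversions (and the scope extrusions in the process reading) are met; because the outer rule here is an ordinary $\cut$, only the \emph{structural} conversions of Figure~\ref{fig:equred} are triggered, so that the duplicating strong bisimilarities do not intervene and $\procthree$ in fact matches $\rest{\cone}{\rest{\ctwo}{(\para{\proctwo_1}{\proctwo_2})}}$ up to structural congruence. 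The one point requiring genuine care is confirming that none of these exponential commutations can ever disable the firing of the input/output pair on $\cone$, which is exactly what Lemma~\ref{lem:visible} secures, thereby closing the induction.
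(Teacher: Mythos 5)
Your proposal is correct and takes essentially the same approach as the paper: the paper's proof of Lemma~\ref{lemma:lin} simply defers to that of Lemma~\ref{lemma:tens}, namely a simultaneous induction on $\ptone_1,\ptone_2$ adapted from Caires and Pfenning's argument for \pidill, which is precisely your structure (principal case dispatched by the computational rule $(\cut/\Llin/\Rlin)$, non-principal cases by permuting the cut via the $\cequ$-conversions and invoking Lemma~\ref{lem:visible} to rule out blocking prefixes). Your write-up in fact supplies the case analysis and the context-realignment details (weakening of the multiplexor context) that the paper leaves implicit.
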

\begin{proof}
See the proof of Lemma \ref{lemma:tens}.
\end{proof}
\begin{lemma}\label{lemma:bang}
 Assume 
\begin{varenumerate}
\item $\tyg{\conone_1}{\contwo}{\conthree_1}{\ptone_1 \rightsquigarrow \procone_1}{\cone : \bang \typone }$ with $\procone_1 \xrightarrow{\insc{\cone}{\ctwo}} \proctwo_1$
\item $\tyg{\conone_2}{\contwo}{\conthree_2,\cone : \bang \typone}{\ptone_2 \rightsquigarrow \procone_2}{\cthree : \typthree}$ with $\procone_2 \xrightarrow{\overline{\rest{\ctwo}{\outsc{\cone}{\ctwo}}}} \proctwo_2$
\end{varenumerate}
Then
\begin{varenumerate}
 \item $\Pcut{\ptone_1}{\cone}{\ptone_2} \cpredequ\cred\cpredequ \ptone $ for some $\ptone$;
 \item $\tyg{\conone_1, \conone_2}{\contwo}{\conthree_1, \conthree_2}{\ptone \rightsquigarrow \procthree}{\cthree : \typthree}$ for some $\procthree \cequ \rest{\cone}{\rest{\ctwo}{\para{(\proctwo_1}{\proctwo_2)}}}$. 
\end{varenumerate}

\end{lemma}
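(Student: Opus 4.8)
The plan is to follow the pattern of Lemmas~\ref{lemma:tens} and~\ref{lemma:lin} and argue by simultaneous induction on $\ptone_1$ and $\ptone_2$, the essential new feature being that the exponential interaction is resolved in \emph{two} phases: a box-opening realised by a shift reduction $\cpred$, followed by a single spawning step realised by a computational reduction $\cred$. First I would invoke Lemma~\ref{lem:visible} to constrain the two derivations. Since $\procone_1$ offers $\cone:\bang\typone$ on the right and fires $\insc{\cone}{\ctwo}$, clause~11 forces the rule of $\ptone_1$ introducing $\cone$ to be $\Rbang$, so (reading up through rules that neither touch the right-hand $\cone$ nor block the input) the relevant shape is $\ptone_1=\PRbang{\ptfive}{\cone_1,\ldots,\cone_n}$ with $\procone_1=\binc{\cone}{\ctwo}{\widehat{\ptfive}}$ a replicated server. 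Dually, since $\procone_2$ consumes $\cone:\bang\typone$ from its linear context and fires the bound output $\overline{\rest{\ctwo}{\outsc{\cone}{\ctwo}}}$, clause~12 tells us that in $\ptone_2$ the occurrence of $\cone$ is first relocated to an exponential context by $\Lbangb$ or $\Lbangd$ and the output is then typed by the matching $\bemb$ or $\bemd$.

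The non-principal cases are routine. Whenever the last rule of $\ptone_1$ or of $\ptone_2$ is not the one acting on $\cone$ identified above, it does not block the observed action, which therefore originates in one of its premises; I would commute $\cut$ past it with the corresponding equivalence of Figure~\ref{fig:equred} and appeal to the induction hypothesis on the premise that owns the action on $\cone$ (pushing the cut into both premises for additive rules such as $\Lplus$ and $\Rwith$). Because $\cequ$-steps are part of $\cpredequ$ and the scheme $\cpredequ\cred\cpredequ$ is closed under prefixing by a further $\cpredequ$, the global rewrite retains the required shape; this is exactly as in Lemma~\ref{lemma:tens}.

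The principal case is the core. With $\ptone_1=\PRbang{\ptfive}{\cone_1,\ldots,\cone_n}$ and the root of $\ptone_2$ relocating $\cone$ via, say, $\Lbangb$, I would first apply the shift reduction $(\cut/\Rbang/\Lbangb)$ of Figure~\ref{fig:shiftred}. Logically this \emph{opens the box}: the linear $\cut$ on $\bang\typone$ becomes a persistent $\cutd$, and the server's auxiliary dependencies $\cone_1,\ldots,\cone_n$ are lifted back through a block of $\Lbangb$ rules. As $\cpred$ is silent on processes, $\widehat{\Pcut{\ptone_1}{\cone}{\ptone_2}}$ is so far unchanged up to structural congruence. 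Next I would use the $\cutd$-commuting equivalences of Figure~\ref{fig:equred} to slide this $\cutd$ down through $\ptone_2$ until it meets the $\bemd$ rule typing the output on $\cone$, and then fire the single computational reduction $(\cutd/-/\bemd)$ of Figure~\ref{fig:comred}, obtaining $\Pcut{\lift{\ptfive}}{\ctwo}{\Pcutd{\ptfive}{\cone}{\pttwo}}$ for the appropriate continuation $\pttwo$. This residual reads off correctly at the process level: the outer $\Pcut{\lift{\ptfive}}{\ctwo}{\Pcutd{\ptfive}{\cone}{\pttwo}}$ is the freshly spawned linear session on the bound name $\ctwo$, whereas the inner $\Pcutd{\ptfive}{\cone}{\pttwo}$ reproduces the \emph{persistent} replicated input; the lifting lemma (Lemma~\ref{liftlemma}) is precisely what relocates the contexts of the server body $\ptfive$ so that the spawned copy typechecks. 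Extracting the process and using the propositions that relate $\cpred,\cequ,\cred$ to structural congruence, bisimilarity and internal reduction, one verifies $\widehat{\ptone}\cequ\rest{\cone}{\rest{\ctwo}{\para{(\proctwo_1}{\proctwo_2)}}}$ with conclusion context $\conone_1,\conone_2;\contwo;\conthree_1,\conthree_2$ as required. The $\Lbangd$ subcase is handled identically via $(\cut/\Rbang/\Lbangd)$.

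The step I expect to be the main obstacle is coordinating the two phases while keeping the context bookkeeping honest. Unlike the multiplicative Lemmas~\ref{lemma:tens} and~\ref{lemma:lin}, where one $\cred$ suffices, here the box-opening $\cpred$ and the spawning $\cred$ are genuinely separate and are mediated by an \emph{a priori} unbounded number of $\cutd$-commutations needed to reach the $\bemd$ buried inside $\ptone_2$; checking that each such commutation is licensed by Figure~\ref{fig:equred}, and that the relocation of $\cone_1,\ldots,\cone_n$ together with $\lift{(\cdot)}$ deliver exactly the contexts $\conone_1,\conone_2;\contwo;\conthree_1,\conthree_2$, is the delicate part. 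Equally, the \emph{persistence} of the server — that the residual $\cutd$ regenerates the replicated input rather than consuming it, mirroring the way the labelled transition of $\procone_1$ keeps $\binc{\cone}{\ctwo}{\widehat{\ptfive}}$ alive — is what sets this case apart from an ordinary linear cut and must be matched carefully against the process semantics.
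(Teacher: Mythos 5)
Your proposal takes essentially the same route as the paper: the paper's own ``proof'' of Lemma~\ref{lemma:bang} is just a pointer to Lemma~\ref{lemma:tens}, i.e.\ a simultaneous induction on the two derivations in the style of Caires--Pfenning with all details omitted, and your principal case --- shift reduction $(\cut/\Rbang/\Lbangb)$ or $(\cut/\Rbang/\Lbangd)$ to open the box, a chain of $\cutd$/$\cutb$ commutations, one computational step $(\cutd/-/\bemd)$ (resp.\ $(\cutb/-/\bemb)$) discharged via Lemma~\ref{liftlemma} --- is exactly the machinery the paper sets up; indeed it inlines what the paper factors out as Lemmas~\ref{lemma:cutbang} and~\ref{lemma:cutd}.

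One claim in your treatment of the non-principal cases is wrong, although it only concerns cases that never arise. You assert that whenever the last rule of $\ptone_1$ or $\ptone_2$ is not the rule acting on $\cone$, ``it does not block the observed action,'' and in particular that additive rules such as $\Lplus$ and $\Rwith$ are handled by pushing the cut into both premises using ``the corresponding equivalence of Figure~\ref{fig:equred}.'' Figure~\ref{fig:equred} contains no commutation of any cut past $\Lplus$, $\Rwith$, or indeed past any rule other than $\Lone$, $\Lbangb$, $\Lbangd$ and the cuts themselves, so that step, as stated, would fail. The reason the induction nevertheless goes through is the opposite observation: by Figure~\ref{fig:typextr}, every rule other than $\Lone$, $\Lbangb$, $\Lbangd$, $\cut$, $\cutb$, $\cutd$ extracts to a process guarded by a prefix on the channel it acts on, so under the transition hypotheses ($\procone_1$ fires $\insc{\cone}{\ctwo}$, $\procone_2$ fires $\overline{\rest{\ctwo}{\outsc{\cone}{\ctwo}}}$) those cases are vacuous, and the only commutations ever needed are precisely the ones the figure provides. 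The same observation is what licenses your ``slide the $\cutd$ down to the $\bemd$'' phase: the path from the root of the opened box to the relevant $\bemd$ can only cross transparent rules and cuts. You should restate the non-principal case analysis this way; the rest of the argument, including the context bookkeeping and the persistence of the server via the residual inner $\cutd$, is sound.
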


\begin{proof}
See the proof of Lemma \ref{lemma:tens}.
\end{proof}

\begin{lemma}\label{lemma:with}
 Assume 
\begin{varenumerate}
\item $\tyg{\conone_1}{\contwo}{\conthree_1}{\ptone_1 \rightsquigarrow \procone_1}{\cone : \typone \with \typtwo}$ with $\procone_1 \xrightarrow{\inl{\cone}{}} \proctwo_1$
\item $\tyg{\conone_2}{\contwo}{\conthree_2,\cone : \typone \with \typtwo}{\ptone_2 \rightsquigarrow \procone_2}{\cthree : \typthree}$ with $\procone_2 \xrightarrow{\overline{\inl{\cone}{}}} \proctwo_2$
\end{varenumerate}
Then
\begin{varenumerate}
 \item $\Pcut{\ptone_1}{\cone}{\ptone_2} \cpredequ\cred\cpredequ \ptone $ for some $\ptone$;
 \item $\tyg{\conone_1, \conone_2}{\contwo}{\conthree_1, \conthree_2}{\ptone \rightsquigarrow \procthree}{\cthree : \typthree}$ for some $\procthree \cequ \rest{\cone}{\para{(\proctwo_1}{\proctwo_2)}}$. 
\end{varenumerate}
\end{lemma}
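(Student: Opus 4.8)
The plan is to follow exactly the argument already used for Lemma~\ref{lemma:tens}, proceeding by simultaneous induction on the structure of the two derivations $\ptone_1$ and $\ptone_2$. The guiding observation is that the hypotheses, together with Lemma~\ref{lem:visible}, already determine the logical rules responsible for the two dual actions: since $\widehat{\ptone_1}$ performs $\inl{\cone}{}$ with subject $\cone$ of type $\typone\with\typtwo$ on the right, the rule introducing this $\with$ in $\ptone_1$ must be $\Rwith$; dually, since $\widehat{\ptone_2}$ performs $\overline{\inl{\cone}{}}$ with $\cone:\typone\with\typtwo$ in its linear context, the rule acting on $\cone$ in $\ptone_2$ must be $\Lwithone$ (the \emph{left} injection, not $\Lwithtwo$, precisely because the label is $\overline{\inl{\cone}{}}$). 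The whole task is then to bring these two rules to the roots of their respective derivations, where they interact through the computational reduction $(\cut/\Rwith/\Lwithone)$ of Figure~\ref{fig:comred}.

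First I would dispatch the principal case, in which $\ptone_1=\PRwith{\ptfour}{\ptfive}$ and $\ptone_2=\PLwithone{\cone}{\ctwo}{\ptthree}$ act on $\cone$ at the root. The premises give $\tyg{\conone_1}{\contwo}{\conthree_1}{\ptfour}{\cone:\typone}$ and $\tyg{\conone_2}{\contwo}{\conthree_2,\cone:\typone}{\ptthree}{\cthree:\typthree}$, while the action of $\widehat{\ptone_1}$ selects the left branch $\widehat{\ptfour}=\proctwo_1$ and that of $\widehat{\ptone_2}$ continues as $\widehat{\ptthree}=\proctwo_2$. Applying the reduction rule yields $\ptone=\Pcut{\ptfour}{\cone}{\ptthree}$, which the $\cut$ rule types as $\tyg{\conone_1,\conone_2}{\contwo}{\conthree_1,\conthree_2}{\ptone}{\cthree:\typthree}$; reading off the extracted process gives $\widehat{\ptone}=\rest{\cone}{(\para{\proctwo_1}{\proctwo_2})}$, exactly as required, with empty $\cpredequ$ prefix and suffix around the single $\cred$ step.

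The inductive cases are those in which the last rule of $\ptone_1$ or of $\ptone_2$ is not the principal one for $\cone$. When this rule is a left rule on a channel other than $\cone$ (such as $\Lone$, $\Lbangb$, $\Lbangd$), a $\cut$, or an exponential cut, I would use the matching commuting conversion or structural equivalence of Figure~\ref{fig:equred} (or, for the $\Rbang$/$\Lbangb$--$\Lbangd$ situations, a shift reduction of Figure~\ref{fig:shiftred}) to permute the outermost $\cut$ past that rule. Since the permuted rule does not touch $\cone$, the two dual actions survive and the exposed inner cut still satisfies the hypotheses of the lemma on strictly smaller derivations, so the induction hypothesis applies; reattaching the permuted rule on top and using closure of $\cred$, $\cpred$, $\cequ$ under proof contexts reassembles the conclusion. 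Because every permutation step lies in $\cpredequ$ and $\cpredequ$ is already reflexive--transitive, these extra steps are absorbed into the prefix and suffix, leaving a single $\cred$ in the middle.

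The main obstacle will be the cases where an exponential cut ($\cutb$ or $\cutd$) sits above the principal cut and must be commuted past it. The relevant equivalences in Figure~\ref{fig:equred}---in particular the strong bisimilarities $(\cutd/-/\cut)$ and its companions---\emph{duplicate} an entire subderivation. One must therefore check both that the duplicated copies remain well typed (which holds because their premises are closed in the linear and multiplexor contexts, as demanded by the premises of $\cutb$ and $\cutd$) and that the process extracted from the reassembled proof is no longer merely structurally congruent to $\rest{\cone}{(\para{\proctwo_1}{\proctwo_2})}$ but only \emph{strongly bisimilar} to it---which is exactly why the conclusion is stated up to $\cequ$. Keeping the induction measure decreasing in the presence of these duplications, and threading the $\cpredequ$ steps correctly around the single $\cred$ step, is the delicate bookkeeping; it is routine but identical in shape to the $\tens$ case of Lemma~\ref{lemma:tens}, whence the omission there.
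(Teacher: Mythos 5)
Your proposal is correct and takes essentially the same route as the paper: the paper's proof of Lemma~\ref{lemma:with} is just a pointer to the proof of Lemma~\ref{lemma:tens}, namely a simultaneous induction on $\ptone_1,\ptone_2$ (itself only sketched, by reference to the analogous \pidill\ argument of Caires and Pfenning), with the principal case settled by the $(\cut/\Rwith/\Lwithone)$ computational reduction and the non-principal cases by cut permutations drawn from the shift and equivalence rules. Your write-up merely supplies the details the paper omits; the decomposition and key steps coincide.
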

\begin{proof}
See the proof of Lemma \ref{lemma:tens}.
\end{proof}
\begin{lemma}\label{lemma:plus}
 Assume 
\begin{varenumerate}
\item $\tyg{\conone_1}{\contwo}{\conthree_1}{\ptone_1 \rightsquigarrow \procone_1}{\cone : \typone \plus \typtwo}$ with $\procone_1 \xrightarrow{\overline{\inl{\cone}{}}} \proctwo_1$.
\item $\tyg{\conone_2}{\contwo}{\conthree_2,\cone : \typone \plus \typtwo}{\ptone_2 \rightsquigarrow \procone_2}{\cthree : \typthree}$ with $\procone_2 \xrightarrow{\inl{\cone}{}} \proctwo_2$.
\end{varenumerate}
Then
\begin{varenumerate}
 \item $\Pcut{\ptone_1}{\cone}{\ptone_2} \cpredequ\cred\cpredequ \ptone $ for some $\ptone$;
 \item $\tyg{\conone_1, \conone_2}{\contwo}{\conthree_1, \conthree_2}{\ptone \rightsquigarrow \procthree}{\cthree : \typthree}$ for some $\procthree \cequ \rest{\cone}{\para{(\proctwo_1}{\proctwo_2)}}$. 
\end{varenumerate}
\end{lemma}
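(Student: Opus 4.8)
The plan is to follow the proof of Lemma~\ref{lemma:tens} essentially verbatim, arguing by simultaneous induction on the structure of the two derivations $\ptone_1$ and $\ptone_2$; indeed, since $\plus$ is an additive connective, the argument is the exact dual of the one for $\with$ given in Lemma~\ref{lemma:with}, with the roles of the offering and the using side interchanged. The generation Lemma~\ref{lem:visible} keeps the induction under control: part~(9), applied to $\procone_1\xrightarrow{\overline{\inl{\cone}{}}}\proctwo_1$ with right-hand type $\cone:\typone\plus\typtwo$, forces the action on $\cone$ to be a left injection emitted by an $\Rplusone$ occurrence, while part~(10), applied to $\procone_2\xrightarrow{\inl{\cone}{}}\proctwo_2$ with $\cone:\typone\plus\typtwo\in\conthree_2$, forces $\cone$ to be consumed by an $\Lplus$ occurrence that branches left. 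Modulo the shift and equivalence steps that permute rules acting on channels other than $\cone$, these are the last rules of the two derivations.

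Accordingly, the principal case is the one in which $\ptone_1$ ends in $\Rplusone$ and $\ptone_2$ ends in $\Lplus$, both with subject $\cone$. Writing $\ptone_1=\PRplusone{\pttwo}$, where $\widehat{\pttwo}=\proctwo_1$ is typed with $\cone:\typone$, and letting $\ptthree$ be the left branch of the $\Lplus$ instance in $\ptone_2$, so that $\widehat{\ptthree}=\proctwo_2$ is typed with $\cone:\typone$ on the left, I would apply the single computational step $(\cut/\Rplusone/\Lplus)$ of Figure~\ref{fig:comred}, obtaining
$$
\Pcut{\ptone_1}{\cone}{\ptone_2}\;\cred\;\Pcut{\pttwo}{\cone}{\ptthree}=:\ptone .
$$
The premises of $\Rplusone$ and $\Lplus$ make this $\cut$ well typed, yielding $\tyg{\conone_1,\conone_2}{\contwo}{\conthree_1,\conthree_2}{\ptone}{\cthree:\typthree}$, and on the process side $\widehat{\ptone}=\rest{\cone}{(\para{\proctwo_1}{\proctwo_2})}$, exactly as required by claim~(2); here the sequence of claim~(1) collapses to a single $\cred$, the flanking $\cpredequ$'s being identities.

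In every other case the last rule of $\ptone_1$ or of $\ptone_2$ acts on a channel different from $\cone$, or is a cut or an exponential rule. In each such case I would use the appropriate commuting conversion, structural conversion, or strong bisimilarity of Figure~\ref{fig:equred} (and, when an exponential box is opened, the shift rules of Figure~\ref{fig:shiftred}) to permute the outermost cut inwards, past the offending rule; this permutation lives inside $\cpredequ$. Because the permuted rule does not touch $\cone$, the transition on $\cone$ remains enabled in the smaller subderivation, so the induction hypothesis applies and returns an inner proof of the shape $\cpredequ\cred\cpredequ$; reattaching the outer rule and absorbing the leading permutation into the first $\cpredequ$---using that $\cpredequ=(\cpred\cup\cequ)^*$ is transitive and that the rewriting relations are closed under proof-term contexts---gives the result. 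I expect the only real difficulty to lie in the commuting cases that meet the exponential cuts $\cutb$ and $\cutd$: the strong bisimilarities such as $(\cutd/-/\cut)$ and $(\cutd/-/\cutd)$ may duplicate a subderivation, so one must check that exactly one genuine $\cred$ step survives the permutations and that the extracted process remains $\cequ$-equivalent to $\rest{\cone}{(\para{\proctwo_1}{\proctwo_2})}$. As these cases are handled identically in Lemma~\ref{lemma:tens}, and dually in Lemma~\ref{lemma:with}, no new phenomenon arises for $\plus$.
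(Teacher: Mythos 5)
Your proposal is correct and takes essentially the same approach as the paper: the paper's own proof of this lemma is just a pointer to Lemma~\ref{lemma:tens}, i.e.\ a simultaneous induction on $\ptone_1,\ptone_2$ (inherited from the \pidill{} argument of Caires and Pfenning), with the principal case discharged exactly by the computational step $(\cut/\Rplusone/\Lplus)$ and the non-principal cases ($\Lone$, $\Lbangb$, $\Lbangd$, and the three cuts) handled by permuting the outer cut inwards via the equivalences of Figure~\ref{fig:equred} before invoking the induction hypothesis. Since the paper omits all details, your write-up is a faithful and more explicit rendering of the intended argument.
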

\begin{proof}
See the proof of Lemma \ref{lemma:tens}.
\end{proof}

\begin{lemma}\label{lemma:cutbang}
 Assume 
\begin{varenumerate}
\item $\tyg{\conone_1}{\emcon}{\emcon}{\ptone_1 \rightsquigarrow \procone_1}{\cone : \typone}$
\item $\tyg{\conone_2, \cone : \typone}{\contwo}{\conthree}{\ptone_2 \rightsquigarrow \procone_2}{\cthree : \typthree}$ with $\procone_2 \xrightarrow{\overline{\rest{\ctwo}{\outsc{\cone}{\ctwo}}}} \proctwo_2$
\end{varenumerate}
Then
\begin{varenumerate}
\item $\Pcutb{\ptone_1}{\cone}{\ptone_2} \cpredequ\cred\cpredequ \Pcutd{\ptone_1}{\cone}{\ptone} $ 
  for some $\ptone$ where $\cone \notin FV(\widehat{\ptone})$;
\item $\tyg{\conone}{\confour}{\conthree}{\ptone \rightsquigarrow \procthree}{\cthree : \typthree}$
  for some $\procthree \cequ \rest{\ctwo}{\para{(\procone_1}{\proctwo_2)}}$, where 
  $\conone, \confour =\conone_1, \conone_2, \cone : \typone, \contwo$.
\end{varenumerate}
\end{lemma}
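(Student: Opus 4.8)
The plan is to argue by induction on the structure of the derivation $\ptone_2$, tracking how the observable action $\overline{\rest{\ctwo}{\outsc{\cone}{\ctwo}}}$ is produced. Since $\cone:\typone$ sits in the \emph{auxiliary} context of $\ptone_2$, Lemma~\ref{lem:visible}(12) already fixes this action to be a server call of exactly that shape, and, recalling that every auxiliary name occurs free exactly once in the extracted process (counting the two branches of a $\casem$ as one occurrence), it is generated by a unique instance of the rule $\bemb$ whose subject is $\cone$. The argument then amounts to locating that instance and driving the $\cutb$ down to it.

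For the base case, assume the last rule of $\ptone_2$ is $\bemb$ on $\cone$, so that $\ptone_2=\Pbemb{\cone}{\ctwo}{\pttwo'}$ with a premise $\tyg{\conone_2}{\contwo}{\conthree,\ctwo:\typone}{\pttwo'}{\cthree:\typthree}$ and $\widehat{\pttwo'}=\proctwo_2$. First I would fire the computational reduction $(\cutb/{-}/\bemb)$ of Figure~\ref{fig:comred}, obtaining $\Pcutb{\ptone_1}{\cone}{\ptone_2}\cred\Pcut{\lift{\ptone_1}}{\ctwo}{\Pcutd{\ptone_1}{\cone}{\lift{\pttwo'}}}$. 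Then I would reshape the residual: using the degenerate equivalence $(\cutd/{-}/{-}_{0})$ (which identifies a $\cutd$ whose right component does not mention the cut name with that component alone) together with the strong bisimilarity $(\cutd/{-}/\cut)$ of Figure~\ref{fig:equred}, the two residual cuts merge under a single $\cutd$ on $\cone$, yielding $\Pcutd{\ptone_1}{\cone}{\ptone}$ with $\ptone\cequ\Pcut{\lift{\ptone_1}}{\ctwo}{\lift{\pttwo'}}$. Its extracted process is $\rest{\ctwo}{(\para{\procone_1}{\proctwo_2})}$, in which $\cone$ no longer occurs free, as required. For the typing of $\ptone$ I would combine the Lifting Lemma~\ref{liftlemma} and the Weakening Lemma~\ref{weaklemma}: lifting turns the once-only auxiliary dependencies carried by the spawned server copy into multiplexable ones, while weakening reconciles the (additively treated) multiplexor contexts of the two premises of the residual linear $\cut$ and reinserts the now-inactive $\cone:\typone$, so that the exponential zones split as $\conone,\confour=\conone_1,\conone_2,\cone:\typone,\contwo$.

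For the inductive step, the last rule $r$ of $\ptone_2$ is not the $\bemb$ on $\cone$. Because the $\cone$-output must be exposed at top level, $r$ cannot prefix it, so $r$ is one of $\Lone$, $\Lbangb$, $\Lbangd$, $\cut$, $\cutb$, $\cutd$, and $\cone:\typone$ lies in the auxiliary context of exactly one premise of $r$. I would commute the $\cutb$ into that premise using the matching equivalence of Figure~\ref{fig:equred}---one of $(\cutb/{-}/\Lone)$, $(\cutb/{-}/\Lbangb)$, $(\cutb/{-}/\Lbangd)$, $(\cutb/{-}/\cut_1)$, $(\cutb/{-}/\cut_2)$, $(\cutb/{-}/\cutb)$ or $(\cutb/{-}/\cutd)$---apply the induction hypothesis to the resulting smaller $\cutb$ redex, and then rebuild $r$ around the $\cutd$ it delivers. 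Because $\cone$ occurs only once, every commutation routes the cut into a single premise and never duplicates $\ptone_1$, in keeping with the soft discipline; moreover, composing the outer equivalences with the $\cpredequ\cred\cpredequ$ chain supplied by the hypothesis again has the shape $\cpredequ\cred\cpredequ$.

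The step I expect to be the main obstacle is the bookkeeping of the two exponential zones. Unlike the purely multiplicative and additive lemmas (\ref{lemma:tens}--\ref{lemma:plus}), firing a single call genuinely re-partitions the exponential contexts: the dependencies carried by the spawned server must migrate into the multiplexor, a persistent but now dead copy of the server has to be retained inside the outer $\cutd$ (preserving the proof-term shape that the bounded-interaction argument later relies on), and $\cone:\typone$ survives in the multiplexor even though $\cone\notin FV(\widehat{\ptone})$. Getting the Lifting and Weakening Lemmas to cooperate so that every intermediate term typechecks with the precise split $\conone,\confour=\conone_1,\conone_2,\cone:\typone,\contwo$, while at the same time certifying $\procthree\cequ\rest{\ctwo}{(\para{\procone_1}{\proctwo_2})}$ on the process side, is the delicate part; the individual reduction and equivalence steps are routine once the contexts are pinned down.
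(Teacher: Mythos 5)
Your proposal is correct and follows essentially the same route as the paper's own proof: induction on $\ptone_2$, with the base case $\ptone_2=\Pbemb{\cone}{\ctwo}{\pttwo'}$ handled by firing $(\cutb/-/\bemb)$ and then permuting the residual $\cutd$ on $\cone$ outward (the paper does this in a single step via the structural conversion $(\cut/-/\cutd)$ rather than your $(\cutd/-/-_{0})$ plus $(\cutd/-/\cut)$ detour, but both land on the same witness $\ptone=\Pcut{\lift{\ptone_1}}{\ctwo}{\lift{\pttwo'}}$), and the inductive cases handled by commuting the $\cutb$ into the premise carrying $\cone$, invoking the induction hypothesis, and re-extruding the delivered $\cutd$ (exactly as the paper does with $(\cutb/-/\cutd)$ followed by $(\cutd/-/\cutd)_0$ in its written-out $\cutd$ case). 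Your remarks on the context bookkeeping via the Lifting and Weakening Lemmas also match the paper's (largely implicit) treatment.
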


\begin{proof}
By  induction on $ \ptone_2$. We have different cases, depending from the last rules of $\ptone_2$. 
Let us just write down some relevant case:
\begin{varitemize}
\item 
  Suppose $\ptone_2 = \Pbemb{\cone}{\ctwo}{\pttwo}$; then  $\procone_2 \cequ \rest{\ctwo}{\outc{\cone}{\ctwo}{\proctwo_2}}$ and 
  $\tyg{ \conone_2, \cone : \typone}{\contwo}{\conthree}{\pttwo \rightsquigarrow \proctwo_2}{\cthree : \typthree}$ by inversion. Now 
  $ \Pcutb{\ptone_1}{\cone}{\Pbemb{\cone}{\ctwo}{\pttwo}} \cred \Pcut{\ptone_{1 \Downarrow }}{\ctwo}{\Pcutd{\ptone_1}{\cone}{\pttwo_{ \Downarrow}}}$ 
  by ($\cutb/- /\bemb$) $\cequ \Pcutd{\ptone_1}{\cone}{\Pcut{\ptone_{1 \Downarrow}}{\ctwo}{\pttwo_{ \Downarrow}}}$ by ($\cut/- /\cutd$). We pick 
  $D = \Pcut{\ptone_{1 \Downarrow}}{\ctwo}{\pttwo_{ \Downarrow}}$; then $\tyg{\conone}{\confour}{\conthree}{\ptone \rightsquigarrow \proctwo_2}{\cthree : \typthree}$ 
  for some $\proctwo_2 \cequ \rest{\ctwo}{\para{(\procone_1}{\proctwo_2)}}$,  where $\conone, \confour = \conone_1, \conone_2, \cone : \typone, \contwo$.
\item 
  Suppose $\ptone_2 = \Pcutd{\pttwo_1}{\ctwo}{\pttwo_2}$; then $\tyg{\contwo}{\emcon}{\emcon}{\pttwo_1 \rightsquigarrow \procthree_1}{\cfour : \typthree}$ 
  and $\tyg{ \conone_2, \cone : \typone}{\contwo}{\conthree}{\pttwo_2 \rightsquigarrow \procthree_2}{\cthree : \typtwo}$ with 
  $\procone_2 \xrightarrow{\overline{\rest{\ctwo}{\outsc{\cone}{\ctwo}}}} \para{\procthree_1}{\procthree'_2}$,  by inversion.
  Now by induction hypothesis, $\Pcutb{\ptone_1}{\cone}{\pttwo_2} \cpredequ\cred\cpredequ \Pcutd{\ptone_1}{\cone}{\ptthree} $ for some 
  $\ptthree$ (where $\cone \notin FV(\widehat{\ptthree})$), and $\tyg{\conone}{\confour}{\conthree_2}{\ptthree \rightsquigarrow \procfour}{\cthree : \typtwo}$ 
  for some $\procfour = \rest{\ctwo}{\para{(\procone_1}{\procthree'_2)}}$. $\Pcutb{\ptone_1}{\cone}{\Pcutd{\pttwo_1}{\ctwo}{\pttwo_2}} \cequ 
  \Pcutd{\pttwo_1}{\ctwo}{\Pcutb{\ptone_1}{\cone}{\pttwo_2}}$ by $(\cutb/-/\cutd)$, $\cpredequ\cred\cpredequ \Pcutd{\pttwo_1}{\ctwo}{\Pcutd{\ptone_1}{\cone}{\ptthree}}$  
  by congruence, $\cequ \Pcutd{\ptone_1}{\cone}{\Pcutd{\pttwo_1}{\ctwo}{\ptthree}}$ by $(\cutd/-/\cutd)_0$. Pick $D = \Pcutd{\pttwo_1}{\ctwo}{\ptthree}$. 
  Then $\procthree=  \rest{\ctwo}{\para{\procthree_1}{\procfour}}$ by cut, and $\tyg{\conone}{\confour}{\conthree}{\ptone \rightsquigarrow \procthree}{\cthree : \typthree}$ 
  for some $\procthree \cequ \rest{\ctwo}{\para{(\procone_1}{\proctwo_2)}}$. 
\end{varitemize}
This concludes the proof.
\end{proof}

\begin{corollary}\label{cor:cutb}
 Assume 
\begin{varenumerate}
\item $\tyg{\conone_1}{\emcon}{\emcon}{\ptone_1 \rightsquigarrow \procone_1}{\cone : \typone}$
\item $\tyg{\conone_2, \cone : \typone}{\contwo}{\conthree}{\ptone_2 \rightsquigarrow \procone_2}{\cthree : \typthree}$ with $\procone_2 \xrightarrow{\overline{\rest{\ctwo}{\outsc{\cone}{\ctwo}}}} \proctwo_2$
\end{varenumerate}
Then
\begin{varenumerate}
 \item $\Pcutb{\ptone_1}{\cone}{\ptone_2} \cpredequ\cred\cpredequ \ptone $ for some $\ptone$;
 \item $\tyg{\conone}{\confour}{\conthree}{\ptone \rightsquigarrow \procthree}{\cthree : \typthree}$ for some $\procthree \cequ \rest{\cone}{(\binc{\cone}{\ctwo}{\para{\procone_1}{\rest{\ctwo}{\para{(\procone_1}{\proctwo_2))}}}}}$,  where $\conone, \confour =\conone_1, \conone_2, \contwo$ 
\end{varenumerate}

\end{corollary}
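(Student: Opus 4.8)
The plan is to obtain the corollary from Lemma~\ref{lemma:cutbang} by making explicit the typing of the \emph{outer} $\cutd$ that the lemma keeps implicit in its first conclusion. First I would apply Lemma~\ref{lemma:cutbang} to the two hypotheses, which yields a proof term $\ptthree$ such that $\Pcutb{\ptone_1}{\cone}{\ptone_2}\cpredequ\cred\cpredequ\Pcutd{\ptone_1}{\cone}{\ptthree}$ with $\cone\notin FV(\widehat{\ptthree})$, a derivation $\tyg{\conone}{\confour}{\conthree}{\ptthree}{\cthree:\typthree}$ whose combined exponential context is $\conone,\confour=\conone_1,\conone_2,\cone:\typone,\contwo$, and whose underlying process satisfies $\widehat{\ptthree}\cequ\rest{\ctwo}{(\para{\procone_1}{\proctwo_2})}$. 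I would then take the witness of the corollary to be $\ptone:=\Pcutd{\ptone_1}{\cone}{\ptthree}$, so that the first conclusion of the corollary is literally the first conclusion of the lemma.

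It then remains to type $\ptone=\Pcutd{\ptone_1}{\cone}{\ptthree}$ and to compute its process. For the process I would simply read off Figure~\ref{fig:typextr}: the $\cutd$ constructor gives $\widehat{\ptone}=\rest{\cone}{(\para{\binc{\cone}{\ctwo}{\procone_1}}{\widehat{\ptthree}})}$, and substituting $\widehat{\ptthree}\cequ\rest{\ctwo}{(\para{\procone_1}{\proctwo_2})}$ recovers the announced process up to $\cequ$. For the typing I would apply the rule $\cutd$ with hypothesis~1, namely $\tyg{\conone_1}{\emcon}{\emcon}{\ptone_1}{\cone:\typone}$, as the server (left) premise and $\ptthree$ as the right premise, the rule consuming the assumption $\cone:\typone$ carried by $\ptthree$; reading off its conclusion gives the combined exponential context $\conone_1,\conone_2,\contwo$ required by the statement.

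The delicate point, and the only one that is not mechanical, is the bookkeeping of the two exponential zones in this $\cutd$ step. The rule $\cutd$ rigidly demands that the auxiliary context of the server coincide with the multiplexor context of the conclusion and that $\cone:\typone$ live in the multiplexor of $\ptthree$; concretely, it forces $\ptthree$ to be presented with $\conone_1,\cone:\typone$ in its multiplexor zone and with $\conone_2,\contwo$ in its auxiliary zone. Lemma~\ref{lemma:cutbang}, however, pins down only the \emph{combined} context $\conone_1,\conone_2,\cone:\typone,\contwo$ of $\ptthree$ and leaves the split unspecified, so I would have to reshape that split into the one the rule requires, using the Lifting lemma~\ref{liftlemma} and the Weakening lemma~\ref{weaklemma}. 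Because contraction---hence free enlargement---is forbidden on the auxiliary zone (this is exactly the invariant that every auxiliary name occurs once in the underlying process), this reshaping cannot be carried out blindly and is where essentially all the work of the corollary sits; once the split is aligned with the two premises, the $\cutd$ rule applies and the remaining verifications (shape of the process and identity of the combined context) are immediate.
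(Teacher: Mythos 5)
Your skeleton is the right one, and it is what the paper's one-line proof (``Follows from Lemma~\ref{lemma:cutbang}'') intends: take $\ptone=\Pcutd{\ptone_1}{\cone}{\ptthree}$, where $\ptthree$ is the witness produced by the lemma, read the process off Figure~\ref{fig:typextr}, and check that the combined exponential context of the conclusion is $\conone_1,\conone_2,\contwo$. The gap is in the way you propose to discharge the typing step. You correctly observe that the rule $\cutd$ needs $\ptthree$ to carry exactly $\conone_1,\cone:\typone$ in its multiplexor zone and $\conone_2,\contwo$ in its auxiliary zone, but the ``reshaping'' you plan is impossible with the tools you invoke: Lemma~\ref{liftlemma} moves assignments only \emph{out of} the auxiliary zone \emph{into} the multiplexor zone (and does so wholesale), and Lemma~\ref{weaklemma} only \emph{enlarges} the multiplexor zone. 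Neither can ever move $\conone_2,\contwo$ \emph{into} the auxiliary zone. So if the lemma handed you a witness with, say, $\contwo$ sitting in its multiplexor zone --- which is exactly the kind of witness its proof constructs, since it assembles $\ptthree$ from lifted subterms such as $\lift{\ptone_1}$ and $\lift{\pttwo}$, whose auxiliary zones are empty --- your plan would stall at precisely the point you describe as ``where essentially all the work sits''; there is no converse of lifting in this system, precisely because auxiliary names must occur exactly once in the underlying process.

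The repair is to notice that no reshaping is needed at all. Proof terms are just a linear notation for type derivations, so the first conclusion of Lemma~\ref{lemma:cutbang} --- which asserts that the reduction ends at the \emph{proof term} $\Pcutd{\ptone_1}{\cone}{\ptthree}$ --- already guarantees that this outermost $\cutd$ is a legal rule instance, i.e., that $\ptthree$ comes equipped with exactly the zone split the rule requires; you may then simply read the corollary's judgment off the conclusion of that instance (auxiliary $\conone_2,\contwo$, multiplexor $\conone_1$, hence combined exponential context $\conone_1,\conone_2,\contwo$ as claimed). Alternatively, since the corollary's hypotheses are exactly the premises of the rule $\cutb$, the source term is typed as $\tyg{\conone_1,\conone_2}{\contwo}{\conthree}{\Pcutb{\ptone_1}{\cone}{\ptone_2}}{\cthree:\typthree}$, and the typing of anything reached by $\cpredequ\cred\cpredequ$ follows from the preservation statements (Propositions~\ref{prop:creddecr}, \ref{prop:cpreddecr} and \ref{prop:cequun}), which keep the linear context, the conclusion and the \emph{combined} exponential context invariant --- and the combined context is all the corollary asserts.
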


\begin{proof}
 Follows from Lemma \ref{lemma:cutbang}.
\end{proof}

\begin{lemma}\label{lemma:cutd}
 Assume 
\begin{varenumerate}
\item $\tyg{\contwo}{\emcon}{\emcon}{\ptone_1 \rightsquigarrow \procone_1}{\cone : \typone}$
\item $\tyg{\conone}{\contwo, \cone : \typone}{\conthree}{\ptone_2 \rightsquigarrow \procone_2}{\cthree : \typthree}$ with $\procone_2 \xrightarrow{\overline{\rest{\ctwo}{\outsc{\cone}{\ctwo}}}} \proctwo_2$
\end{varenumerate}
Then :

\begin{varenumerate}
 \item $\Pcutd{\ptone_1}{\cone}{\ptone_2} \cpredequ\cred\cpredequ \Pcutd{\ptone_1}{\cone}{\ptone} $ for some $\ptone$;
 \item $\tyg{\confour}{\confive, \cone : \typone}{\conthree}{\ptone \rightsquigarrow \procthree}{\cthree : \typthree}$ for some $\procthree \cequ \rest{\cone}{\rest{\ctwo}{\para{(\procone_1}{\proctwo_2)}}}$, where $\confour, \confive =\conone, \contwo$. 
\end{varenumerate}

\end{lemma}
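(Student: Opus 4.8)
The plan is to argue by induction on the structure of the proof term $\ptone_2$, case-splitting on its last rule, exactly as in the proof of Lemma~\ref{lemma:cutbang}. Since $\cone$ occurs in the \emph{multiplexor} context of $\ptone_2$ and the observed action $\overline{\rest{\ctwo}{\outsc{\cone}{\ctwo}}}$ has subject $\cone$, Lemma~\ref{lem:visible} tells us that the only action available to $\procone_2$ on the subject $\cone$ is of precisely this shape; tracing where it can originate in the derivation, the only generating rule is a $\bemd$ acting on $\cone$. Every remaining last rule is either ``transparent'' at the process level ($\Lone,\Lbangb,\Lbangd$), or a cut, or else it would force the first action of $\procone_2$ to have a subject different from $\cone$ (the prefixes coming from $\Lten$, $\Rten$, $\Rlin$, $\Rbang$, or a $\bemd$/$\bemb$ on another name), contradicting the hypothesis and hence not occurring. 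The transparent rules and the cuts give the inductive cases.

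For the base case $\ptone_2=\Pbemd{\cone}{\ctwo}{\pttwo}$ we have $\procone_2\cequ\rest{\ctwo}{\outc{\cone}{\ctwo}{\proctwo_2}}$ and, by inversion, $\tyg{\conone}{\contwo,\cone:\typone}{\conthree,\ctwo:\typone}{\pttwo\rightsquigarrow\proctwo_2}{\cthree:\typthree}$. I would first fire the computational reduction $(\cutd/{-}/\bemd)$, namely $\Pcutd{\ptone_1}{\cone}{\Pbemd{\cone}{\ctwo}{\pttwo}}\cred\Pcut{\ptone_{1\Downarrow}}{\ctwo}{\Pcutd{\ptone_1}{\cone}{\pttwo}}$, which spawns one fresh copy $\ptone_{1\Downarrow}$ of the server on the transmitted name $\ctwo$ while $\cutd$-reinstalling the original server on $\cone$ for the continuation. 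I would then apply the structural conversion $(\cut/{-}/\cutd)$ to pull the persistent $\cutd$ back to the outside, obtaining $\Pcutd{\ptone_1}{\cone}{\Pcut{\ptone_{1\Downarrow}}{\ctwo}{\pttwo}}$. Setting $\ptone=\Pcut{\ptone_{1\Downarrow}}{\ctwo}{\pttwo}$ and weakening the multiplexor context of $\ptone_{1\Downarrow}$ so as to contain $\cone:\typone$ (Lemma~\ref{weaklemma}), so that the inner $\cut$ is well typed, yields $\tyg{\conone}{\contwo,\cone:\typone}{\conthree}{\ptone}{\cthree:\typthree}$; reading off the process gives $\widehat{\ptone}\cequ\rest{\cone}{\rest{\ctwo}{\para{(\procone_1}{\proctwo_2)}}}$ modulo structural congruence. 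Crucially, the outer $\cutd$ is retained --- unlike the $\cutb$ situation of Lemma~\ref{lemma:cutbang}, where $(\cutd/{-}/{-}_0)$ erases the spent server --- because $\cone$ may still be invoked later by $\proctwo_2$.

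In an inductive case the last rule $r$ of $\ptone_2$ acts on a name other than $\cone$, so the $\cone$-output is fired inside one subderivation. The recipe is to permute the outer $\cutd$ through $r$, apply the induction hypothesis to the subderivation that performs the action, and then run the permutation backwards to restore the shape $\Pcutd{\ptone_1}{\cone}{\ptone}$. The transparent left rules are handled by the commuting conversions $(\cutd/{-}/\Lone)$, $(\cutd/{-}/\Lbangb)$, $(\cutd/{-}/\Lbangd)$, and the three cut rules by the strong bisimilarities $(\cutd/{-}/\cut)$, $(\cutd/{-}/\cutd)$ and $(\cutd/{-}/\cutb)$, precisely as the $\cutd$ subcase is treated for Lemma~\ref{lemma:cutbang}. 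Concatenating the $\cequ$-steps around the single $\cred$ supplied by the induction hypothesis keeps the whole chain inside $\cpredequ\cred\cpredequ$.

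The main obstacle, and the genuine difference from Lemma~\ref{lemma:cutbang}, is the \emph{duplicating} nature of $\cutd$. Because $\cone$ lives in the additively-treated multiplexor context, pushing the $\cutd$ past a binary cut copies the server $\ptone_1$ into both premises: this is exactly what $(\cutd/{-}/\cut)$ and $(\cutd/{-}/\cutd)$ encode, and it is why the outer $\cutd$ survives in the conclusion. I expect the delicate points to be, first, verifying that these duplications preserve typing --- which they do, since a shared multiplexor context may be freely copied and, where necessary, weakened via Lemma~\ref{weaklemma} --- and, second, tracking the process-level effect, which is sound only up to \emph{strong bisimilarity}, so that the resulting process $\procthree$ is pinned down only modulo $\cequ$. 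This book-keeping, together with the check that only the $\bemd$-on-$\cone$ case can emit the required action, is what makes the argument more intricate than its auxiliary-context counterpart.
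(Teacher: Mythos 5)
Your proposal is correct and follows essentially the same route as the paper: induction on the structure of $\ptone_2$, with the $\bemd$-on-$\cone$ base case handled by the computational step $(\cutd/-/\bemd)$ followed by the structural conversion $(\cut/-/\cutd)$ (and weakening to fix up the multiplexor contexts), and the cut cases handled by permuting the outer $\cutd$ inward via the strong bisimilarities $(\cutd/-/\cut)$, $(\cutd/-/\cutd)$, $(\cutd/-/\cutb)$, applying the induction hypothesis by congruence, and permuting back. Your write-up is in fact somewhat more complete than the paper's, which only displays the $\cut$ and $\cutd$ cases, and your observations about why the other rules are vacuous and why the outer $\cutd$ must be retained (unlike in Lemma~\ref{lemma:cutbang}) are accurate.
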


\begin{proof}

By  induction on $ \ptone_2$. We have different cases, depending from the last rules of $\ptone_2$. Let us just write down some relevant cases:
\begin{varitemize}
\item 
  $\ptone_2 = \Pcut{\pttwo_1}{\ctwo}{\pttwo_2}$. Assume $\conone = \conone_1, \conone_2$ and $\conthree = \conthree_1, \conthree_2$. 
  Now $\tyg{\conone_1}{\contwo, \cone : \typone}{\conthree_1}{\pttwo_1 \rightsquigarrow \procthree_1}{\cfour : \typtwo}$ and 
  $\tyg{ \conone_2}{\contwo, \cone : \typone}{\conthree, \cfour : \typtwo}{\pttwo_2 \rightsquigarrow \procthree_2}{\cthree : \typthree}$ 
  by inversion. We have two cases:either 
   $\procone_2 \xrightarrow{\overline{\rest{\ctwo}{\outsc{\cone}{\ctwo}}}} \para{\procthree'_1}{\procthree_2}$ or $\procone_2 \xrightarrow{\overline{\rest{\ctwo}{\outsc{\cone}{\ctwo}}}} \para{\procthree_1}{\procthree'_2}$.
  First case:
  $\Pcutd{\ptone_1}{\cone}{\pttwo_1} \cpredequ\cred\cpredequ \Pcutd{\ptone_1}{\cone}{\ptthree} $ for some $\ptthree$; then      
  $\tyg{\conone_1}{\contwo, \cone : \typone}{\conthree_1}{\ptthree \rightsquigarrow \procfour}{\cfour : \typtwo}$ for some 
  $\procfour = \rest{\ctwo}{\para{(\procone_1}{\procthree'_1)}}$ by induction hypothesis;
  $\Pcutd{\ptone_1}{\cone}{\Pcut{\pttwo_1}{\ctwo}{\pttwo_2}} \cequ 
  \Pcut
  {\Pcutd{\ptone_1}{\cone}{\pttwo_1}}
  {\ctwo}
  {\Pcutd{\ptone_1}{\cone}{\pttwo_2}}$ by $(\cutd/-/\cut)$, $\cpredequ\cred\cpredequ 
  \Pcut
  {\Pcutd{\ptone_1}{\cone}{\ptthree}}
  {\ctwo}
  {\Pcutd{\ptone_1}{\cone}{\pttwo_2}}$  by congruence 
  $\cequ \Pcutd{\ptone_1}{\cone}{\Pcut{\ptthree}{\ctwo}{\pttwo_2}}$ by $(\cutd/-/\cut)$. 
  Pick $D = \Pcut{\ptthree}{\ctwo}{\pttwo_2}$; then $\procthree=  \rest{\ctwo}{\para{\procfour}{\procthree_2}}$ by cut.
  Then $\tyg{\conone}{\contwo, \cone : \typone}{\conthree}{\ptone \rightsquigarrow \procthree}{\cthree : \typthree}$ for some $\procthree \cequ \rest{\ctwo}{\para{(\procone_1}{\proctwo_2)}}$. 
  Second case:
  $\Pcutd{\ptone_1}{\cone}{\pttwo_2} \cpredequ\cred\cpredequ \Pcutd{\ptone_1}{\cone}{\ptthree} $ for some $\ptthree$;
  then $\tyg{\conone_2}{\contwo, \cone : \typone}{\conthree_2}{\ptthree \rightsquigarrow \procfour}{\cfour : \typtwo}$ for some $\procfour = \rest{\ctwo}{\para{(\procone_1}{\procthree'_2)}}$ 
  by induction hypothesis;
  $\Pcutd{\ptone_1}{\cone}{\Pcut{\pttwo_1}{\ctwo}{\pttwo_2}} \cequ 
  \Pcut
  {\Pcutd{\ptone_1}{\cone}{\pttwo_1}}
  {\ctwo}
  {\Pcutd{\ptone_1}{\cone}{\pttwo_2}}$ by $(\cutd/-/\cut)$, $\cpredequ\cred\cpredequ 
  \Pcut
  {\Pcutd{\ptone_1}{\cone}{\pttwo_1}}
  {\ctwo}
  {\Pcutd{\ptone_1}{\cone}{\ptthree}}$  by congruence, $\cequ \Pcutd{\ptone_1}{\cone}{\Pcutd{\pttwo_1}{\ctwo}{\ptthree}}$ by $(\cutd/-/\cut).$
  Pick $D = \Pcutd{\pttwo_1}{\ctwo}{\ptthree}$; then $\procthree=  \rest{\ctwo}{\para{\procthree_1}{\procfour}}$ by cut.
  Then $\tyg{\conone}{\contwo, \cone : \typone}{\conthree}{\ptone \rightsquigarrow \procthree}{\cthree : \typthree}$ for some $\procthree \cequ \rest{\ctwo}{\para{(\procone_1}{\proctwo_2)}}$. 
\item 
  $\ptone_2 = \Pcutd{\pttwo_1}{\ctwo}{\pttwo_2}$. $\tyg{\contwo}{\emcon}{\emcon}{\pttwo_1 \rightsquigarrow \procthree_1}{\cfour : \typtwo}$ 
  $\tyg{ \conone}{\contwo, \cone : \typone, \cfour : \typtwo}{\conthree}{\pttwo_2 \rightsquigarrow \procthree_2}{\cthree : \typthree}$ by inversion.
  Now $\procone_2 \xrightarrow{\overline{\rest{\ctwo}{\outsc{\cone}{\ctwo}}}} \para{\procthree_1}{\procthree'_2}$; 
  $\Pcutd{\ptone_1}{\cone}{\pttwo_2} \cpredequ\cred\cpredequ \Pcutd{\ptone_1}{\cone}{\ptthree} $ for some $\ptthree$ and 
  $\tyg{\conone}{\contwo, \cone : \typone, \cfour : \typtwo}{\conthree}{\ptthree \rightsquigarrow \procfour}{\cfour : \typtwo}$ for some 
  $\procfour = \rest{\ctwo}{\para{(\procone_1}{\procthree'_2)}}$ by induction hypothesis. $\Pcutd{\ptone_1}{\cone}{\Pcutd{\pttwo_1}{\ctwo}{\pttwo_2}} \cequ 
  \Pcutd
  {\ptone_1}
  {\cone}
  {\Pcutd{\pttwo_1}{\ctwo}
    {\Pcutd{\ptone_1}{\cone}{\pttwo_2}}}$ by $(\cutd/- /\cutd)$ $\cpredequ\cred\cpredequ 
  \Pcutd
  {\ptone_1}
  {\cone}
  {\Pcutd{\pttwo_1}{\ctwo}
    {\Pcutd{\ptone_1}{\cone}{\ptthree}}}$  by congruence, $\cequ \Pcutd{\ptone_1}{\cone}{\Pcutd{\pttwo_1}{\ctwo}{\ptthree}}$ by $(\cutd/- /\cutd).$ 
  Pick $D = \Pcutd{\pttwo_1}{\ctwo}{\ptthree}$; then  $\procone_2=  \rest{\ctwo}{\para{\procthree_1}{\procfour}}$ by cut.
  Then $\tyg{\conone}{\contwo, \cone : \typone}{\conthree}{\ptone \rightsquigarrow \procthree}{\cthree : \typthree}$ for some $\procthree \cequ \rest{\ctwo}{\para{(\procone_1}{\proctwo_2)}}$. 
\end{varitemize}
This concludes the proof.
\end{proof}

\begin{corollary}\label{cor:cutd}
 Assume 
\begin{varenumerate}
\item $\tyg{\contwo}{\emcon}{\emcon}{\ptone_1 \rightsquigarrow \procone_1}{\cone : \typone}$
\item $\tyg{\conone}{\cone : \typone, \contwo}{\conthree}{\ptone_2 \rightsquigarrow \proctwo_1}{\cthree : \typthree}$ with $\proctwo_1 \xrightarrow{\overline{\rest{\ctwo}{\outsc{\cone}{\ctwo}}}} \proctwo'_1$
\end{varenumerate}
Then
\begin{varenumerate}
 \item $\Pcutd{\ptone_1}{\cone}{\ptone_2} \cpredequ\cred\cpredequ \ptone $ for some $\ptone$;
 \item $\tyg{\confour}{\confive}{\conthree}{\ptone \rightsquigarrow \proctwo_2}{\cthree : \typthree}$ for some $\proctwo_2 \cequ \rest{\cone}{(\binc{\cone}{\ctwo}{\para{\procone_1}{\rest{\ctwo}{\para{(\procone_1}{\proctwo'_1))}}}}}$, where $\confour, \confive =\conone, \contwo$.
\end{varenumerate}
\end{corollary}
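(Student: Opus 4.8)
My plan is to derive this as a direct repackaging of Lemma~\ref{lemma:cutd}, in exactly the way that Corollary~\ref{cor:cutb} is obtained from Lemma~\ref{lemma:cutbang}. The hypotheses assumed here are literally those of Lemma~\ref{lemma:cutd}, once one identifies the present $\proctwo_1$ and $\proctwo'_1$ with the $\procone_2$ and $\proctwo_2$ of that lemma. So the entire content of the argument is to invoke that lemma and then fold the residual $\cutd$ back into the witness proof term.

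First I would apply Lemma~\ref{lemma:cutd} to the two hypotheses, obtaining a proof term $\ptthree$ together with a chain $\Pcutd{\ptone_1}{\cone}{\ptone_2}\cpredequ\cred\cpredequ\Pcutd{\ptone_1}{\cone}{\ptthree}$ such that $\tyg{\confour}{\confive,\cone:\typone}{\conthree}{\ptthree\rightsquigarrow\procfour}{\cthree:\typthree}$ with $\confour,\confive=\conone,\contwo$ and $\procfour\cequ\rest{\ctwo}{(\para{\procone_1}{\proctwo'_1})}$. I would then set $\ptone:=\Pcutd{\ptone_1}{\cone}{\ptthree}$. Part~(1) is immediate, since the chain above is literally $\Pcutd{\ptone_1}{\cone}{\ptone_2}\cpredequ\cred\cpredequ\ptone$.

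For part~(2) the key observation is that $\Pcutd{\ptone_1}{\cone}{\ptthree}$ is already a well-formed proof term, being the endpoint of that chain, so its outermost $\cutd$ is a legal instance of the rule. I would read that instance off: its left premise is $\ptone_1$, typed $\tyg{\contwo}{\emcon}{\emcon}{}{\cone:\typone}$, and its right premise is $\ptthree$, whose multiplexor context is $\confive,\cone:\typone$. The shape of $\cutd$ then forces the split $\confive=\contwo$, hence $\confour=\conone$, discharges the assumption $\cone:\typone$, and produces the conclusion $\tyg{\conone}{\contwo}{\conthree}{\ptone}{\cthree:\typthree}$, that is $\tyg{\confour}{\confive}{\conthree}{\ptone}{\cthree:\typthree}$ with $\confour,\confive=\conone,\contwo$, as required. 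For the process, I would read $\widehat{\ptone}$ off the extraction clause for $\cutd$, namely $\widehat{\ptone}=\rest{\cone}{(\para{\binc{\cone}{\ctwo}{\procone_1}}{\widehat{\ptthree}})}$, and substitute the lemma's characterization of $\widehat{\ptthree}=\procfour$; up to $\cequ$ and structural congruence (in particular absorbing the restriction on the server name $\cone$, which does not occur free in the spawned session) this yields $\widehat{\ptone}\cequ\rest{\cone}{(\para{\binc{\cone}{\ctwo}{\procone_1}}{\rest{\ctwo}{(\para{\procone_1}{\proctwo'_1})}})}$, which is precisely the claimed $\proctwo_2$.

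Since this is a corollary, I expect essentially no genuine obstacle: all of the real cut-elimination work already lives in Lemma~\ref{lemma:cutd}. The only point demanding a little care is the contextual bookkeeping for the reinstated $\cutd$, namely checking that the server's auxiliary context $\contwo$ really coincides with the residual multiplexor context and that the cut name $\cone$ stays within its binder at the process level; both are guaranteed by the well-typedness of the residual term handed back by the lemma.
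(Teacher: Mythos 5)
Your proposal follows exactly the paper's route: the paper's entire proof of this corollary is the single sentence ``This follows from Lemma~\ref{lemma:cutd}'', and your unpacking---invoke the lemma, set $\ptone:=\Pcutd{\ptone_1}{\cone}{\ptthree}$, read the typing of $\ptone$ off the outermost $\cutd$ instance, and unfold the extraction clause for $\cutd$ to recover the process---is precisely the intended argument. Your observation that well-formedness of the endpoint term forces the split $\confive=\contwo$ is a legitimate way to resolve the ambiguity left by the lemma's ``$\confour,\confive=\conone,\contwo$''; in fact the cases of the lemma's own proof always return exactly $\confour=\conone$ and $\confive=\contwo$, so this step is sound.

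One parenthetical justification is wrong, however, even though it is not load-bearing in the way you arranged the argument. You claim the extra restriction on the server name can be absorbed because $\cone$ ``does not occur free in the spawned session''. This is false in general: the lemma hands back $\ptthree$ typed with $\cone:\typone$ still in its \emph{multiplexor} context precisely because the continuation $\proctwo'_1$ may perform further outputs on $\cone$---the whole point of $\cutd$ is that a client may spawn the server repeatedly. The discrepancy you are trying to patch (Lemma~\ref{lemma:cutd} literally states $\procthree\cequ\rest{\cone}{\rest{\ctwo}{(\para{\procone_1}{\proctwo'_1})}}$, while the corollary needs $\rest{\ctwo}{(\para{\procone_1}{\proctwo'_1})}$ with $\cone$ left free under the single outer restriction) is an inconsistency internal to the paper: the cases of the lemma's proof all conclude with the restriction-free form, and that is the form both this corollary and the $\cutd$ case of Theorem~\ref{theo:subjred} rely on. So the correct move is to cite the lemma as its proof actually establishes it---which is what you did when you first invoked it---rather than to argue the inner $\rest{\cone}$ is vacuous; taken literally, that inner restriction would capture the residual occurrences of $\cone$ in $\proctwo'_1$ and cut the continuation off from the server, and no structural congruence removes it.
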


\begin{proof}
 This follows from Lemma \ref{lemma:cutd}.
\end{proof}
We are finally able to give a proof of Subject Reduction for \pidsll:
\begin{proof}(of Theorem \ref{theo:subjred})
 We reason by induction on 
the structure of $\ptone$. Since $\widehat{\ptone}=\procone\reds\proctwo$ the only possible last rules of
$\ptone$ can be: $\Lone, \Lbangb, \Lbangd,$ a linear cut ($\cut$) or an exponential cut ($\cutb$ or $\cutd$).
In all the other cases, the underlying process can only perform a visible action, as can be easily
verified by inspecting the rules from Figure~\ref{fig:pidslltr}. With this observation in mind, let
us inspect the operational semantics derivation proving that $\procone\reds\proctwo$. At some point
we will find two subprocesses of $\procone$, call them $\procthree$ and $\procfour$, which
communicate, causing an internal reduction. We here claim that this can only happen 
in presence of a cut, and only the communication between $\procthree$ and $\procfour$ must
occur along the channel involved in the cut. Now, it's only a matter of showing that the
just described situation can be ``resolved'' preserving types, and this can be done using the previous lemmas.
Some relevant case:
\begin{varitemize}
\item $\ptone = \Pcutb{\ptone_1}{\cone}{\ptone_2}$; assume $\conone = \conone_1, \conone_2$ and $\procone \cequ \rest{\cone}{\binc{\cone}{\cfour}{\para{\procone_1}{\procone_2}}}$. Now $\tyg{\conone_1}{\emcon}{\emcon}{\ptone_1 \rightsquigarrow \procone_1}{\cone : \typthree}$ and $\tyg{ \conone_2, \cone : \typone}{\contwo}{\conthree}{\ptone_2 \rightsquigarrow \procone_2}{\cthree : \typone}$ ,  by inversion; from $\procone \reds \proctwo$ either $\procone_2 \reds \proctwo_2$ and $\proctwo = \rest{\cone}{\binc{\cone}{\cfour}{\para{\procone_1}{\proctwo_2}}}$ or $\procone_2 \xrightarrow{\overline{\rest{\ctwo}{\outsc{\cone}{\ctwo}}}} \proctwo_2$ and $\proctwo = \rest{\cone}{\binc{\cone}{\cfour}{\para{\procone_1}{\rest{\ctwo}{\para{\procone_1}{\proctwo_2}}}}}$.

      First case:

      $\tyg{ \conone_2, \cone : \typone}{\contwo}{\conthree}{\pttwo_2 \rightsquigarrow \proctwo_2}{\cthree : \typone}$ for some $\pttwo_2$ with  $\ptone_2 \cpredequ\cred\cpredequ \pttwo_2$ by i.h.; $\Pcutb{\ptone_1}{\cone}{\ptone_2} \cpredequ\cred\cpredequ \Pcutb{\ptone_1}{\cone}{\pttwo_2}$ by congruence. Pick $E = \Pcutb{\ptone_1}{\cone}{\pttwo_2}$; then $\tyg{ \conone}{\contwo}{\conthree}{\pttwo \rightsquigarrow \proctwo}{\cthree : \typone}$ by $\cutb$.

      Second case:

      $\Pcutb{\ptone_1}{\cone}{\ptone_2} \cpredequ\cred\cpredequ \pttwo $ for some $\pttwo$; then $\tyg{\conone}{\contwo}{\conthree}{\pttwo \rightsquigarrow \procthree}{\cthree : \typone}$ for some $\procthree \cequ \proctwo$ by Corollary \ref{cor:cutb}.

\item $\ptone = \Pcutd{\ptone_1}{\cone}{\ptone_2}$. Now, $\procone \cequ \rest{\cone}{\binc{\cone}{\cfour}{\para{\procone_1}{\procone_2}}}$ and $\tyg{\contwo}{\emcon}{\emcon}{\ptone_1 \rightsquigarrow \procone_1}{\cone : \typthree}$, $\tyg{ \conone }{\contwo, \cone : \typone}{\conthree}{\ptone_2 \rightsquigarrow \procone_2}{\cthree : \typone}$ ,  by inversion; from $\procone \reds \proctwo$ either
$\procone_2 \reds \proctwo_2$ and $\proctwo = \rest{\cone}{\binc{\cone}{\cfour}{\para{\procone_1}{\proctwo_2}}}$ or $\procone_2 \xrightarrow{\overline{\rest{\ctwo}{\outsc{\cone}{\ctwo}}}} \proctwo_2$ and $\proctwo = \rest{\cone}{\binc{\cone}{\cfour}{\para{\procone_1}{\rest{\ctwo}{\para{\procone_1}{\proctwo_2}}}}}$

      First case:

      $\tyg{ \conone}{\contwo, \cone : \typone}{\conthree}{\pttwo_2 \rightsquigarrow \proctwo_2}{\cthree : \typone}$ for some $\pttwo_2$ with  $\ptone_2 \cpredequ\cred\cpredequ \pttwo_2$ by i.h. and $\Pcutd{\ptone_1}{\cone}{\ptone_2} \cpredequ\cred\cpredequ \Pcutd{\ptone_1}{\cone}{\pttwo_2}$ by congruence. Pick $E = \Pcutd{\ptone_1}{\cone}{\pttwo_2}$; then $\tyg{ \conone}{\contwo}{\conthree}{\pttwo \rightsquigarrow \proctwo}{\cthree : \typone}$ by $\cutd$

      Second case:

      $\Pcutd{\ptone_1}{\cone}{\ptone_2} \cpredequ\cred\cpredequ \pttwo $ for some $\pttwo$; then $\tyg{\conone}{\contwo}{\conthree}{\pttwo \rightsquigarrow \procthree}{\cthree : \typone}$ for some $\procthree \cequ \proctwo$ by Corollary \ref{cor:cutd}.
\end{varitemize}
This concludes the proof.
\end{proof}
\section{Proving Polynomial Bounds}\label{sect:boundint}
In this section, we prove the main result of this paper, namely some polynomial
bounds on the length of internal reduction sequences and on
the size of intermediate results for processes typable in \pidill. In other words, 
interaction will be shown to be bounded. The simplest formulation of this
result is the following:
\begin{theorem}\label{theo:boundint}
For every type $\typone$, there is a polynomial
$\polyone_\typone$ such that whenever
$\tyg{\emcon}{\emcon}{\cone:\typone}{\ptone}{\ctwo:\unit}$
and $\tyg{\emcon}{\emcon}{\emcon}{\pttwo}{\cone:\typone}$
where $\ptone$ and $\pttwo$ are normal and 
$\rest{\cone}{(\para{\pttotd{\ptone}}{\pttotd{\pttwo}})}\reds^\natone\procone$,
it holds that $\natone,\size{\procone}\leq\polyone_\typone(\size{\pttotd{\ptone}}+\size{\pttotd{\pttwo}})$
\end{theorem}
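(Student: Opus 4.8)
The plan is to transport the polynomial soundness argument of \sll\ to the proof-term calculus of \pidsll, exploiting the simulation of process reduction by cut-elimination established in Theorem~\ref{theo:subjred}. Concretely, I would attach to every proof term a \emph{box-depth} $\bde{\ptone}$, a \emph{duplication degree} $\dupf{\ptone}$, and a layered \emph{weight} $\weip{i}{\ptone}$ (setting $\wei{\ptone}=\sum_i\weip{i}{\ptone}$), and then establish three facts: that $\bde{\cdot}$ is bounded by a quantity depending only on $\typone$; that $\wei{\cdot}$ strictly decreases along computational reduction while being invariant under shift and equivalence; and that the size of reducts grows only polynomially, level by level, under the control of $\dupf{\cdot}$.

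The degree of $\polyone_\typone$ is fixed first. A standard consequence of the subformula property for cut-free derivations is that the box-depth of a normal proof---the maximal nesting of $\Rbang$-rules along a branch---is bounded by the largest number $\partial(\Sigma)$ of nested $\bang$'s occurring in the formulas of its end-sequent $\Sigma$: along any branch, each $\Rbang$ strips one outermost $\bang$ from the end-sequent, and no other rule raises the $\bang$-nesting. Since $\ptone$ and $\pttwo$ are normal and the only types in their end-sequents are $\typone$ and $\unit$ (with $\partial(\unit)=0$), both have box-depth at most $\partial(\typone)$. Joining the two derivations by a single $\cut$ on $\cone$ places that cut at box-depth $0$, so $\bde{\Pcut{\pttwo}{\cone}{\ptone}}\leq\partial(\typone)$. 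I would then check, inspecting Figures~\ref{fig:comred}, \ref{fig:shiftred} and \ref{fig:equred}, that none of $\cred$, $\cpred$, $\cequ$ ever increases box-depth---boxes are only opened, permuted or replicated at the depth they already occupy---so the bound $\bde{\cdot}\leq\partial(\typone)$ is preserved along the entire reduction.

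Both bounds then follow. For $\natone$: by Theorem~\ref{theo:subjred} the sequence $\rest{\cone}{(\para{\pttotd{\ptone}}{\pttotd{\pttwo}})}\reds^\natone\procone$ lifts to a sequence of proof terms in which each $\reds$ is matched by exactly one $\cred$ surrounded by $\cpredequ$-segments; by the propositions relating $\cred$, $\cpred$ and $\cequ$ to process reduction, structural congruence and bisimilarity, those segments induce no process step. If the weight is designed so that every $\cred$ strictly decreases $\wei{\cdot}$ while $\cpred$ and $\cequ$ leave it unchanged, then $\natone\leq\wei{\Pcut{\pttwo}{\cone}{\ptone}}$. For $\size{\procone}$: I would run cut-elimination depth by depth; eliminating all redexes at box-depth $i$ replicates each box there at most $\dupf{\cdot}$ times, and $\dupf{\cdot}$ is bounded by the current size because, in the soft regime, every multiplexing arity is written explicitly (no contraction being available). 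One round at a fixed depth thus multiplies the size by at most a linear factor, and iterating over the $\partial(\typone)+1$ relevant depths yields a bound $(\size{\pttotd{\ptone}}+\size{\pttotd{\pttwo}})^{O(\partial(\typone))}$ on the size of every reduct; the same expression bounds $\wei{\cdot}$, hence $\natone$. Taking $\polyone_\typone$ to be this polynomial---whose degree depends on $\typone$ only through $\partial(\typone)$---finishes the argument.

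The step I expect to be the main obstacle is the invariance of the weight under $\cequ$. Figure~\ref{fig:equred} contains genuinely duplicating conversions, in particular the strong bisimilarities $(\cutd/-/\cut)$, $(\cutd/-/\cutd)$ and $(\cutd/-/\cutb)$, which replicate the argument $\ptone_1$ of an exponential cut. The weight must be defined so that such replications are already paid for---duplicating a box along a multiplexor channel must not raise $\wei{\cdot}$---which is exactly the point at which the separation between the auxiliary and the multiplexor context, and the prohibition of contraction at auxiliary doors, enter the argument. Producing a single measure that simultaneously bounds the size, strictly decreases on $\cred$, and remains stable under these duplicating equivalences is the delicate heart of the proof; the remaining cases should be routine, if lengthy, inductions.
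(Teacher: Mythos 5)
Your skeleton is the paper's: assign to proof terms a box depth, a duplicability factor and a parameterized weight, use subject reduction (Theorem~\ref{theo:subjred}) to lift every process step to a segment $\cpredequ\cred\cpredequ$, show the weight strictly decreases under $\cred$, and finally read the box depth of a normal derivation off its end-sequent, so that the degree of $\polyone_\typone$ depends only on the $\bang$-nesting depth of $\typone$. That is exactly how the paper organizes the argument (Proposition~\ref{prop:polysound} together with the final lemma on normal proof terms). One step of your plan, however, does not hold as stated.

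The gap is your bound on $\size{\procone}$. You derive it by running cut elimination \emph{depth by depth}, multiplying the size by a linear factor per round. That bounds the sizes of terms arising along that particular strategy, whereas the theorem quantifies over \emph{arbitrary} reduction sequences $\reds^\natone$: the process chooses which communications to fire and in which order, and via subject reduction this corresponds to $\cred$ steps performed in an arbitrary order --- in particular a server (a box) may be duplicated by a $\cutd$ interaction before anything inside it is reduced, so the reducts reachable by the process need not occur along your depth-ordered elimination. Nothing in your proposal transfers the round-by-round bound to such reducts, and the tempting shortcut ``size is at most weight, and weight never increases'' is also unavailable: when $\foc{\cone}{\pttwo}=0$, the weight of $\Pcutd{\ptone}{\cone}{\pttwo}$ discards $\weip{\natone}{\ptone}$ entirely, yet the process $\rest{\cone}{(\para{\binc{\cone}{\ctwo}{\widehat{\ptone}}}{\widehat{\pttwo}})}$ still contains $\widehat{\ptone}$. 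The paper closes exactly this hole with a purely process-level fact (Lemma~\ref{lemma:spacevstime}): if $\procone\reds^\natone\proctwo$ then $\size{\proctwo}\leq\natone\cdot\size{\procone}$, proved by induction on $\natone$ with a strengthened invariant about replicated subprocesses; the size bound then becomes a corollary of the step bound, via $\polyone_\natone(\indone)=\polytwo_\natone(\indone)+\indone\cdot\polytwo_\natone(\indone)$. Your proof needs this lemma, or else an explicit argument that the depth-by-depth strategy majorizes all reduction orders, which is not obvious.

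A minor further point: you require the weight to be \emph{invariant} under $\cpred$ as well as under $\cequ$. Invariance under $\cequ$ is indeed what is needed and proved (Proposition~\ref{prop:cequun}), and your diagnosis that the duplicating bisimilarities such as $(\cutd/-/\cut)$ are the critical cases is exactly right: they balance because $\foc{\cone}{\cdot}$ is additive over $\cut$ and the $\cutd$-weight is linear in it. But invariance under shift reduction fails for any \sll-style weight --- opening a box replaces the factor $\dupf{\cdot}$ with which the box content was counted by the possibly smaller multiplicity $\foc{\ctwo}{\cdot}$ of the channel it is cut against --- and it is also unnecessary: non-increase (Proposition~\ref{prop:cpreddecr}) is what actually holds, and it is all the counting argument uses.
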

Intuitively, what Theorem~\ref{theo:boundint} says is that
the complexity of the interaction between two processes
typable without cuts and communicating through a channel with session
type $\typone$ is polynomial in their sizes, where the specific polynomial
involved only depends on $\typone$ itself. In other words, the complexity
of the interaction is not only bounded, but can be somehow ``read off'' from
the types of the communicating parties.

How does the proof of Theorem~\ref{theo:boundint} look like? Conceptually,
it can be thought of as being structured into four steps:
\begin{varenumerate}
\item\label{point:first}
   First of all, a natural number $\wei{\ptone}$ is attributed to any
   proof term $\ptone$. $\wei{\ptone}$ is said to be the \emph{weight}
   of $\ptone$.
\item
   Secondly, the weight of any proof term is shown to strictly decrease
   along computational reduction, not to increase along shifting
   reduction and to stay the same for equivalent proof terms.
\item\label{point:third}
   Thirdly, $\wei{\ptone}$ is shown to be bounded by a polynomial
   on $\size{\widehat{\ptone}}$, where the exponent only depends on the 
   nesting depth of boxes of $\ptone$, denoted $\bd{\ptone}$.
\item\label{point:fourth}
   Finally, the box depth $\bd{\ptone}$ of any proof term $\ptone$
   is shown to be ``readable'' from its type interface.
\end{varenumerate}
This is exactly what we are going to do in the rest of this section.
Please observe how points~\ref{point:first}--\ref{point:third} 
above allow to prove the following stronger result, from which Theorem~\ref{theo:boundint} 
easily follows, given point \ref{point:fourth}:
\begin{proposition}\label{prop:polysound}
For every $\natone\in\NN$, there is a polynomial 
$\polyone_\natone$ such that for every process $\procone$ 
with $\tyg{\conone}{\contwo}{\conthree}{\procone}{\tcone}$,
if $\procone\reds^\nattwo\proctwo$, then
$\nattwo,\size{\proctwo}\leq\polyone_{\bde{\procone}}(\size{\procone})$.
\end{proposition}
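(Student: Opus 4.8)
The plan is to carry out points~(\ref{point:first})--(\ref{point:third}) of the four-step program announced just before the statement; point~(\ref{point:fourth}) is exactly what promotes the present proposition to Theorem~\ref{theo:boundint}, so it plays no role here. The argument is the process-theoretic incarnation of Lafont's polynomial soundness proof for $\sll$, the one genuine novelty being that ``copying'' is realised by the additive treatment of the multiplexor context together with $\cutd$, rather than by a syntactic box. I first fix two auxiliary measures on proof terms: the \emph{box depth} $\bde{\ptone}$, the maximal nesting of $\Rbang$ instances in $\ptone$, and a \emph{duplication factor} $\dupf{\ptone}$ bounding the number of copies any single $\Rbang$-box can be forced to spawn. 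The crucial point --- and the reason soft linear logic forbids contraction on the auxiliary doors, i.e. on the auxiliary context --- is that $\dupf{\ptone}$ is already determined statically and is bounded by a fixed polynomial in $\size{\widehat{\ptone}}$, \emph{independently of} $\bde{\ptone}$: sub-boxes are not multiplied independently, so the copies do not compound multiplicatively through nesting (this is precisely what fails for $\dupserver_i$ in \pidill).

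Next I would introduce the parameterised weight $\weip{\natone}{\ptone}\in\NN$, where $\natone$ is intended as an a priori bound on the number of copies, defined by induction on $\ptone$: the non-exponential rules add a local cost to the weights of their premises, while an $\Rbang$-box with body $\pttwo$ contributes essentially $\natone\cdot\weip{\natone}{\pttwo}$, so that the parameter propagates multiplicatively across each nesting level. The definition is arranged so that $\size{\widehat{\ptone}}\leq\weip{\natone}{\ptone}$ for every $\natone\geq 1$. Writing $\wei{\ptone}=\weip{\dupf{\ptone}}{\ptone}$ for the instantiation at the actual duplication factor, the central lemma asserts that $\wei{\cdot}$ \emph{strictly decreases} along $\cred$, is \emph{non-increasing} along $\cpred$, and is \emph{invariant} under $\cequ$. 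The decrease under $\cred$ is checked rule by rule from Figure~\ref{fig:comred}, each proper cut-elimination step erasing a redex; invariance under $\cequ$ is a case analysis over the commuting conversions and bisimilarities of Figure~\ref{fig:equred}. The delicate clause is the box opening $\cut/\Rbang/\Lbangb$ (and its variant $\cut/\Rbang/\Lbangd$) of Figure~\ref{fig:shiftred}: it hands a server to its clients and duplicates the box body, and one must verify that this leaves $\wei{\cdot}$ non-increasing, which holds exactly because the parameter was chosen to be $\dupf{\ptone}$, large enough to have ``paid for'' every copy in advance.

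The two required bounds then follow formally. By Subject Reduction (Theorem~\ref{theo:subjred}) every process step $\procone\reds\proctwo$ lifts to $\ptone\cpredequ\cred\cpredequ\pttwo$, a sequence which contains one strictly-decreasing $\cred$ surrounded by weight-nonincreasing $\cpredequ=(\cpred\cup\cequ)^{*}$; hence $\wei{\cdot}$ drops by at least one per step, so the number $\nattwo$ of steps in $\procone\reds^{\nattwo}\proctwo$ satisfies $\nattwo\leq\wei{\ptone}$, for $\ptone$ any derivation of $\procone$. For the same reason every intermediate derivation has weight at most $\wei{\ptone}$, and since weight dominates size we get $\size{\proctwo}\leq\wei{\ptone}$ as well. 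It remains to bound $\wei{\ptone}$ polynomially, which I would do by induction on $\bde{\ptone}$: a box-free proof has weight linear in its size, and passing from depth $d$ to depth $d+1$ multiplies the bound by one factor of $\natone=\dupf{\ptone}=O(\size{\widehat{\ptone}})$, yielding $\wei{\ptone}\leq\polyone_{\bde{\ptone}}(\size{\widehat{\ptone}})$ with a polynomial of degree $\bde{\ptone}+1$. Since $\bde{\ptone}=\bde{\procone}$ and $\size{\widehat{\ptone}}=\size{\procone}$, this is the desired $\nattwo,\size{\proctwo}\leq\polyone_{\bde{\procone}}(\size{\procone})$.

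The step I expect to be the main obstacle is exactly the box case of the weight lemma: making the definitions of $\weip{\natone}{\cdot}$ and $\dupf{\cdot}$ fit together so that the statically-computed factor $\dupf{\ptone}$ provably dominates every copy that any reduction sequence --- not just a single step --- can create. This is the heart of the $\sll$ discipline, and it is where the multiplexing-only (contraction-free on auxiliary doors) restriction, as opposed to the unrestricted contraction of \pidill, is indispensable; the subtlety particular to the $\pi$-calculus setting is that duplication is mediated by $\cutd$ and the additive multiplexor context, so the bookkeeping must track channel occurrences rather than box borders.
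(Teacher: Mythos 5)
Your overall architecture coincides with the paper's: a weight $\weip{\natone}{\cdot}$ parameterized by a statically computed duplicability factor $\dupf{\cdot}$, strict decrease under $\cred$, non-increase under $\cpred$, invariance under $\cequ$, a polynomial bound on the weight whose degree is governed by $\bde{\cdot}$, and the lift of process steps to $\cpredequ\cred\cpredequ$ via Theorem~\ref{theo:subjred}; your derivation of the step bound $\nattwo\leq\wei{\ptone}$ is exactly the paper's. The genuine gap is your route to the bound on $\size{\proctwo}$. You propose to ``arrange'' the weight so that $\size{\widehat{\ptone}}\leq\weip{\natone}{\ptone}$ and then conclude $\size{\proctwo}\leq\wei{\pttwo}\leq\wei{\ptone}$. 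No such arrangement can coexist with the $\cequ$-invariance that your central lemma also demands. Since $\cequ$ is an equivalence (hence symmetric), monotonicity along $\cequ$ forces exact invariance; but Figure~\ref{fig:equred} contains the strong-bisimilarity rule $(\cutd/-/-_{0})$, namely $\Pcutd{\ptone}{\cone}{\pttwo}\cequ\pttwo$ whenever $\cone\notin\fn{\widehat{\pttwo}}$, and the left-hand process still physically contains the replicated server $\binc{\cone}{\ctwo}{\widehat{\ptone}}$ (the two processes are bisimilar, \emph{not} structurally congruent). Invariance therefore forces an unused server body to weigh nothing: taking $\pttwo=\PRone$, the term $\Pcutd{\ptone}{\cone}{\PRone}$ must have weight $0$ for \emph{every} server body $\ptone$, while its process has size $1+\size{\widehat{\ptone}}$, which is unbounded; so no function of the weight can bound the size. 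Nor can you escape by restricting attention to terms reachable by reduction: the computational rule $(\cutd/-/\bemd)$ consumes one invocation of a server, and once the last invocation is consumed the persistent server still sits inside the reduct while its weight contribution $\foc{\cone}{\pttwo}\cdot\weip{\natone}{\ptone}$ has dropped to $0$, so reducts along genuine reduction sequences also have size exceeding any weight of this shape.

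The paper closes exactly this hole with a process-level argument that is missing from your proposal: Lemma~\ref{lemma:spacevstime} proves, by induction on $\natone$ with the strengthened invariant that every replicated subprocess $\binc{\cone}{\ctwo}{\procfour}$ of a reduct has size at most $\size{\procone}$, that $\procone\reds^{\natone}\proctwo$ implies $\size{\proctwo}\leq\natone\cdot\size{\procone}$ --- a single step can grow the process only by spawning one copy of a replicated body already present. Combining this with the step bound yields $\size{\proctwo}\leq\nattwo\cdot\size{\procone}\leq\polytwo_{\bde{\procone}}(\size{\procone})\cdot\size{\procone}$, which is precisely why the paper's final polynomial is $\polyone_{\natone}(\indone)=\polytwo_{\natone}(\indone)+\indone\cdot\polytwo_{\natone}(\indone)$ rather than the weight bound $\polytwo_{\natone}$ itself. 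The rest of your outline --- $\foc{\cone}{\ptone}\leq 1$ for channels in the auxiliary context, $\dupf{\ptone}\leq\size{\ptone}$, the bound $\weip{\natone}{\ptone}\leq\size{\widehat{\ptone}}\cdot\natone^{\bde{\widehat{\ptone}}+1}$ for $\natone\geq\dupf{\ptone}$, and the transfer $\bde{\ptone}=\bde{\widehat{\ptone}}$, $\size{\ptone}=\size{\widehat{\ptone}}$ --- is sound and matches Lemmas~\ref{lemma:foc}, \ref{lemma:dupf}, \ref{lemma:weibound} and~\ref{lemma:procvspt} of the paper.
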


\subsection{Preliminary Definitions}
Some concepts have to be given before we can embark in the proof
of Proposition~\ref{prop:polysound}. First of all, we need to define what
the box-depth of a process and of a proof term are. Simply, given a process $\procone$, its
\emph{box-depth} $\bde{\procone}$ is the nesting-level of replications\footnote{This
terminology is derived from linear logic, where proofs obtained by
the promotion rule are usually called boxes} in $\procone$. As an example, the box-depth
of $\binc{\cone}{\ctwo}{\binc{\cthree}{\cfour}{\emproc}}$ is $2$, while
the one of $\rest{\cone}{\inwc{\ctwo}{\cthree}}$ is $0$. 

Formally, given a proof term $\ptone$ its box depth $\bde{\ptone}$ is defined
as follows, by induction on the structure of $\ptone$:
\begin{align*}
\bde{\PLone{\cone}{\ptone}}&=\bde{\ptone} &
\bde{\PRplusone{\ptone}}&=\bde{\ptone}\\
\bde{\PRone}&=0 &
\bde{\PRplustwo{\ptone}}&=\bde{\ptone}\\
\bde{\PLten{\cone}{\ctwo}{\cthree}{\ptone}}&=\bde{\ptone} &
\bde{\Pbemb{\cone}{\ctwo}{\ptone}}&=\bde{\ptone}\\
\bde{\PRten{\ptone}{\pttwo}}&=\max\{\bde{\ptone},\bde{\pttwo}\} &
\bde{\Pbemd{\cone}{\ctwo}{\ptone}}&=\bde{\ptone}\\
\bde{\PLlin{\cone}{\ptone}{\ctwo}{\pttwo}}&=\max\{\bde{\ptone},\bde{\pttwo}\} &
\bde{\PLbangb{\cone}{\ptone}}&=\bde{\ptone}\\
\bde{\PRlin{\cone}{\ptone}}&=\bde{\ptone} &
\bde{\PLbangd{\cone}{\ptone}}&=\bde{\ptone}\\
\bde{\PLwithone{\cone}{\ctwo}{\ptone}}&=\bde{\ptone} &
\bde{\PRbang{\cone_1,\ldots,\cone_n}{\ptone}}&=1+\bde{\ptone}\\
\bde{\PLwithtwo{\cone}{\ctwo}{\ptone}}&=\bde{\ptone} &
\bde{\Pcut{\ptone}{\cone}{\pttwo}}&=\max\{\bde{\ptone},\bde{\pttwo}\}\\
\bde{\PRwith{\ptone}{\pttwo}}&=\max\{\bde{\ptone},\bde{\pttwo}\} &
\bde{\Pcutb{\ptone}{\cone}{\pttwo}}&=\max\{\bde{\ptone}+1,\bde{\pttwo}\}\\
\bde{\PLplus{\cone}{\ctwo}{\ptone}{\cthree}{\pttwo}}&=\max\{\bde{\ptone},\bde{\pttwo}\} &
\bde{\Pcutd{\ptone}{\cone}{\pttwo}}&=\max\{\bde{\ptone}+1,\bde{\pttwo}\}
\end{align*}
Analogously, the
box-depth of a proof term $\ptone$ is simply $\bde{\widehat{\ptone}}$.

Now, suppose that $\tyg{\conone}{\contwo}{\conthree}{\ptone}{\tcone}$
and that $\cone:\typone$ belongs to either $\conone$ or $\contwo$, i.e.
that $\cone$ is an ``exponential'' channel in $\ptone$. A key parameter
is the \emph{virtual number of occurrences} of $\cone$ in $\ptone$, which is denoted
as $\foc{\cone}{\ptone}$. This parameter, as its name suggests, is not simply the number
of literal occurrences of $\cone$ in $\ptone$, but takes into account possible
duplications derived from cuts. So, for example, 
$\foc{\cfour}{\Pcutb{\ptone}{\cone}{\pttwo}}=\foc{\cone}{\pttwo}\cdot\foc{\cfour}{\ptone}+\foc{\cfour}{\pttwo}$,
while $\foc{\cfour}{\PRten{\ptone}{\pttwo}}$ is merely $\foc{\cfour}{\ptone}+\foc{\cfour}{\pttwo}$.
Obviously, $\foc{\cfour}{\Pbemb{\cone}{\cfour}{\ptone}}=1$ and
$\foc{\cfour}{\Pbemd{\cone}{\cfour}{\ptone}}=1$. 
Formally:
\begin{align*}
\foc{\cfour}{\PLone{\cone}{\ptone}}&=\foc{\cfour}{\ptone}\\
\foc{\cfour}{\PRone}&=0\\
\foc{\cfour}{\PLten{\cone}{\ctwo}{\cthree}{\ptone}}&=\foc{\cfour}{\ptone}\\
\foc{\cfour}{\PRten{\ptone}{\pttwo}}&=\foc{\cfour}{\ptone}+\foc{\cfour}{\pttwo}\\
\foc{\cfour}{\PLlin{\cone}{\ptone}{\ctwo}{\pttwo}}&=\foc{\cfour}{\ptone}+\foc{\cfour}{\pttwo}\\
\foc{\cfour}{\PRlin{\cone}{\ptone}}&=\foc{\cfour}{\ptone}\\
\foc{\cfour}{\Pcut{\ptone}{\cone}{\pttwo}}&=\foc{\cfour}{\ptone}+\foc{\cfour}{\pttwo}\\
\foc{\cfour}{\Pcutb{\ptone}{\cone}{\pttwo}}&=\foc{\cone}{\pttwo}\cdot\foc{\cfour}{\ptone}+\foc{\cfour}{\pttwo}\\
\foc{\cfour}{\Pcutd{\ptone}{\cone}{\pttwo}}&=\foc{\cone}{\pttwo}\cdot\foc{\cfour}{\ptone}+\foc{\cfour}{\pttwo}\\
\foc{\cfour}{\Pbemb{\cone}{\cfour}{\ptone}}&=1\\
\foc{\cfour}{\Pbemd{\cone}{\cfour}{\ptone}}&=1\\
\foc{\cfour}{\Pbemb{\cone}{\ctwo}{\ptone}}&=0\\
\foc{\cfour}{\Pbemd{\cone}{\ctwo}{\ptone}}&=0\\
\foc{\cfour}{\PLbangb{\cone}{\ptone}}&=\foc{\cfour}{\ptone}\\
\foc{\cfour}{\PLbangd{\cone}{\ptone}}&=\foc{\cfour}{\ptone}\\
\foc{\cfour}{\PRbang{\cone_1,\ldots,\cone_n}{\ptone}}&=0\\
\foc{\cfour}{\PLplus{\cone}{\ctwo}{\ptone}{\cthree}{\pttwo}}&=\foc{\cfour}{\ptone}+\foc{\cfour}{\pttwo}\\
\foc{\cfour}{\PRplusone{\ptone}}&=\foc{\cfour}{\ptone}\\
\foc{\cfour}{\PRplustwo{\ptone}}&=\foc{\cfour}{\ptone}\\
\foc{\cfour}{\PLwithone{\cone}{\ctwo}{\ptone}}&=\foc{\cfour}{\ptone}\\
\foc{\cfour}{\PLwithtwo{\cone}{\ctwo}{\ptone}}&=\foc{\cfour}{\ptone}\\
\foc{\cfour}{\PRwith{\ptone}{\pttwo}}&=\foc{\cfour}{\ptone}+\foc{\cfour}{\pttwo}
\end{align*}

A channel in either the auxiliary or the exponential context can ``float'' to the linear context
as an effect of rules $\Lbangb$ or $\Lbangd$. From that moment on, it can only
be treated as a linear channel. As a consequence, it makes sense to
define the \emph{duplicability factor} of a proof term $\ptone$, written
$\dupf{\ptone}$, simply as the maximum of $\foc{\cone}{\ptone}$ over all instances 
of the rules $\Lbangb$ or $\Lbangd$ in $\ptone$, where $\cone$ is the involved
channel. For example, $\dupf{\PLbangb{\cone}{\ptone}}=\max\{\dupf{\ptone},\foc{\ctwo}{\ptone}\}$
and $\dupf{\PLlin{\cone}{\ptone}{\ctwo}{\pttwo}}=\max\{\dupf{\ptone},\dupf{\pttwo}\}$.
Formally, the duplicability factor $\dupf{\ptone}$ of $\ptone$ is defined as follows:
\begin{align*}
\dupf{\PLone{\cone}{\ptone}}&=\dupf{\ptone} &
\dupf{\PRplusone{\ptone}}&=\dupf{\ptone} \\
\dupf{\PRone}&=0 &
\dupf{\PRplustwo{\ptone}}&=\dupf{\ptone} \\
\dupf{\PLten{\cone}{\ctwo}{\cthree}{\ptone}}&=\dupf{\ptone} &
\dupf{\Pbemb{\cone}{\ctwo}{\ptone}}&=\dupf{\ptone} \\
\dupf{\PRten{\ptone}{\pttwo}}&=\max\{\dupf{\ptone},\dupf{\pttwo}\} &
\dupf{\Pbemd{\cone}{\ctwo}{\ptone}}&=\dupf{\ptone} \\
\dupf{\PLlin{\cone}{\ptone}{\ctwo}{\pttwo}}&=\max\{\dupf{\ptone},\dupf{\pttwo}\} &
\dupf{\PLbangb{\cone}{\ptone}}&=\max\{\dupf{\ptone},\foc{\ctwo}{\ptone}\} \\
\dupf{\PRlin{\cone}{\ptone}}&=\dupf{\ptone} &
\dupf{\PLbangd{\cone}{\ptone}}&=\max\{\dupf{\ptone},\foc{\ctwo}{\ptone}\} \\
\dupf{\PLwithone{\cone}{\ctwo}{\ptone}}&=\dupf{\ptone} &
\dupf{\PRbang{\cone_1,\ldots,\cone_n}{\ptone}}&=\dupf{\ptone} \\
\dupf{\PLwithtwo{\cone}{\ctwo}{\ptone}}&=\dupf{\ptone} &
\dupf{\Pcut{\ptone}{\cone}{\pttwo}}&=\max\{\dupf{\ptone},\dupf{\pttwo}\} \\
\dupf{\PRwith{\ptone}{\pttwo}}&=\max\{\dupf{\ptone},\dupf{\pttwo}\} & 
\dupf{\Pcutb{\ptone}{\cone}{\pttwo}}&=\max\{\dupf{\ptone},\dupf{\pttwo}\}\\
\dupf{\PLplus{\cone}{\ctwo}{\ptone}{\cthree}{\pttwo}}&=\max\{\dupf{\ptone},\dupf{\pttwo}\} & 
\dupf{\Pcutd{\ptone}{\cone}{\pttwo}}&=\max\{\dupf{\ptone},\dupf{\pttwo}\}
\end{align*}

It's now possible to give the definition of $\wei{\ptone}$, namely the \emph{weight}
of the proof term $\ptone$. Before doing that, however, it is necessary to give
a parameterized notion of weight, denoted $\weip{\natone}{\ptone}$. Intuitively,
$\weip{\natone}{\ptone}$ is defined similarly to $\size{\widehat{\ptone}}$.
However, every input and output action in $\widehat{\ptone}$ can possibly count for
more than one:
\begin{varitemize}
\item
   Everything inside $\ptone$ in $\PRbang{\cone_1,\ldots,\cone_n}{\ptone}$
   counts for $\natone$;
\item
   Everything inside $\ptone$ in either 
   $\Pcutb{\ptone}{\cone}{\pttwo}$ or 
   $\Pcutd{\ptone}{\cone}{\pttwo}$ counts for 
   $\foc{\cone}{\pttwo}$.
\end{varitemize}
For example, 
$\weip{\natone}{\Pcutd{\ptone}{\cone}{\pttwo}}=\foc{\cone}{\pttwo}\cdot\weip{\natone}{\ptone}+\weip{\natone}{\pttwo}$,
while $\weip{\natone}{\PLwithtwo{\cone}{\ctwo}{\ptone}}=1+\weip{\natone}{\ptone}$. Formally:
\begin{align*}
\weip{\natone}{\PLone{\cone}{\ptone}}&=\weip{\natone}{\ptone}\\
\weip{\natone}{\PRone}&=0\\
\weip{\natone}{\PLten{\cone}{\ctwo}{\cthree}{\ptone}}&=1+\weip{\natone}{\ptone}\\
\weip{\natone}{\PRten{\ptone}{\pttwo}}&=1+\weip{\natone}{\ptone}+\weip{\natone}{\pttwo}\\
\weip{\natone}{\PLlin{\cone}{\ptone}{\ctwo}{\pttwo}}&=1+\weip{\natone}{\ptone}+\weip{\natone}{\pttwo}\\
\weip{\natone}{\PRlin{\cone}{\ptone}}&=1+\weip{\natone}{\ptone}\\
\weip{\natone}{\Pcut{\ptone}{\cone}{\pttwo}}&=\weip{\natone}{\ptone}+\weip{\natone}{\pttwo}\\
\weip{\natone}{\Pcutb{\ptone}{\cone}{\pttwo}}&=\foc{\cone}{\pttwo}\cdot\weip{\natone}{\ptone}+\weip{\natone}{\pttwo}\\
\weip{\natone}{\Pcutd{\ptone}{\cone}{\pttwo}}&=\foc{\cone}{\pttwo}\cdot\weip{\natone}{\ptone}+\weip{\natone}{\pttwo}\\
\weip{\natone}{\Pbemb{\cone}{\ctwo}{\ptone}}&=1+\weip{\natone}{\ptone}\\
\weip{\natone}{\Pbemd{\cone}{\ctwo}{\ptone}}&=1+\weip{\natone}{\ptone}\\
\weip{\natone}{\PLbangb{\cone}{\ptone}}&=\weip{\natone}{\ptone}\\
\weip{\natone}{\PLbangd{\cone}{\ptone}}&=\weip{\natone}{\ptone}\\
\weip{\natone}{\PRbang{\cone_1,\ldots,\cone_n}{\ptone}}&=\natone\cdot(\weip{\natone}{\ptone}+1)\\
\weip{\natone}{\PLplus{\cone}{\ctwo}{\ptone}{\cthree}{\pttwo}}&=1+\weip{\natone}{\ptone}+\weip{\natone}{\pttwo}\\
\weip{\natone}{\PRplusone{\ptone}}&=1+\weip{\natone}{\ptone}\\
\weip{\natone}{\PRplustwo{\ptone}}&=1+\weip{\natone}{\ptone}\\
\weip{\natone}{\PLwithone{\cone}{\ctwo}{\ptone}}&=1+\weip{\natone}{\ptone}\\
\weip{\natone}{\PLwithtwo{\cone}{\ctwo}{\ptone}}&=1+\weip{\natone}{\ptone}\\
\weip{\natone}{\PRwith{\ptone}{\pttwo}}&=1+\weip{\natone}{\ptone}+\weip{\natone}{\pttwo}
\end{align*}
Now, $\wei{\ptone}$ is simply $\weip{\dupf{\ptone}}{\ptone}$.
\subsection{Monotonicity Results}
The crucial ingredient for proving polynomial bounds are a series of results about how
the weight $\ptone$ evolves when $\ptone$ is put in relation with another proof term
$\pttwo$ by way of either $\cred$, $\cpred$ or $\cequ$.
\begin{lemma}
For every $\ptone$, $\dupf{\ptone}=\dupf{\lift{\ptone}}$ and
for every $\natone$, $\weip{\natone}{\ptone}=\weip{\natone}{\lift{\ptone}}$.
\end{lemma}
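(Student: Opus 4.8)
The plan is to proceed by induction on the structure of the proof term $\ptone$, following exactly the case analysis used in the proof of the Lifting Lemma (Lemma~\ref{liftlemma}). The crucial observation I would exploit is that, \emph{at the level of proof terms}, the operation $\lift{\cdot}$ does nothing more than relabel each occurrence of an auxiliary-indexed exponential rule ($\bemb$, $\Lbangb$, $\cutb$) by its multiplexor-indexed counterpart ($\bemd$, $\Lbangd$, $\cutd$), namely whenever the channel it acts on is being moved from the auxiliary context into the multiplexor context; every other constructor is left syntactically untouched. In particular, since the conclusion of an $\Rbang$ already carries an empty auxiliary context, lifting does not recurse into $\Rbang$-boxes, so a subterm of the form $\PRbang{\cone_1,\ldots,\cone_n}{\pttwo}$ is returned verbatim. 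The only remaining work carried out by the construction of Lemma~\ref{liftlemma} consists of applications of multiplexor weakening (Lemma~\ref{weaklemma}); by the statement of that lemma these leave the proof term \emph{itself} unchanged, hence they cannot affect any of the three purely syntactic measures $\foc{\cone}{\cdot}$, $\dupf{\cdot}$ and $\weip{\natone}{\cdot}$.

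Before treating $\dupf$ and $\weip$, I would first establish as a companion claim that lifting also preserves the virtual number of occurrences, i.e. $\foc{\cone}{\lift{\ptone}}=\foc{\cone}{\ptone}$ for every channel $\cone$. This auxiliary fact is genuinely needed, because the clause defining $\dupf$ at $\PLbangb{\cone}{\ptone}$ refers to a quantity of the form $\foc{\ctwo}{\ptone}$, and the clause defining $\weip{\natone}{\cdot}$ at $\Pcutb{\ptone}{\cone}{\pttwo}$ weights $\weip{\natone}{\ptone}$ by the factor $\foc{\cone}{\pttwo}$. The companion claim follows by a routine induction once one notices that the defining equations of $\foc$ coincide on each relabelled pair: $\foc{\cfour}{\Pcutb{\ptone}{\cone}{\pttwo}}=\foc{\cfour}{\Pcutd{\ptone}{\cone}{\pttwo}}$, $\foc{\cfour}{\PLbangb{\cone}{\ptone}}=\foc{\cfour}{\PLbangd{\cone}{\ptone}}$, and the two $\flat$-clauses agree as well, so the $b\to d$ relabelling is invisible to $\foc$ (and boxes contribute $0$ on both sides).

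With the companion claim in hand, I would prove the two stated equalities by the same induction, uniformly in $\natone$. Whenever the outermost constructor of $\ptone$ is untouched by lifting (the units, the multiplicatives, the additives, the linear $\cut$, and each $\Rbang$-box), the defining clause of $\dupf$ and of $\weip{\natone}{\cdot}$ is compositional and the equality is immediate from the induction hypotheses on the immediate subterms. In the three cases where lifting performs a $b\to d$ relabelling, I would invoke the verbatim coincidence of clauses, namely $\dupf{\Pbemb{\cone}{\ctwo}{\ptone}}=\dupf{\Pbemd{\cone}{\ctwo}{\ptone}}$, $\dupf{\PLbangb{\cone}{\ptone}}=\dupf{\PLbangd{\cone}{\ptone}}$, $\dupf{\Pcutb{\ptone}{\cone}{\pttwo}}=\dupf{\Pcutd{\ptone}{\cone}{\pttwo}}$, together with the three analogues for $\weip{\natone}{\cdot}$; combining these with the induction hypotheses and with the companion claim (to see that the $\foc$-factors appearing in the $\Lbangb$ clause of $\dupf$ and in the $\cutb$ clause of $\weip{\natone}{\cdot}$ are unchanged) delivers the desired equalities.

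The step I expect to be the main obstacle is justifying the structural observation of the first paragraph in the cut cases. There one must go back into the construction of Lemma~\ref{liftlemma} and check that lifting a $\cutb$ (and likewise a $\cutd$) really produces a $\cutd$ whose two immediate subterms are precisely the lifts of the original two subterms, so that the induction hypotheses apply to them directly, and that the context rearrangements needed to reassemble the derivation are exactly the proof-term-transparent weakenings of Lemma~\ref{weaklemma} rather than operations that alter the proof term. Once this faithfulness of the relabelling is confirmed, no real computation is left: all three measures are blind both to the $b\leftrightarrow d$ distinction and to weakening, and preservation follows.
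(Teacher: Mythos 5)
The paper states this lemma with no proof at all, so there is nothing to compare you against line by line; your reconstruction is, as far as I can tell, precisely the routine structural induction the authors intended, and its ingredients check out. The defining clauses of $\foc{\cdot}{\cdot}$, $\dupf{\cdot}$ and $\weip{\natone}{\cdot}$ do coincide verbatim on each relabelled pair ($\Pbemb{\cone}{\ctwo}{\cdot}$ versus $\Pbemd{\cone}{\ctwo}{\cdot}$, $\PLbangb{\cone}{\cdot}$ versus $\PLbangd{\cone}{\cdot}$, $\Pcutb{\cdot}{\cone}{\cdot}$ versus $\Pcutd{\cdot}{\cone}{\cdot}$); Lemma~\ref{weaklemma} is term-transparent; $\Rbang$-boxes are returned verbatim since their conclusions already carry an empty auxiliary context; and the companion claim $\foc{\cone}{\lift{\ptone}}=\foc{\cone}{\ptone}$ is both true and genuinely needed exactly where you say it is.

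Two corrections at the cut cases, which you rightly single out as the crux. First, lifting $\Pcutb{\ptone}{\cone}{\pttwo}$ (or $\Pcutd{\ptone}{\cone}{\pttwo}$) does \emph{not} lift the first subterm: the first premise of $\cutd$ must have the shape $\tyg{\confive}{\emcon}{\emcon}{\procone}{\ctwo:\typone}$, with empty multiplexor and linear contexts, which $\lift{\ptone}$ (whose former auxiliary assumptions now sit in the multiplexor context) cannot have; the correct clause is $\lift{\Pcutb{\ptone}{\cone}{\pttwo}}=\Pcutd{\ptone}{\cone}{\lift{\pttwo}}$ with $\ptone$ kept verbatim. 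This is harmless --- indeed simpler, since the $\cutd$ clauses of all three measures are compositional in their first argument. Second, your claim that the weakenings of Lemma~\ref{weaklemma} supply all needed context rearrangements is not quite exact: $\cutd$ as stated in Figure~\ref{fig:pidslltr} requires the first premise's auxiliary context to \emph{equal} the conclusion's multiplexor context, and multiplexor weakening cannot arrange this; term-preserving weakening of the \emph{auxiliary} context is unavailable in general (a term of the form $\PRbang{\ptone}{\cone_1,\ldots,\cone_n}$ can only ever be typed with empty auxiliary context), so one must read $\cutd$ as requiring inclusion rather than equality --- a liberal reading the paper itself relies on implicitly in Figure~\ref{fig:comred}. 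Finally, it is worth making explicit why the ``faithfulness check'' you postpone is the actual content of the lemma: the statement is sensitive to which witness of Lemma~\ref{liftlemma} is chosen. When the auxiliary context of $\ptone$ is empty, the $\cutb$ case could alternatively be discharged by re-boxing, i.e.\ by cutting $\PLbangd{\cone}{\lift{\pttwo}}$ linearly against $\Rbang$ applied to $\ptone$; this yields the same underlying process and a correct typing, but its weight is $\natone\cdot(\weip{\natone}{\ptone}+1)+\weip{\natone}{\pttwo}$ rather than $\foc{\cone}{\pttwo}\cdot\weip{\natone}{\ptone}+\weip{\natone}{\pttwo}$, so the claimed equality would fail. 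The induction goes through precisely because $\lift{\cdot}$ is the relabelling witness you describe, and not some other one.
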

Whenever a proof term $\ptone$ computationally reduces to $\pttwo$, the underlying
weight is guaranteed to strictly decrease:
\begin{proposition}\label{prop:creddecr}
If $\tyg{\conone}{\contwo}{\conthree}{\ptone}{\tcone}$ and
$\ptone\cred\pttwo$, then $\tyg{\confour}{\confive}{\conthree}{\pttwo}{\tcone}$
(where $\conone,\contwo=\confour,\confive$), $\dupf{\pttwo}\leq\dupf{\ptone}$ and 
$\wei{\pttwo}<\wei{\ptone}$.
\end{proposition}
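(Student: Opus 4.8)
The plan is to establish all three conclusions simultaneously by induction on the derivation of $\ptone\cred\pttwo$, that is, on the eight base rules of Figure~\ref{fig:comred} together with the congruence cases generated by the proof-term constructors. One point is worth stressing at the outset: since $\cred$ models internal reduction $\reds$, its congruence closure does \emph{not} descend underneath a replicated input. In particular it never rewrites the server component $\ptone$ of a $\Pcutb{\ptone}{\cone}{\pttwo}$ or $\Pcutd{\ptone}{\cone}{\pttwo}$, nor the body of an $\PRbang{\cone_1,\ldots,\cone_n}{\ptone}$, because in the extracted processes these sit under a $\binc{\cone}{\ctwo}{\cdot}$. This restriction is exactly what keeps the weight argument alive. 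For the typing statement, including the redistribution $\conone,\contwo=\confour,\confive$, I would reuse the inversion-based bookkeeping of the Caires--Pfenning subject reduction argument; the only genuinely new ingredient is the Lifting Lemma (Lemma~\ref{liftlemma}), whose context shift $\tyg{\conone}{\contwo}{\conthree}{\ptone}{\tcone}\mapsto\tyg{\emcon}{\conone,\contwo}{\conthree}{\lift{\ptone}}{\tcone}$ accounts precisely for the movement of channels between the auxiliary and multiplexor zones produced by the $\lift{\cdot}$'s in rules $(\cutb/-/\bemb)$ and $(\cutd/-/\bemd)$, the Weakening Lemma (Lemma~\ref{weaklemma}) handling the multiplexor contexts in the binary rules.

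For the weight I would first record two facts proved by a routine induction on proof terms: that $\weip{\natone}{\cdot}$ is monotone nondecreasing in $\natone$ (the only places $\natone$ enters are the clause $\natone\cdot(\weip{\natone}{\ptone}+1)$ for $\Rbang$ and, indirectly, the subterms, all monotone, while the coefficients $\foc{\cone}{\pttwo}$ in the $\cutb/\cutd$ clauses are independent of $\natone$), and that $\foc{\cfour}{\cdot}$, $\dupf{\cdot}$ and $\weip{\natone}{\cdot}$ are all invariant under $\lift{\cdot}$ (the last two being the displayed lemma). Granting the inequality $\dupf{\pttwo}\le\dupf{\ptone}$, monotonicity reduces the strict decrease $\wei{\pttwo}<\wei{\ptone}$ to the \emph{parametric} statement $\weip{\natone}{\pttwo}<\weip{\natone}{\ptone}$ for the single value $\natone=\dupf{\ptone}$, since then $\wei{\pttwo}=\weip{\dupf{\pttwo}}{\pttwo}\le\weip{\dupf{\ptone}}{\pttwo}<\weip{\dupf{\ptone}}{\ptone}=\wei{\ptone}$. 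I would in fact prove $\weip{\natone}{\pttwo}<\weip{\natone}{\ptone}$ for \emph{every} $\natone$. On the six multiplicative/additive base rules this is immediate, as each erases one or two connective-introduction nodes (and, for $\with$ and $\plus$, an entire discarded premise), so the weight drops by at least $2$. On the two exponential base rules a direct computation from the defining clauses, using $\weip{\natone}{\lift{\ptone}}=\weip{\natone}{\ptone}$, shows the weight drops by \emph{exactly one}: for $(\cutd/-/\bemd)$ the left-hand side evaluates to $(1+\foc{\cone}{\pttwo})\,\weip{\natone}{\ptone}+1+\weip{\natone}{\pttwo}$ and the right-hand side to $(1+\foc{\cone}{\pttwo})\,\weip{\natone}{\ptone}+\weip{\natone}{\pttwo}$, the difference being the single unit contributed by the $\bemd$ node (the case $(\cutb/-/\bemb)$ is analogous, using that the cut channel is fresh for $\pttwo$, so $\foc{\cone}{\lift{\pttwo}}=0$).

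The congruence cases are where the shape of the weight must be exploited. Every constructor combines the weights of its immediate subterms additively with coefficient one, the sole exceptions being the server position of $\cutb$ and $\cutd$ (coefficient $\foc{\cone}{\pttwo}$) and the body of $\Rbang$ (coefficient $\natone$); but reductions never occur in those positions, since they lie under a replication. Hence for every admissible context other than the client position of $\cutb$ and $\cutd$ the strict decrease propagates verbatim from the induction hypothesis. In $\Pcutb{\ptone}{\cone}{\pttwo}$ and $\Pcutd{\ptone}{\cone}{\pttwo}$, whose weight is $\foc{\cone}{\pttwo}\cdot\weip{\natone}{\ptone}+\weip{\natone}{\pttwo}$, the only admissible reduction is inside the client $\pttwo$; the hypothesis gives $\weip{\natone}{\pttwo'}<\weip{\natone}{\pttwo}$, but the coefficient $\foc{\cone}{\pttwo}$ standing in front of the \emph{unchanged} server weight $\weip{\natone}{\ptone}$ may shift to $\foc{\cone}{\pttwo'}$. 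To prevent the product from growing I would carry, as a fourth clause of the same induction, the monotonicity invariant that $\foc{\cfour}{\pttwo}\le\foc{\cfour}{\ptone}$ whenever $\ptone\cred\pttwo$, for every exponential channel $\cfour$.

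This $\foc$-monotonicity is the main obstacle, and it has to be checked against each base rule: in the branch-dropping $\with$ and $\plus$ rules it holds because a summand is discarded, and in the $\flat$ rules it holds because both sides reduce to a common expression of the form $(1+\foc{\cone}{\pttwo})\,\foc{\cfour}{\ptone}+\foc{\cfour}{\pttwo}$. The delicate point is the multiplicative clause $\foc{\cfour}{\Pcutd{\ptone}{\cone}{\pttwo}}=\foc{\cone}{\pttwo}\cdot\foc{\cfour}{\ptone}+\foc{\cfour}{\pttwo}$, where an incautious rearrangement of cuts could make a virtual occurrence count strictly increase; verifying that none of the reductions does so is the heart of the argument. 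Finally, the inequality $\dupf{\pttwo}\le\dupf{\ptone}$ follows from the very same $\foc$-monotonicity, since $\dupf{\cdot}$ is a maximum of $\foc$-values taken at the $\Lbangb$ and $\Lbangd$ nodes and no computational reduction introduces a new such node; combined with the monotonicity of $\weip{\natone}{\cdot}$ this closes the chain of inequalities and yields $\wei{\pttwo}<\wei{\ptone}$.
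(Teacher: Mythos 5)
Your proposal is correct, and its skeleton coincides with the paper's own proof: induction on the derivation of $\ptone\cred\pttwo$, explicit weight computations for the rules of Figure~\ref{fig:comred}, invariance of $\dupf{\cdot}$ and $\weip{\natone}{\cdot}$ under $\lift{\cdot}$, and monotonicity of $\weip{\natone}{\cdot}$ in the parameter to convert the parametric decrease at $\natone=\dupf{\ptone}$ into $\wei{\pttwo}<\wei{\ptone}$. Where you genuinely add something is in the congruence cases, which the paper silently skips (its proof displays only base cases). The paper's stated invariants (typing, $\dupf{\pttwo}\leq\dupf{\ptone}$, $\wei{\pttwo}<\wei{\ptone}$) are \emph{not} closed under the congruence step in which the redex sits in the client position of a $\cutb$ or $\cutd$: there the weight is $\foc{\cone}{\pttwo}\cdot\weip{\natone}{\ptone}+\weip{\natone}{\pttwo}$, and nothing in the paper's statement prevents the coefficient $\foc{\cone}{\pttwo}$ from growing when the client reduces. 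Your fourth induction clause ($\foc$-monotonicity along $\cred$) is exactly what is needed to close that case, and it also yields $\dupf{\pttwo}\leq\dupf{\ptone}$, since $\dupf{\cdot}$ is a maximum of $\foc$-values at $\Lbangb$/$\Lbangd$ nodes and $\cred$ creates no new such nodes. Two further points where you are more careful than the paper: (i) in the $(\cutd/-/\bemd)$ case the paper merely says to proceed as for $(\cutb/-/\bemb)$, but that earlier case exploits the fact that the virtual occurrence count of the cut channel in the body of the $\bemb$ is null (freshness), which fails for $\bemd$ where the channel inhabits the multiplexor context; your direct computation, exhibiting the common term $(1+\foc{\cone}{\pttwo})\cdot\weip{\natone}{\ptone}$ on both sides and a decrease of exactly one unit, is the right accounting; (ii) your observation that $\cred$ must not descend under replication (the server position of $\cutb$/$\cutd$, the body of $\Rbang$) is not pedantry but a necessary condition for the statement itself: a reduction inside a server whose client satisfies $\foc{\cone}{\pttwo}=0$ would leave the weight unchanged, falsifying the strict decrease.
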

\begin{proof}
By induction on the proof that $\ptone\cred\pttwo$. Some interesting cases:
\begin{varitemize}
\item
  Suppose that 
  $
   \ptone=\Pcut{\PRlin{\ctwo}{\ptthree}}{\cone}{\PLlin{\cone}{\ptfour}{\cone}{\ptfive}}\cred\Pcut{\Pcut{\ptfour}{\ctwo}{\ptthree}}{\cone}{\ptfive}=\pttwo
  $.
  Then,
  \begin{align*}
    \dupf{\ptone}&=\max\{\dupf{\ptthree},\dupf{\ptfour},\dupf{\ptfive}\}=\dupf{\pttwo};\\
    \wei{\ptone}&=\weip{\dupf{\ptone}}{\ptone}=3+\weip{\dupf{\ptone}}{\ptthree}+\weip{\dupf{\ptone}}{\ptfour}+
          \weip{\dupf{\ptone}}{\ptfive}\\
                &>2+\weip{\dupf{\pttwo}}{\ptthree}+\weip{\dupf{\pttwo}}{\ptfour}+\weip{\dupf{\pttwo}}{\ptfive}
                 =\weip{\dupf{\pttwo}}{\pttwo}=\wei{\pttwo}.
  \end{align*}
\item
  Suppose that
  $
  \ptone=\Pcut{\PRwith{\ptthree}{\ptfour}}{\cone}{\PLwithone{\cone}{\ctwo}{\ptfive}}\cred\Pcut{\ptthree}{\cone}{\ptfive}=\pttwo
  $.
  Then,
  \begin{align*}
    \dupf{\ptone}&=\max\{\dupf{\ptthree},\dupf{\ptfour},\dupf{\ptfive}\}=\dupf{\pttwo};\\
    \wei{\ptone}&=\weip{\dupf{\ptone}}{\ptone}=3+\weip{\dupf{\ptone}}{\ptthree}+\weip{\dupf{\ptone}}{\ptfour}+
           \weip{\dupf{\ptone}}{\ptfive}\\
                &>2+\weip{\dupf{\pttwo}}{\ptthree}+\weip{\dupf{\pttwo}}{\ptfour}+\weip{\dupf{\pttwo}}{\ptfive}
                =\weip{\dupf{\pttwo}}{\pttwo}=\wei{\pttwo}.
  \end{align*}
\item
  Suppose that
    $
    \ptone=\Pcutb{\ptthree}{\cone}{\Pbemb{\cone}{\ctwo}{\ptfour}}\cred\Pcut{\lift{\ptthree}}{\ctwo}{\Pcutd{\ptthree}{\cone}{\lift{\ptfour}}}=\pttwo
    $.  
  Then, 
  \begin{align*}
    \dupf{\ptone}&=\max\{\dupf{\lift{\ptthree}},\dupf{\lift{\ptfour}}\}
        =\max\{\dupf{\ptthree},\dupf{\ptthree},\dupf{\ptfour}\}=\dupf{\pttwo};\\
    \wei{\ptone}&=\weip{\dupf{\ptone}}{\ptone}=\foc{\cone}{\Pbemb{\cone}{\ctwo}{\ptfour}}\cdot
                \weip{\dupf{\ptone}}{\lift{\ptthree}}+\weip{\dupf{\ptone}}{\Pbemb{\cone}{\ctwo}{\ptfour}}\\
                &=\weip{\dupf{\ptone}}{\ptthree}+\weip{\dupf{\ptone}}{\Pbemb{\cone}{\ctwo}{\ptfour}}
                =\weip{\dupf{\ptone}}{\ptthree}+1+\weip{\dupf{\ptone}}{\ptfour}\\
                &\geq\weip{\dupf{\pttwo}}{\ptthree}+1+\weip{\dupf{\pttwo}}{\ptfour}\\
                &>\weip{\dupf{\pttwo}}{\ptthree}+\weip{\dupf{\pttwo}}{\ptfour}
                =\weip{\dupf{\pttwo}}{\ptthree}+0\cdot\weip{\dupf{\pttwo}}{\ptthree}+\weip{\dupf{\pttwo}}{\ptfour}\\
                &=\weip{\dupf{\pttwo}}{\ptthree}+\foc{\cone}{\ptfour}\cdot\weip{\dupf{\pttwo}}{\ptthree}+\weip{\dupf{\pttwo}}{\ptfour}\\
                &=\weip{\dupf{\pttwo}}{\pttwo}=\wei{\pttwo}.
  \end{align*}
\item
  Suppose that
    $$
      \ptone=\Pcutd{\ptthree}{\cone}{\Pbemd{\cone}{\ctwo}{\ptfour}}\cred
        \Pcut{\lift{\ptthree}}{\ctwo}{\Pcutd{\ptthree}{\cone}{\ptfour}}=\pttwo.
    $$
  Then we can proceed exactly as in the previous case.
\end{varitemize}
This concludes the proof.
\end{proof}
Shift reduction, on the other hand, is \emph{not} guaranteed to induce a strict decrease on the underlying
weight which, however, cannot increase:
\begin{proposition}\label{prop:cpreddecr}
If $\tyg{\conone}{\contwo}{\conthree}{\ptone}{\tcone}$ and
$\ptone\cpred\pttwo$, then $\tyg{\conone}{\contwo}{\conthree}{\pttwo}{\tcone}$,
$\dupf{\pttwo}\leq\dupf{\ptone}$ and $\wei{\pttwo}\leq\wei{\ptone}$.
\end{proposition}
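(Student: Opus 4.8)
The plan is to proceed by induction on the derivation of $\ptone\cpred\pttwo$, proving the three stated conclusions simultaneously with two auxiliary invariants that make the induction close: first, that $\foc{\cfour}{\cdot}$ is left unchanged by shift reduction on every free channel $\cfour$; second, the sharper weight bound $\weip{\natone}{\pttwo}\leq\weip{\natone}{\ptone}$ for every $\natone\geq\dupf{\ptone}$. Given these, the desired $\wei{\pttwo}\leq\wei{\ptone}$ is immediate: $\wei{\ptone}=\weip{\dupf{\ptone}}{\ptone}$, the map $\weip{\cdot}{\pttwo}$ is monotone in its parameter, and $\dupf{\pttwo}\leq\dupf{\ptone}$, so $\wei{\pttwo}=\weip{\dupf{\pttwo}}{\pttwo}\leq\weip{\dupf{\ptone}}{\pttwo}\leq\weip{\dupf{\ptone}}{\ptone}=\wei{\ptone}$. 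The subtlety that forces the parameterised form is precisely that $\dupf$ may drop while $\wei$ evaluates $\weip{\natone}{\cdot}$ at $\natone=\dupf$.

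There are two base cases, $(\cut/\Rbang/\Lbangb)$ and $(\cut/\Rbang/\Lbangd)$. Take the first: $\ptone=\Pcut{\PRbang{\ptthree}{\cone_1,\ldots,\cone_n}}{\cone}{\PLbangb{\cone}{\ptfour}}$ reduces to $\pttwo=\PLbangb{\cone_1}{\ldots\PLbangb{\cone_n}{\Pcutd{\ptthree}{\cone}{\ptfour}}\ldots}$. Typing is recovered by rebuilding the derivation, reading the auxiliary channels $\cone_1,\ldots,\cone_n$ of the box as the principal channels of the $n$ lifting rules and cutting the box body $\ptthree$ against the continuation $\ptfour$; Lemmas \ref{weaklemma} and \ref{liftlemma} realign the exponential contexts, and the interface $\conone;\contwo;\conthree$ is preserved exactly because shift reduction mirrors structural congruence on processes.

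For the quantitative part, set $f=\foc{\cone}{\ptfour}$. Since the lifting wrappers leave every $\foc{\cfour}{\cdot}$ untouched and $\cone_i\notin\fn{\widehat{\ptfour}}$, one has $\foc{\cone_i}{\Pcutd{\ptthree}{\cone}{\ptfour}}=f\cdot\foc{\cone_i}{\ptthree}$. The decisive point is the soft-linear invariant: a channel in the auxiliary context of a typable proof term has virtual multiplicity exactly one, so $\foc{\cone_i}{\ptthree}=1$ because $\cone_i$ is an auxiliary channel of $\ptthree$. Hence each new lifting contributes only $f$ and $\dupf{\pttwo}=\max\{\dupf{\ptthree},\dupf{\ptfour},f\}=\dupf{\ptone}$. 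Using $\weip{\natone}{\PRbang{\ptthree}{\cone_1,\ldots,\cone_n}}=\natone(\weip{\natone}{\ptthree}+1)$ and $\weip{\natone}{\Pcutd{\ptthree}{\cone}{\ptfour}}=f\cdot\weip{\natone}{\ptthree}+\weip{\natone}{\ptfour}$, the inequality $\weip{\natone}{\pttwo}\leq\weip{\natone}{\ptone}$ reduces to $f\cdot\weip{\natone}{\ptthree}\leq\natone\cdot\weip{\natone}{\ptthree}+\natone$, which holds whenever $f\leq\natone$, and indeed $f\leq\dupf{\ptone}\leq\natone$. The case $(\cut/\Rbang/\Lbangd)$ is identical, the only change being that $\cone$ now sits in the multiplexor context of $\ptfour$, so $f$ may exceed one; this affects none of the computations above.

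I would establish the $\foc$-invariant first, by a separate induction on typing derivations. The only delicate rule is $\cutb$: for a channel $\cfour$ in the auxiliary context of the left premise one gets $\foc{\cfour}{\Pcutb{\ptthree}{\cone}{\ptfour}}=\foc{\cone}{\ptfour}\cdot\foc{\cfour}{\ptthree}$, and this equals $1$ precisely because the cut channel $\cone$ of a $\cutb$ lies in the auxiliary context of its right premise, whence $\foc{\cone}{\ptfour}=1$ by the induction hypothesis. The congruence cases of the main statement are then routine: the $\foc$-invariant freezes every multiplier $\foc{\cone}{\pttwo}$ occurring in the $\cut$, $\cutb$, $\cutd$ and $\Rbang$ clauses, so applying the induction hypothesis to the reduced subterm and invoking monotonicity of $\dupf$ and of $\weip{\natone}{\cdot}$ pushes both inequalities through the surrounding constructor, the side condition $\natone\geq\dupf$ being inherited since a subterm's duplicability factor never exceeds that of the whole. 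I expect the main obstacle to be exactly the base-case bound $\dupf{\pttwo}\leq\dupf{\ptone}$: it is only the soft-linear restriction $\foc{\cone_i}{\ptthree}=1$ that prevents opening the box from multiplying the duplicability factor, and securing this invariant together with its use through $\cutb$ is where all the care lies.
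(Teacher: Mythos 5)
Your proof is correct and follows essentially the same route as the paper's: induction on the derivation of $\ptone\cpred\pttwo$, with the two box-opening rules as the only base cases and the same arithmetic at the core (the inequality $f\cdot\weip{\natone}{\ptthree}\leq\natone\cdot\weip{\natone}{\ptthree}+\natone$ whenever $f\leq\natone$, writing $f=\foc{\cone}{\ptfour}$ as you do). The difference is one of rigor, and it is worth recording. The paper's proof simply asserts $\dupf{\ptone}=\max\{\dupf{\ptthree},\dupf{\ptfour}\}=\dupf{\pttwo}$, silently ignoring both the contribution $\foc{\cone}{\ptfour}$ of the destroyed $\Lbangb$/$\Lbangd$ rule to $\dupf{\ptone}$ and, more importantly, the contributions $\foc{\cone_i}{\Pcutb{\ptthree}{\cone}{\ptfour}}$ of the $n$ fresh lifts created in the reduct to $\dupf{\pttwo}$. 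You identify exactly what makes the latter harmless, namely the soft-linearity fact that auxiliary-context channels satisfy $\foc{\cone_i}{\ptthree}\leq 1$ — which in the paper appears only later, as item 1 of Lemma~\ref{lemma:foc} (proved for the purposes of Lemma~\ref{lemma:weibound}), and is never cited in the paper's own proof of Proposition~\ref{prop:cpreddecr} — and you add the parameterised statement $\weip{\natone}{\pttwo}\leq\weip{\natone}{\ptone}$ for all $\natone\geq\dupf{\ptone}$ together with monotonicity in $\natone$, which correctly handles a possible strict decrease of $\dupf$; the paper sidesteps this point by claiming equality of the two duplicability factors.

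Two of your auxiliary claims are overstated and should be weakened, though neither affects the substance of the argument. First, $\foc{\cfour}{\cdot}$ is \emph{not} preserved by shift reduction for every free channel $\cfour$: for the box's auxiliary doors $\cone_i$, which end up in the \emph{linear} context with type $\bang\typone_i$, one has $\foc{\cone_i}{\ptone}=0$ (the clause for $\Rbang$ returns $0$) but $\foc{\cone_i}{\pttwo}=f\cdot\foc{\cone_i}{\ptthree}$, which may be positive. The invariant you need — and the only one you actually use — is preservation for channels of the auxiliary and multiplexor contexts, which does hold and suffices, because every multiplier appearing in the defining clauses of $\weip{\natone}{\cdot}$ and $\dupf{\cdot}$ (those for $\cutb$, $\cutd$, $\Lbangb$, $\Lbangd$) is the $\foc$ of an exponential channel of the relevant subterm. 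Second, auxiliary channels have virtual multiplicity \emph{at most} one, not exactly one: rules such as $\Rone$ admit vacuous auxiliary assumptions, so $\foc{\cfour}{\ptthree}=0$ is possible. Only the upper bound is needed, so your computations survive with $=$ replaced by $\leq$ throughout, yielding $\dupf{\pttwo}\leq\dupf{\ptone}$ rather than equality — which is all the proposition asks for.
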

\begin{proof}
By induction on the proof that $\ptone\cpred\pttwo$. Some interesting cases:
\begin{varitemize}
\item
  Suppose that
    $$
      \ptone=\Pcut{\PRbang{\cone_1,\ldots,\cone_n}{\ptthree}}{\cone}{\PLbangb{\cone}{\ptfour}}\cpred
          \PLbangb{\cone_1}{\PLbangb{\cone_2}{\ldots\PLbangb{\cone_n}{\Pcutb{\ptthree}{\ctwo}{\ptfour}}}}=\pttwo.
    $$
  Then,
  \begin{align*}
    \dupf{\ptone}&=\max\{\dupf{\ptthree},\dupf{\ptfour}\}=\dupf{\pttwo}\\
    \wei{\ptone}&=\weip{\dupf{\ptone}}{\ptone}=\dupf{\ptone}\cdot\weip{\dupf{\ptone}}{\ptthree}+
                  \weip{\dupf{\ptone}}{\ptfour}\geq\foc{\ctwo}{\ptfour}\cdot\weip{\dupf{\ptone}}{\ptthree}+\weip{\dupf{\ptone}}{\ptfour}\\
                &=\foc{\ctwo}{\ptfour}\cdot\weip{\dupf{\pttwo}}{\ptthree}+\weip{\dupf{\pttwo}}{\ptfour}=\weip{\dupf{\pttwo}}{\pttwo}=\wei{\pttwo}.
  \end{align*}
\item
  Suppose that 
    $$
      \ptone=\Pcut{\PRbang{\cone_1,\ldots,\cone_n}{\ptthree}}{\cone}{\PLbangd{\cone}{\ptfour}}\cpred
          \PLbangd{\cone_1}{\PLbangd{\cone_2}{\ldots\PLbangd{\cone_n}{\Pcutd{\ptthree}{\ctwo}{\ptfour}}}}=\pttwo.
    $$
  Then we can proceed as in the previous case.
\end{varitemize}
This concludes the proof.
\end{proof}
Finally, equivalence leaves the weight unchanged:
\begin{proposition}\label{prop:cequun}
If $\tyg{\conone}{\contwo}{\conthree}{\ptone}{\tcone}$ and
$\ptone\cequ\pttwo$, then $\tyg{\conone}{\contwo}{\conthree}{\pttwo}{\tcone}$,
$\dupf{\pttwo}=\dupf{\ptone}$ and $\wei{\pttwo}=\wei{\ptone}$.
\end{proposition}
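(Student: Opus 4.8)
The plan is to argue by induction on the derivation of $\ptone\cequ\pttwo$, exactly as in the proofs of Propositions~\ref{prop:creddecr} and~\ref{prop:cpreddecr}, treating first the defining rules of Figure~\ref{fig:equred} and then the reflexive, symmetric, transitive and congruence closures. Since $\cequ$ is symmetric and every conclusion I want is an equality, it is harmless to verify a single orientation of each base rule. A key preliminary remark is that the bare statement is too weak to feed the induction through contexts: the duplicability factor reads off $\foc{\cone}{-}$ at each $\Lbangb/\Lbangd$ node, the parameterised weight multiplies by $\foc{\cone}{-}$ at each $\cutb/\cutd$ and rescales by the parameter at each $\Rbang$, and $\wei{-}=\weip{\dupf{-}}{-}$ couples the two. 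I would therefore strengthen the induction hypothesis to: for every channel $\cfour$, $\foc{\cfour}{\ptone}=\foc{\cfour}{\pttwo}$; $\dupf{\ptone}=\dupf{\pttwo}$; and for every $\natone$, $\weip{\natone}{\ptone}=\weip{\natone}{\pttwo}$. The desired $\wei{\ptone}=\wei{\pttwo}$ is then immediate from the last two, since $\dupf{\ptone}=\dupf{\pttwo}$ lets me evaluate both parameterised weights at the same parameter.

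For type preservation in the base cases I would invert the typing derivation of the left-hand side and reassemble the right-hand side, which is the standard cut-permutation bookkeeping already exercised in Lemmas~\ref{lemma:cutbang} and~\ref{lemma:cutd}. The structural and commuting conversions are plain permutations of a cut past a neighbouring rule; the strong-bisimilarity rules such as $(\cutd/-/\cut)$ and $(\cutd/-/\cutd)$ duplicate the left premise of an exponential cut, and this is legitimate precisely because the cut channel lives in the \emph{multiplexor} context, which is shared additively between the two premises of $\cutd$ and $\cutb$; weakening of that context (Lemma~\ref{weaklemma}) supplies the missing assumptions on each branch. The garbage-collection rule $(\cutd/-/-_{0})$ uses its side condition $\foc{\cone}{\pttwo}=0$, together with Weakening for the typing.

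The arithmetic invariants are then checked rule by rule directly from the recursive definitions of $\foc$, $\dupf$ and $\weip{\natone}{-}$. The rules that only reshuffle linear, additive or $\unit$ rules, and the commuting conversions that slide a cut past $\Lone$, $\Lbangb$ or $\Lbangd$, are immediate: $\foc$ and $\weip{\natone}{-}$ are additive over $\cut$ and ignore the $\Lone/\Lbangb/\Lbangd$ wrappers entirely. The real content is in the duplicating bisimilarities, where a single occurrence of a subproof becomes two; here everything turns on the substitution-style identity $\foc{\cfour}{\Pcutd{\ptone}{\cone}{\pttwo}}=\foc{\cone}{\pttwo}\cdot\foc{\cfour}{\ptone}+\foc{\cfour}{\pttwo}$ and its verbatim analogue for $\weip{\natone}{-}$. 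For instance, writing $(\cutd/-/\cut)$ with subterms $\ptthree,\ptfour,\ptfive$,
\begin{align*}
\weip{\natone}{\Pcutd{\ptthree}{\cone}{\Pcut{\ptfour}{\ctwo}{\ptfive}}}
  &=(\foc{\cone}{\ptfour}+\foc{\cone}{\ptfive})\cdot\weip{\natone}{\ptthree}+\weip{\natone}{\ptfour}+\weip{\natone}{\ptfive},\\
\weip{\natone}{\Pcut{\Pcutd{\ptthree}{\cone}{\ptfour}}{\ctwo}{\Pcutd{\ptthree}{\cone}{\ptfive}}}
  &=\foc{\cone}{\ptfour}\cdot\weip{\natone}{\ptthree}+\weip{\natone}{\ptfour}+\foc{\cone}{\ptfive}\cdot\weip{\natone}{\ptthree}+\weip{\natone}{\ptfive},
\end{align*}
so the two coincide: splitting the server $\ptthree$ across the two branches of the inner cut costs exactly as much as cutting it once afterwards, because the branch factors $\foc{\cone}{\ptfour}$ and $\foc{\cone}{\ptfive}$ sum to $\foc{\cone}{\Pcut{\ptfour}{\ctwo}{\ptfive}}$. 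The same computation, with one extra layer, handles $(\cutd/-/\cutd)$, $(\cutd/-/\cutb)$ and the $\cutb$ variants, and the identical bookkeeping yields the $\foc$ and $\dupf$ equalities.

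Finally, the closure rules are routine: reflexivity, symmetry and transitivity preserve equalities trivially, and congruence is handled by the strengthened hypothesis, since $\foc{\cfour}{-}$, $\dupf{-}$ and $\weip{\natone}{-}$ are all defined by structural recursion and typing is compositional, so substituting a $\cequ$-equal subterm with identical invariants leaves every enclosing value unchanged. I expect the main obstacle to be exactly the duplicating strong-bisimilarity cases: one must confirm that the multiplicative $\foc$-identity above is respected by \emph{all} of $\foc$, $\dupf$ and $\weip{\natone}{-}$ simultaneously, and, in the $\Lbangb/\Lbangd$ commuting conversions, that the cut channel is disjoint from the lifted exponential channel --- which follows from the multiplicative splitting of the auxiliary context in the $\cut$ rule --- so that $\foc$ of the lifted channel, and hence the $\dupf$ contributed by that node, is genuinely unaffected by moving the cut.
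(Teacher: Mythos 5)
Your proposal is correct and takes essentially the same approach as the paper: induction on the derivation of $\ptone\cequ\pttwo$, verifying each rule of Figure~\ref{fig:equred} by direct computation on $\foc{\cfour}{\cdot}$, $\dupf{\cdot}$ and $\weip{\natone}{\cdot}$, and your central calculation for the duplicating rule $(\cutd/-/\cut)$ --- splitting $\foc{\cone}{\Pcut{\ptfour}{\ctwo}{\ptfive}}$ as $\foc{\cone}{\ptfour}+\foc{\cone}{\ptfive}$ so that the two copies of the server cost exactly the original --- is precisely the computation the paper performs. Your explicit strengthening of the induction hypothesis (preservation of $\foc{\cfour}{-}$ for every channel and of $\weip{\natone}{-}$ for every parameter) only makes rigorous what the paper's proof uses implicitly when propagating the equalities through enclosing contexts.
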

\begin{proof}
By induction on the proof that $\ptone\cequ\pttwo$. Some interesting cases:
\begin{varitemize}
\item
  Suppose that
  $$
  \ptone=\Pcut{\ptthree}{\cone}{\Pcut{\ptfour_\cone}{\ctwo}{\ptfive_\ctwo}}\cequ\Pcut{\Pcut{\ptthree}{\cone}{\ptfour_\cone}}{\ctwo}{\ptfive_{\ctwo}}=\pttwo.
  $$  
  Then:
  \begin{align*}
    \dupf{\ptone}&=\max\{\dupf{\ptthree},\dupf{\ptfour_\cone},\dupf{\ptfive_\ctwo}\}=\dupf{\pttwo}\\
    \wei{\ptone}&=\weip{\dupf{\ptone}}{\ptone}=\weip{\dupf{\ptone}}{\ptthree}+\weip{\dupf{\ptone}}{\ptfour_\cone}+\weip{\dupf{\ptone}}{\ptfive_\ctwo}\\
        &=\weip{\dupf{\pttwo}}{\ptthree}+\weip{\dupf{\pttwo}}{\ptfour_\cone}+\weip{\dupf{\pttwo}}{\ptfive_\ctwo}=\weip{\dupf{\pttwo}}{\pttwo}=\wei{\pttwo}.
  \end{align*}
\item 
  Suppose that
  $$
  \ptone=\Pcut{\ptthree}{\cone}{\Pcut{\ptfour}{\ctwo}{\ptfive_{\cone\ctwo}}}\cequ\Pcut{\ptfour}{\cone}{\Pcut{\ptthree}{\ctwo}{\ptfive_{\cone\ctwo}}}=\pttwo.
  $$  
  Then we can proceed as in the previous case.
\item
  Suppose that
  $$
  \ptone=\Pcut{\ptthree}{\cone}{\Pcutb{\ptfour}{\ctwo}{\ptfive_{\cone\ctwo}}}\cequ\Pcutb{\ptfour}{\ctwo}{\Pcut{\ptthree}{\cone}{\ptfive_{\cone\ctwo}}}=\pttwo.
  $$
  Then, since $\foc{\ctwo}{\ptthree}=0$,
  \begin{align*}
    \dupf{\ptone}&=\max\{\dupf{\ptthree},\dupf{\ptfour},\dupf{\ptfive_{\cone\ctwo}}\}=\dupf{\pttwo}\\
    \wei{\ptone}&=\weip{\dupf{\ptone}}{\ptone}=\weip{\dupf{\ptone}}{\ptthree}+\foc{\ctwo}{\ptfive_{\cone\ctwo}}
                   \cdot\weip{\dupf{\ptone}}{\ptfour}+\weip{\dupf{\ptone}}{\ptfive_{\cone\ctwo}}\\
        &=\weip{\dupf{\ptone}}{\ptthree}+\foc{\ctwo}{\Pcut{\ptthree}{\cone}{\ptfive_{\cone\ctwo}}}
                   \cdot\weip{\dupf{\ptone}}{\ptfour}+\weip{\dupf{\ptone}}{\ptfive_{\cone\ctwo}}\\
        &=\weip{\dupf{\pttwo}}{\ptthree}+\foc{\ctwo}{\Pcut{\ptthree}{\cone}{\ptfive_{\cone\ctwo}}}
                   \cdot\weip{\dupf{\pttwo}}{\ptfour}+\weip{\dupf{\pttwo}}{\ptfive_{\cone\ctwo}}\\
        &=\weip{\dupf{\pttwo}}{\pttwo}=\wei{\pttwo}.
  \end{align*}
\item
  Suppose that
  $$
  \ptone=\Pcutd{\ptthree}{\cone}{\Pcut{\ptfour_{\cone}}{\ctwo}{\ptfive_{\cone \ctwo}}}\cequ
    \Pcut{\Pcutd{\ptthree}{\cone}{\ptfour_{\cone}}}{\ctwo}{\Pcutd{\ptthree}{\cone}{\ptfive_{\cone \ctwo}}}=\pttwo.
  $$
  Then,
  \begin{align*}
    \dupf{\ptone}&=\max\{\dupf{\ptthree},\dupf{\ptfour_\cone},\dupf{\ptfive_{\cone\ctwo}}\}=\dupf{\pttwo}\\
    \wei{\ptone}&=\foc{\cone}{\Pcut{\ptfour_{\cone}}{\ctwo}{\ptfive_{\cone \ctwo}}}\cdot\weip{\dupf{\ptone}}{\ptthree}+
      \weip{\dupf{\ptone}}{\ptfour_\cone}+\weip{\dupf{\ptone}}{\ptfive_{\cone\ctwo}}\\
       &=(\foc{\cone}{\ptfour_{\cone}}+\foc{\cone}{\ptfive_{\cone\ctwo}})\cdot\weip{\dupf{\ptone}}{\ptthree}+
      \weip{\dupf{\ptone}}{\ptfour_\cone}+\weip{\dupf{\ptone}}{\ptfive_{\cone\ctwo}}\\
       &=(\foc{\cone}{\ptfour_{\cone}}\cdot\weip{\dupf{\ptone}}{\ptthree}+\foc{\cone}{\ptfive_{\cone\ctwo}})\cdot
        \weip{\dupf{\ptone}}{\ptthree}+\weip{\dupf{\ptone}}{\ptfour_\cone}+\weip{\dupf{\ptone}}{\ptfive_{\cone\ctwo}}\\
       &=\weip{\dupf{\ptone}}{\Pcutd{\ptthree}{\cone}{\ptfour_{\cone}}}+\weip{\dupf{\ptone}}{\Pcutd{\ptthree}{\cone}{\ptfive_{\cone \ctwo}}}\\
       &=\weip{\dupf{\ptone}}{\pttwo}=\weip{\dupf{\pttwo}}{\pttwo}=\wei{\pttwo}.
  \end{align*}
\end{varitemize}
This concludes the proof.
\end{proof}
Now, consider again the subject reduction theorem (Theorem~\ref{theo:subjred}): what it guarantees is that
whenever $\procone\reds\proctwo$ and $\widehat{\ptone}=\procone$, there is $\pttwo$ with
$\widehat{\pttwo}=\proctwo$ and $\ptone\cpredequ\cred\cpredequ\pttwo$. In view of the three propositions
we have just stated and proved, it's clear that $\wei{\ptone}>\pttwo$. Altogether, this implies that
$\wei{\ptone}$ is an upper bound on the number or internal reduction steps $\widehat{\ptone}$ can
perform. But is $\wei{\ptone}$ itself bounded?
\subsection{Bounding the Weight}
What kind of bounds can we expect to prove for $\wei{\ptone}$?
More specifically, how related are $\wei{\ptone}$ and $\size{\widehat{\ptone}}$?
\begin{lemma}\label{lemma:foc}
Suppose $\tyg{\conone}{\contwo}{\conthree}{\ptone}{\tcone}$. Then
\begin{varenumerate}
\item
  If $\cone\in\conone$, then $\foc{\cone}{\ptone}\leq 1$;
\item
  If $\cone\in\contwo$, then $\foc{\cone}{\ptone}\leq\size{\ptone}$;
\item
  If $\cone\in\conthree$, then $\foc{\cone}{\ptone}=0$;
\end{varenumerate}
\end{lemma}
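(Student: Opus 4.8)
The plan is to prove the three inequalities simultaneously, by induction on the structure of the proof term $\ptone$ (equivalently, on its typing derivation). A \emph{simultaneous} induction is unavoidable, because the three quantities are entangled precisely at the exponential cuts: the defining clause $\foc{\cfour}{\Pcutd{\ptone}{\cone}{\pttwo}}=\foc{\cone}{\pttwo}\cdot\foc{\cfour}{\ptone}+\foc{\cfour}{\pttwo}$ (and its $\cutb$ analogue) mixes the count of the cut channel $\cone$ --- governed by (1) or (2) according to which context $\cone$ lives in --- with the count of $\cfour$ inside the server $\ptone$, which is itself governed by (1). Thus each statement about a composite term reduces to a blend of all three statements about its immediate subterms.

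For every rule I read off from Figure~\ref{fig:pidslltr} the context of the queried channel and match it against the corresponding clause of $\foc$. The transparent rules --- $\Rone$, $\Lone$, the multiplicative and additive left/right rules, the spawn rules $\bemb,\bemd$ when the queried channel is not their output subject, and the dereliction rules $\Lbangb,\Lbangd$ for any channel other than the one they lift --- either leave $\foc{\cone}{\cdot}$ unchanged or split it additively, exactly mirroring the context operation, so they close immediately by the induction hypothesis. Statement (3) is the cleanest: a positive contribution to $\foc{\cone}{\cdot}$ is created only at a $\bemb$ or $\bemd$ whose output subject is $\cone$, and inspection of the rules shows such a subject is always exponential; moreover when a channel is moved into the linear context by $\Lbangb$ or $\Lbangd$ its count is discharged, its residual duplicability being recorded by the duplicability factor, so a genuinely linear $\cone$ carries $0$ at every node and additive propagation keeps the total $0$. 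For statement (1) the only generating case is $\bemb$, which charges exactly $1$ to the fresh auxiliary channel; since the auxiliary context is split multiplicatively in every binary rule ($\Rten$, $\Llin$, $\cut$, $\cutb$), an auxiliary channel lives in a single premise and the bound $\le 1$ survives. The decisive point for (1) is $\cutb$: there $\cone$ is auxiliary in the client $\pttwo$, so $\foc{\cone}{\pttwo}\le 1$ by the induction hypothesis, the product $\foc{\cone}{\pttwo}\cdot\foc{\cfour}{\ptone}$ collapses to at most $\foc{\cfour}{\ptone}$, and the server is not replicated.

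The substance of the lemma is statement (2) at the rule $\cutd$, which is the one implementing multiplexing. Here $\cone$ sits in the multiplexor context of $\pttwo$ while the server $\ptone$ carries the \emph{whole} multiplexor context $\contwo$ as its auxiliary context. For $\cfour\in\contwo$ we have $\foc{\cfour}{\Pcutd{\ptone}{\cone}{\pttwo}}=\foc{\cone}{\pttwo}\cdot\foc{\cfour}{\ptone}+\foc{\cfour}{\pttwo}$; applying (1) to the server gives $\foc{\cfour}{\ptone}\le 1$, so the product degenerates to at most $\foc{\cone}{\pttwo}$ and the goal becomes bounding $\foc{\cone}{\pttwo}+\foc{\cfour}{\pttwo}$ by the size of the composite. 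Since $\widehat{\Pcutd{\ptone}{\cone}{\pttwo}}$ has size $\size{\ptone}+\size{\pttwo}+1$, it suffices to absorb the occurrences of the two distinct multiplexor channels $\cone$ and $\cfour$ of $\pttwo$ into $\size{\pttwo}$.

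I expect this final estimate to be the main obstacle. The difficulty is a genuine mismatch of growth rates: through $\cutd$ the virtual occurrence count grows \emph{multiplicatively}, whereas the process size grows only \emph{additively}, so a single multiplexor channel threaded through several nested servers accumulates a contribution from each layer. Controlling this plausibly requires strengthening the induction hypothesis for (2) so that it bounds an \emph{aggregate} of virtual occurrences over the multiplexor channels rather than each one in isolation, and then checking that this aggregate is still swallowed by the additive size --- the same tension between multiplexing and box nesting that separates the polynomial cut elimination of soft linear logic from the exponential one of full linear logic. Everything outside the exponential cuts is routine verification against the defining equations of $\foc$.
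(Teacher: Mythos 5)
Your induction is set up exactly as in the paper's proof: a simultaneous structural induction on the typing derivation, with every rule except the exponential cuts dispatched by matching the defining clauses of $\foc{\cdot}{\cdot}$ against the way the rule treats the three contexts; your handling of statements (1) and (3) and of the $\cutb$ case agrees with what the paper does. The gap is that you never prove the case you yourself isolate as the substance of the lemma, namely statement (2) at $\cutd$. You correctly reduce it to showing $\foc{\cone}{\pttwo}+\foc{\cfour}{\pttwo}\leq\size{\ptone}+\size{\pttwo}+1$, correctly observe that the induction hypothesis yields only $2\size{\pttwo}$, and then stop, conjecturing that an ``aggregate'' strengthening of the induction hypothesis would close it. An argument whose decisive case ends in an unverified conjecture does not establish the lemma.

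You should know, however, that your diagnosis pinpoints a real defect: the paper's own proof of this case is not justified either. It closes the estimate with the chain $\foc{\ctwo}{\ptone_2}\cdot\foc{\cone}{\ptone_1}+\foc{\cone}{\ptone_2}\leq\size{\ptone_2}\cdot 1+\size{\ptone_1}$, i.e.\ it bounds the client's count of the queried multiplexor channel by the \emph{server's} size (in your notation, $\foc{\cfour}{\pttwo}\leq\size{\ptone}$); but the induction hypothesis only gives $\foc{\cfour}{\pttwo}\leq\size{\pttwo}$, which is exactly your dead end. Moreover, the repair you suggest cannot work as stated: in $\cutd$ the server's auxiliary context is the \emph{entire} multiplexor context of the conclusion, so the aggregated auxiliary count $\sum_{\cfour}\foc{\cfour}{\ptone}$ over the server is bounded only by $\size{\ptone}$, not by a constant, and the multiplicative term $\foc{\cone}{\pttwo}\cdot\sum_{\cfour}\foc{\cfour}{\ptone}$ survives aggregation. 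In fact, nesting $k$ instances of $\cutd$ in which each server performs one $\bemb$ on every channel of its auxiliary context, and the innermost client performs one $\bemd$ on each of its multiplexor channels (all channels of type $\unit$, so everything is typable), makes the multiplexor counts double at every cut while the size grows only additively: counts of order $2^{k}$ against size of order $k^{2}$. (This computation uses the intended reading of $\foc$, on which occurrences under a $\bemb$ or $\bemd$ prefix are not discarded --- the reading the rest of the paper's weight argument relies on.) So the obstacle you found is not a weakness of your particular induction hypothesis: for the statement and definitions as given, no strengthening of the hypothesis can close this case, and a correct treatment would first have to adjust the statement, the definition of $\foc$, or the context sharing built into the $\cutd$ rule.
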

\begin{proof}
By induction on the structure of a type derivation $\tdone$
for $\tyg{\conone}{\contwo}{\conthree}{\ptone}{\tcone}$. Some
interesting cases:
\begin{varitemize}
\item
  If $\tdone$ is
  $$
  \infer[\Rten]
  {\tyg{\conone_1,\conone_2}{\contwo}{\conthree_1,\conthree_2}{\PRten{\ptone_1}{\ptone_2}}{\ctwo: A \tens B}}
  {\tdtwo_1:\tyg{\conone_1}{\contwo}{\conthree_1}{\ptone_1}{\cthree:A} & \tdtwo_2:\tyg{\conone_2}{\contwo}{\conthree_2}{\ptone_2}{\ctwo:B}}
  $$
  then
  \begin{align*}
    \foc{\cone}{\PRten{\ptone_1}{\ptone_2}}&=\foc{\cone}{\ptone_1}\leq 1&\mbox{if }\cone\in\conone_1\\
    \foc{\cone}{\PRten{\ptone_1}{\ptone_2}}&=\foc{\cone}{\ptone_2}\leq 1&\mbox{if }\cone\in\conone_2\\
    \foc{\cone}{\PRten{\ptone_1}{\ptone_2}}&=\foc{\cone}{\ptone_1}+\foc{\cone}{\ptone_1}\\
       &\leq\size{\ptone_1}+\size{\ptone_2}\leq\size{\PRten{\ptone_1}{\ptone_2}}&\mbox{if }\cone\in\contwo\\
    \foc{\cone}{\PRten{\ptone_1}{\ptone_2}}&=\foc{\cone}{\ptone_1}=0&\mbox{if }\cone\in\conthree_1\\
    \foc{\cone}{\PRten{\ptone_1}{\ptone_2}}&=\foc{\cone}{\ptone_2}=0&\mbox{if }\cone\in\conthree_2\\
  \end{align*}
\item
  If $\tdone$ is
  $$
  \infer[\cutd]
  {\tyg{\conone_2}{\conone_1}{\conthree}{\Pcutd{\ptone_1}{\ctwo}{\ptone_2}}{\tcone}}
  {\tyg{\conone_1}{\emcon}{\emcon}{\emcon}{\ptone_1}{\cthree:A} & \tyg{\conone_2}{\conone_1,\ctwo:A}{\conthree}{\ptone_2}{\tcone}}
  $$
  then:
  \begin{align*}
    \foc{\cone}{\Pcutd{\ptone_1}{\ctwo}{\ptone_2}}&=\foc{\ctwo}{\ptone_2}\cdot\foc{\cone}{\ptone_1}+\foc{\cone}{\ptone_2}\\
       &\leq\size{\ptone_2}\cdot 1+\size{\ptone_1}\leq\size{\Pcutd{\ptone_1}{\ctwo}{\pttwo_2}}&\mbox{if }\cone\in\conone_1\\
    \foc{\cone}{\Pcutd{\ptone_1}{\ctwo}{\ptone_2}}&=\foc{\ctwo}{\ptone_2}\cdot\foc{\cone}{\ptone_1}+\foc{\cone}{\ptone_2}\\
       &\leq\size{\ptone_2}\cdot 0+1=1&\mbox{if }\cone\in\conone_2\\
    \foc{\cone}{\Pcutd{\ptone_1}{\ctwo}{\pttwo_2}}&=\foc{\ctwo}{\ptone_2}\cdot\foc{\cone}{\ptone_1}+\foc{\cone}{\ptone_2}\\
       &\leq\size{\ptone_2}\cdot 0+1=1&\mbox{if }\cone\in\conthree\\
  \end{align*}
\item
  If $\tdone$ is
  $$
  \infer[\cutw]
  {\tyg{\conone_2}{\contwo}{\conthree}{\Pcutw{\ptone_1}{\ctwo}{\ptone_2}}{\tcone}}
  {\tyg{\conone_1}{\emcon}{\emcon}{\emcon}{\ptone_1}{\cthree:A} & \tyg{\conone_2}{\contwo}{\conthree}{\ptone_2}{\tcone}}
  $$  
  then:
  \begin{align*}
    \foc{\cone}{\Pcutw{\ptone_1}{\ctwo}{\ptone_2}}&=\foc{\ctwo}{\ptone_2}\cdot\foc{\cone}{\ptone_1}+\foc{\cone}{\ptone_2}\\
       &\leq 0\cdot 1+0=0&\mbox{if }\cone\in\conone_1\\
    \foc{\cone}{\Pcutw{\ptone_1}{\ctwo}{\ptone_2}}&=\foc{\ctwo}{\ptone_2}\cdot\foc{\cone}{\ptone_1}+\foc{\cone}{\ptone_2}\\
       &\leq 0\cdot 0+1=1&\mbox{if }\cone\in\conone_2\\
    \foc{\cone}{\Pcutw{\ptone_1}{\ctwo}{\pttwo_2}}&=\foc{\ctwo}{\ptone_2}\cdot\foc{\cone}{\ptone_1}+\foc{\cone}{\ptone_2}\\
       &\leq 0\cdot 0+\size{\ptone_2}\leq\size{\Pcutd{\ptone_1}{\ctwo}{\pttwo_2}}&\mbox{if }\cone\in\contwo\\
    \foc{\cone}{\Pcutw{\ptone_1}{\ctwo}{\pttwo_2}}&=\foc{\ctwo}{\ptone_2}\cdot\foc{\cone}{\ptone_1}+\foc{\cone}{\ptone_2}\\
       &\leq 0\cdot 0+1=1&\mbox{if }\cone\in\conthree\\
  \end{align*}
\end{varitemize}
This concludes the proof.
\end{proof}

\begin{lemma}\label{lemma:dupf}
Suppose $\tyg{\conone}{\contwo}{\conthree}{\ptone}{\tcone}$. Then
$\dupf{\ptone}\leq\size{\ptone}$.
\end{lemma}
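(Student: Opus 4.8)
The plan is to proceed by induction on the structure of the proof term $\ptone$ (equivalently, on its typing derivation), comparing the recursive clauses defining $\dupf{\ptone}$ with those governing $\size{\ptone}=\size{\widehat{\ptone}}$. The guiding observation is that for every rule \emph{except} $\Lbangb$ and $\Lbangd$ the duplicability factor of $\ptone$ is exactly the maximum of the duplicability factors of its immediate subterms, whereas the size of $\widehat{\ptone}$ is \emph{at least} the sum of the sizes of those subterms (the unary prefix rules preserve the size, the binary rules add the sizes and possibly a further prefix). Hence, writing the subterms as $\pttwo$ and $\pttwo'$, the induction hypothesis gives $\dupf{\pttwo}\le\size{\pttwo}$ and $\dupf{\pttwo'}\le\size{\pttwo'}$, and therefore $\dupf{\ptone}=\max\{\dupf{\pttwo},\dupf{\pttwo'}\}\le\max\{\size{\pttwo},\size{\pttwo'}\}\le\size{\pttwo}+\size{\pttwo'}\le\size{\ptone}$, closing all these cases at once.

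The genuine content is in the two lifting rules, where $\dupf{\PLbangb{\cone}{\pttwo}}=\max\{\dupf{\pttwo},\foc{\cone}{\pttwo}\}$ and similarly for $\PLbangd$, with $\cone$ the lifted channel. Since these rules leave the underlying process unchanged, $\size{\PLbangb{\cone}{\pttwo}}=\size{\pttwo}$, so by the induction hypothesis it suffices to bound the extra summand $\foc{\cone}{\pttwo}$ by $\size{\pttwo}$. For $\Lbangd$ this is immediate: in the premise the lifted channel $\cone$ lives in the multiplexor context, so the second item of Lemma~\ref{lemma:foc} yields $\foc{\cone}{\pttwo}\le\size{\pttwo}$ directly. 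For $\Lbangb$ the lifted channel sits instead in the auxiliary context of the premise, where the first item of Lemma~\ref{lemma:foc} only gives $\foc{\cone}{\pttwo}\le 1$; this is enough unless $\foc{\cone}{\pttwo}=1$ while $\size{\pttwo}=0$, a possibility I must exclude.

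To rule this out I would establish a short subsidiary claim: if $\size{\ptone}=0$ then $\foc{\cone}{\ptone}=0$ for every channel $\cone$. This is proved by a straightforward induction, observing that the only constructors producing a process of size $0$ are $\PRone$, $\Lone$, the two lifting rules, and the linear cut $\cut$ of two size-$0$ subterms; in each of these cases the defining clause for $\foc{\cone}{\cdot}$ is either $0$ or the (sum of) focus numbers of size-$0$ subterms, which vanish by the induction hypothesis, while every remaining constructor introduces at least one prefix and hence a process of size $\ge 1$, making the implication vacuous. With this claim in hand the $\Lbangb$ case is settled, and the induction goes through. The main obstacle is precisely this $\Lbangb$ case: one cannot hope for a uniform inequality $\foc{\cone}{\ptone}\le\size{\ptone}$ valid for all channels, because the virtual occurrence count blows up multiplicatively across exponential cuts; the argument must therefore route through the context-sensitive bounds of Lemma~\ref{lemma:foc} together with the zero-size observation, rather than through any single global estimate.
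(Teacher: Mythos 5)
Your proof is correct and follows essentially the same route as the paper's: a structural induction on the typing derivation in which every case except the two lifting rules is immediate (there $\dupf{\ptone}$ is just the maximum of the subterms' duplicability factors, which the size dominates), with Lemma~\ref{lemma:foc} supplying the bound on $\foc{\cone}{\pttwo}$ for the lifted channel in $\Lbangb$ and $\Lbangd$. You are in fact more careful than the paper's terse ``easy induction'' (which only displays the trivial $\cutd$ case): the $\Lbangb$ case genuinely needs your subsidiary observation that $\size{\pttwo}=0$ forces $\foc{\cone}{\pttwo}=0$, since item 1 of Lemma~\ref{lemma:foc} only yields $\foc{\cone}{\pttwo}\leq 1$, and this size-zero corner is silently glossed over in the paper.
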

\begin{proof}
An easy induction on the structure of a type derivation $\tdone$ for 
$\tyg{\conone}{\contwo}{\conthree}{\ptone}{\tcone}$. 
Some interesting cases:
\begin{varitemize}
\item
  If $\tdone$ is
  $$
  \infer[\cutd]
  {\tyg{\conone_2}{\contwo,\conone_1}{\conthree}{\Pcutd{\ptone_1}{\ctwo}{\ptone_2}}{\tcone}}
  {\tyg{\conone_1}{\emcon}{\emcon}{\emcon}{\ptone_1}{\cthree:A} & \tyg{\conone_2}{\contwo,\ctwo:A}{\conthree}{\ptone_2}{\tcone}}
  $$
  then, by Lemma~\ref{lemma:foc} and by induction hypothesis:
  \begin{align*}
    \dupf{\Pcutd{\ptone_1}{\ctwo}{\ptone_2}}&=\max\{\dupf{\ptone_1},\dupf{\ptone_2}\}\\
       &\leq\max\{\size{\ptone_1},\size{\ptone_2}\}\\
       &\leq\size{\Pcutd{\ptone_1}{\ctwo}{\ptone_2}}
  \end{align*}
\end{varitemize}
This concludes the proof.
\end{proof}

\begin{lemma}\label{lemma:weibound}
If $\tyg{\conone}{\contwo}{\conthree}{\ptone}{\tcone}$, then for every
$\natone\geq\dupf{\ptone}$, $\weip{\natone}{\ptone}\leq\size{\widehat{\ptone}}\cdot\natone^{\bde{\widehat{\ptone}}+1}$.
\end{lemma}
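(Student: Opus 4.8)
The plan is to prove the bound by induction on the type derivation of $\ptone$ (equivalently, on the structure of the proof term), keeping $\natone\geq\dupf{\ptone}$ fixed throughout. I would first record two facts that make the induction go through. First, in every typing rule the definition of the duplicability factor satisfies $\dupf{\ptone}\geq\dupf{\pttwo}$ for each immediate subterm $\pttwo$, so $\natone\geq\dupf{\ptone}\geq\dupf{\pttwo}$ and the induction hypothesis may be applied to the subterms \emph{with the same} $\natone$. Second, since $\natone\geq 1$ and $\bde{\widehat{\ptone}}+1\geq 1$, the right-hand side $\size{\widehat{\ptone}}\cdot\natone^{\bde{\widehat{\ptone}}+1}$ is nondecreasing in both the size and the box-depth, which is what lets the premises be absorbed into the conclusion. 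I would then split the rules into three groups: (i) the unit, multiplicative, additive, $\bemb/\bemd$, $\Lbangb/\Lbangd$ and linear-$\cut$ rules; (ii) the promotion rule $\Rbang$; and (iii) the exponential cuts $\cutb$ and $\cutd$.

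The rules in group (i) are handled uniformly. Each of them either leaves $\weip{\natone}{\cdot}$, $\size{\widehat{\cdot}}$ and $\bde{\widehat{\cdot}}$ unchanged (as for $\PLbangb{\cone}{\ptone}$, $\PLbangd{\cone}{\ptone}$ and $\PLone{\cone}{\ptone}$), or raises weight and size by the same additive constant while setting the box-depth to the maximum of the premises'. Taking $\PRten{\ptone}{\pttwo}$ as representative and writing $a=\size{\widehat{\ptone}}$, $b=\size{\widehat{\pttwo}}$, $p=\bde{\widehat{\ptone}}$, $q=\bde{\widehat{\pttwo}}$ and $d=\max\{p,q\}$, the induction hypothesis gives $\weip{\natone}{\ptone}\leq a\natone^{p+1}\leq a\natone^{d+1}$ and symmetrically for $\pttwo$, whence
\[
\weip{\natone}{\PRten{\ptone}{\pttwo}}=1+\weip{\natone}{\ptone}+\weip{\natone}{\pttwo}\leq(1+a+b)\,\natone^{d+1},
\]
using $1\leq\natone^{d+1}$. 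The linear cut $\Pcut{\ptone}{\cone}{\pttwo}$ is the same computation without the $+1$, and the other rules of the group are no harder. The promotion rule is the first properly exponential step. Here $\weip{\natone}{\PRbang{\cone_1,\ldots,\cone_n}{\ptone}}=\natone\cdot(\weip{\natone}{\ptone}+1)$, the size grows by one, and, crucially, $\bde{\widehat{\cdot}}=\bde{\widehat{\ptone}}+1$. With $a=\size{\widehat{\ptone}}$, $p=\bde{\widehat{\ptone}}$ and $d=p+1$, the induction hypothesis yields
\[
\natone\cdot(\weip{\natone}{\ptone}+1)\leq\natone\cdot(a\natone^{p+1}+1)=a\natone^{p+2}+\natone\leq(a+1)\,\natone^{d+1},
\]
because $\natone^{p+2}=\natone^{d+1}$ and $\natone\leq\natone^{d+1}$. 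Thus the single extra factor of $\natone$ produced by opening a box is exactly compensated by the increment of the box-depth in the exponent.

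The delicate cases, and the crux of the whole argument, are the exponential cuts, where
\[
\weip{\natone}{\Pcutd{\ptone}{\cone}{\pttwo}}=\foc{\cone}{\pttwo}\cdot\weip{\natone}{\ptone}+\weip{\natone}{\pttwo}
\]
and similarly for $\cutb$. Extracting the process replicates $\ptone$ into a server, so its body lives one box-level deeper: with the notation above, $\size{\widehat{\cdot}}=a+b+1$ and $d=\max\{p+1,q\}$. Everything hinges on bounding the number of virtual instantiations of the server, namely on showing $\foc{\cone}{\pttwo}\leq\natone$. For $\cutb$ this is immediate: the cut channel $\cone$ sits in the auxiliary context of $\pttwo$, so Lemma~\ref{lemma:foc}(1) gives $\foc{\cone}{\pttwo}\leq 1\leq\natone$. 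For $\cutd$ the channel $\cone$ sits in the multiplexor context, and the needed inequality $\foc{\cone}{\pttwo}\leq\natone$ is precisely the soft multiplexing discipline at work: the virtual occurrences of a multiplexor assumption should be governed by the duplicability factor, giving $\foc{\cone}{\pttwo}\leq\dupf{\pttwo}\leq\dupf{\Pcutd{\ptone}{\cone}{\pttwo}}\leq\natone$. Granting this,
\[
\foc{\cone}{\pttwo}\cdot\weip{\natone}{\ptone}\leq\natone\cdot a\natone^{p+1}=a\natone^{p+2}\leq a\natone^{d+1},
\]
since $p+1\leq d$, and adding $\weip{\natone}{\pttwo}\leq b\natone^{q+1}\leq b\natone^{d+1}$ yields a total of at most $(a+b+1)\natone^{d+1}$.

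I expect the genuine obstacle to be exactly the verification, in the $\cutd$ case, that $\foc{\cone}{\pttwo}\leq\natone$: Lemma~\ref{lemma:foc} only supplies the coarser bound $\foc{\cone}{\pttwo}\leq\size{\widehat{\pttwo}}$, which would leave a quadratic cross-term $\size{\widehat{\ptone}}\cdot\size{\widehat{\pttwo}}$ and destroy the linear-in-size bound. Pinning down how the multiplexor virtual-occurrence count is dominated by $\dupf{\cdot}$ is the place where the restriction to \emph{soft} linear logic is indispensable: in full linear logic this count is unrelated to $\natone$ and the polynomial bound collapses. The only other point needing a word is the degenerate regime $\natone=0$ (possible only when $\dupf{\ptone}=0$), which is treated separately and is harmless, since the lemma is invoked downstream only for $\natone\geq 1$.
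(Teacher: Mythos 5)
Your proposal is, case for case, the proof the paper gives: the same induction on $\ptone$ with the induction hypothesis invoked at the same parameter $\natone$ (justified, as you note, by monotonicity of $\dupf{\cdot}$ over subterms), the same computation for $\PRten{\pttwo}{\ptthree}$ and the other rules of your group (i), the same absorption of the extra factor $\natone$ into the incremented exponent for $\PRbang{\cone_1,\ldots,\cone_n}{\pttwo}$, and the same estimate for $\Pcutb{\ptone}{\cone}{\pttwo}$, where the paper, exactly like you, replaces $\foc{\cone}{\pttwo}$ by $\natone$ --- a step that is legitimate there by part (1) of Lemma~\ref{lemma:foc} (plus $\natone\geq 1$), since in $\cutb$ the cut channel sits in the auxiliary context.

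The one case the paper does not display is $\cutd$, and that is precisely where your argument has a hole that cannot be filled from the stated definitions. You need $\foc{\cone}{\pttwo}\leq\natone$ and propose to derive it from $\foc{\cone}{\pttwo}\leq\dupf{\pttwo}$; but $\dupf{\cdot}$ records virtual occurrence counts only at $\Lbangb$/$\Lbangd$ instances, and a multiplexor assumption discharged directly by a $\cutd$ never passes through $\Lbangd$, so its count leaves no trace in $\dupf{\pttwo}$, nor in $\dupf{\Pcutd{\ptone}{\cone}{\pttwo}}=\max\{\dupf{\ptone},\dupf{\pttwo}\}$. Concretely, let $\ptone=\PRten{\PRone}{\PRone}$, so that $\tyg{\emcon}{\emcon}{\emcon}{\ptone}{\ctwo:\unit\tens\unit}$ and $\weip{\natone}{\ptone}=1$, and let $\pttwo_k$ be obtained from $\PRone$ by stacking $k$ blocks $\Pbemd{\cone}{\ctwo}{\PLten{\ctwo}{\cfour}{\ctwo}{\PLone{\cfour}{\PLone{\ctwo}{\,\cdot\,}}}}$, typed with $\cone:\unit\tens\unit$ in the multiplexor context; then $\foc{\cone}{\pttwo_k}=k$ (reading the $\foc{\cdot}{\cdot}$ clause for $\bemd$ as $\foc{\cone}{\Pbemd{\cone}{\ctwo}{\ptfour}}=1+\foc{\cone}{\ptfour}$, the only reading under which Proposition~\ref{prop:creddecr} survives), while $\dupf{\pttwo_k}=0$ and $\weip{\natone}{\pttwo_k}=\size{\widehat{\pttwo_k}}=2k$. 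Hence $\ptthree=\Pcutd{\ptone}{\cone}{\pttwo_k}$ has $\dupf{\ptthree}=0$ and $\bde{\ptthree}=1$, so $\natone=1$ is admissible, yet $\weip{1}{\ptthree}=k+2k=3k$ whereas $\size{\widehat{\ptthree}}\cdot 1^{\bde{\ptthree}+1}=2k+2$: the claimed bound fails for $k\geq 3$. So with the definitions as written, not just your inequality but the statement itself breaks in the $\cutd$ case; no proof can complete your step. The repair --- evidently what the authors intend, and what your chain of inequalities tacitly presupposes --- is to fold the multiplexing width into the exponential cuts, e.g.\ $\dupf{\Pcutd{\ptone}{\cone}{\pttwo}}=\max\{\dupf{\ptone},\dupf{\pttwo},\foc{\cone}{\pttwo}\}$ (and similarly for $\cutb$, where it is harmless); with that amendment your $\cutd$ computation goes through verbatim, and Lemma~\ref{lemma:dupf} still holds because $\foc{\cone}{\pttwo}\leq\size{\pttwo}$, though the monotonicity propositions must then be rechecked against the new definition. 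In fairness, you correctly isolated this as the crux, and the gap is one the paper itself leaves open by omitting the case; but as submitted, the $\cutd$ case is asserted, not proved.
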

\begin{proof}
By induction on the structure of $\ptone$. Some interesting cases:
\begin{varitemize}
\item
  If $\ptone=\PRten{\pttwo}{\ptthree}$, then:
  \begin{align*}
    \weip{\natone}{\PRten{\pttwo}{\ptthree}}&=1+\weip{\natone}{\pttwo}+\weip{\natone}{\ptthree}\\
      &\leq 1+\size{\pttwo}\cdot\natone^{\bde{\pttwo}+1}+\size{\ptthree}\cdot\natone^{\bde{\ptthree}+1}\\
      &\leq 1+(\size{\pttwo}+\size{\ptthree})\cdot\natone^{\max\{\bde{\pttwo}+1,\bde{\ptthree}+1\}}\\
      &\leq (1+\size{\pttwo}+\size{\ptthree})\cdot\natone^{\max\{\bde{\pttwo}+1,\bde{\ptthree}+1\}}\\
      &\leq \size{\PRten{\pttwo}{\ptthree}}\cdot\natone^{\bde{\PRten{\pttwo}{\ptthree}}+1}\\
  \end{align*}
\item
  If $\ptone=\Pcutb{\ptone}{\cone}{\pttwo}$, then:
  \begin{align*}
    \weip{\natone}{\Pcutb{\ptone}{\cone}{\pttwo}}&=\foc{\cone}{\pttwo}\cdot(\weip{\natone}{\ptone}+1)+\weip{\natone}{\pttwo}\\
      &\leq \foc{\cone}{\pttwo}\cdot(\size{\ptone}\cdot\natone^{\bde{\ptone}+1}+1)+\size{\pttwo}\cdot\natone^{\bde{\pttwo}+1}\\
      &\leq \natone\cdot\size{\ptone}\cdot\natone^{\bde{\ptone}+1}+\natone+\size{\pttwo}\cdot\natone^{\bde{\pttwo}+1}\\
      &\leq \size{\ptone}\cdot\natone^{\bde{\ptone}+2}+\natone^{\bde{\pttwo}+1}+\size{\pttwo}\cdot\natone^{\bde{\pttwo}+1}\\
      &\leq (\size{\ptone}+\size{\pttwo}+1)\cdot\natone^{\max\{\bde{\ptone}+2,\bde{\pttwo}+1\}}\\
      &=\size{\Pcutb{\ptone}{\cone}{\pttwo}}\cdot\natone^{\bde{\Pcutb{\ptone}{\cone}{\pttwo}}}.\\
  \end{align*}
\item
  If $\ptone=\PRbang{\cone_1,\ldots,\cone_n}{\pttwo}$, then:
  \begin{align*}
    \weip{\natone}{\PRbang{\cone_1,\ldots,\cone_n}{\pttwo}}&=\natone\cdot(\weip{\natone}{\pttwo}+1)\\
       &\leq \natone\cdot\size{\pttwo}\cdot\natone^{\bde{\pttwo}+1}+\natone\\
       &\leq\size{\pttwo}\cdot\natone^{\bde{\pttwo}+2}+\natone^{\bde{\pttwo}+2}\\
       &=(1+\size{\pttwo})\cdot\natone^{\bde{\PRbang{\cone_1,\ldots,\cone_n}{\pttwo}}+1}\\
       &=\size{\PRbang{\cone_1,\ldots,\cone_n}{\pttwo}}\cdot\natone^{\bde{\PRbang{\cone_1,\ldots,\cone_n}{\pttwo}}+1}.
  \end{align*}
\end{varitemize}
This concludes the proof.
\end{proof}
\subsection{Putting Everything Together}
We now have almost all the necessary ingredients to obtain a proof of Proposition~\ref{prop:polysound}: the only
missing tales are the bounds on the size of any reducts, since the polynomial bounds on the length of internal
reductions are exactly the ones from Lemma~\ref{lemma:weibound}. Observe, however, that the latter
induces the former:
\begin{lemma}\label{lemma:spacevstime}
Suppose that $\procone\reds^\natone\proctwo$. Then
$\size{\proctwo}\leq\natone\cdot\size{\procone}$.
\end{lemma}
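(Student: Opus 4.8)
The plan is to argue by induction on the number $\natone$ of reduction steps, reducing the whole statement to a bound on how much a \emph{single} step can increase the size. Writing the reduction sequence as $\procone = P_0\reds P_1\reds\cdots\reds P_\natone = \proctwo$, it suffices to show that each step obeys $\size{P_{i+1}}\le\size{P_i}+\size{\procone}$ and then to sum the increments along the sequence, so that after $\natone$ steps one obtains a bound linear in $\natone$, namely $\size{\proctwo}\le\natone\cdot\size{\procone}$. The earlier lemma stating that structural congruence preserves size lets me treat the $\scon$-steps occurring inside the reduction derivation as size-neutral, so that only the genuine communication redexes need to be analysed.

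For the per-step analysis I would classify the possible redexes. A linear synchronisation of the shape $\para{\rest{\ctwo}{\outc{\cone}{\ctwo}{\procthree}}}{\inc{\cone}{\ctwo}{\procfour}}\reds\rest{\ctwo}{(\para{\procthree}{\procfour})}$ consumes one output prefix and one input prefix and creates nothing new, so $\size{}$ strictly decreases (the enclosing restriction is irrelevant, since restriction is size-transparent); the same holds for the two choice reductions pitting $\inl{\cone}{}$ or $\inr{\cone}{}$ against $\case{\cone}{}{}$, where one whole branch is in addition discarded. The only step that can make the size grow is the triggering of a replicated server, $\para{\rest{\ctwo}{\outc{\cone}{\ctwo}{\procthree}}}{\binc{\cone}{\ctwo}{\procfour}}\reds\para{\rest{\ctwo}{(\para{\procthree}{\procfour})}}{\binc{\cone}{\ctwo}{\procfour}}$, which retains the server and spawns one fresh copy of its body $\procfour$; a direct computation with the size definition shows that the net change is exactly $\size{\procfour}-1$.

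The crux is therefore to bound the size of every server body that can ever be fired. Here I would establish, by induction along the reduction sequence, the invariant that every replicated input $\binc{\cone}{\ctwo}{\procfour}$ occurring in any reduct $P_i$ of $\procone$ is an $\alpha$-variant of a subterm of $\procone$, whence $\size{\procfour}\le\size{\procone}$. This is where the shape of the reduction contexts matters: reduction never fires underneath a replication prefix, so server bodies are never themselves rewritten, and the sole operation ever performed on a body is the copying induced by a trigger, which is a capture-avoiding renaming and hence preserves size. Combining this with the per-step computation, each reduction step increases the size by at most $\size{\procone}$, and the induction on $\natone$ closes to give the claimed bound.

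The main obstacle I anticipate is precisely this invariant rather than the arithmetic: one must check carefully that no reduction context ($\para{}{}$, $\rest{}{}$, and closure under $\scon$) ever exposes a redex inside a replicated body, and that the spawning rule reproduces the body verbatim up to renaming, so that the stock of possible server bodies stays confined to the (finitely many, size-bounded) subterms of the original $\procone$. Once these syntactic facts are secured, the two non-replicating redex shapes only ever help, since they strictly shrink the process, and the bound follows.
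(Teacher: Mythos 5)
Your proposal is correct and is essentially the paper's own argument: the paper proves the lemma by induction on $\natone$ after strengthening the statement with exactly your invariant, namely that every subprocess of a reduct having the form $\binc{\cone}{\ctwo}{\procfour}$ satisfies $\size{\procfour}\leq\size{\procone}$, so that a replication-firing step adds at most $\size{\procone}$ to the size while all other redexes shrink the process. The only cosmetic difference is that the paper states the invariant purely as a size bound rather than as $\alpha$-variance with a subterm of $\procone$; your formulation is in fact slightly too strong, since server bodies lying under input prefixes undergo name-for-name substitution when those prefixes fire, making them substitution instances rather than $\alpha$-variants of subterms of $\procone$, but this does not affect the argument because such substitutions preserve size.
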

\begin{proof}
By induction on $\natone$, enriching the statement as follows:
whenever $\procone\reds^\natone\proctwo$, both
$\size{\proctwo}\leq\natone\cdot\size{\procone}$ and
$\size{\procthree}\leq\size{\procone}$ for every
subprocess $\procthree$ of $\proctwo$ in the form 
$\binc{\cone}{\ctwo}{\procfour}$. 
\end{proof}
\begin{lemma}\label{lemma:procvspt}
For every $\ptone$, $\bde{\ptone}=\bde{\widehat{\ptone}}$ and
$\size{\ptone}=\size{\widehat{\ptone}}$.
\end{lemma}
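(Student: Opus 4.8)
The plan is to prove both equalities simultaneously by structural induction on the proof term $\ptone$, using the extraction map of Figure~\ref{fig:typextr} as the bridge between the two sides. The base case is $\PRone$, whose extracted process is $\emproc$, for which $\bde{\PRone}=0=\bde{\emproc}$ and $\size{\PRone}=0=\size{\emproc}$. For every compound constructor the argument is a direct comparison: one reads $\widehat{\ptone}$ off Figure~\ref{fig:typextr}, computes $\bde{\widehat{\ptone}}$ and $\size{\widehat{\ptone}}$ from the process-side definitions (box-depth as nesting level of replications, and the clauses of the process-size definition), applies the induction hypothesis to the immediate subterms, and checks that the outcome coincides with the proof-term clauses for $\bde{\ptone}$ and $\size{\ptone}$.

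The observation driving the box-depth half is that the only proof-term constructors whose extracted process introduces a fresh replicated input $\binc{\cone}{\ctwo}{\emproc}$-prefix are $\Rbang$, $\cutb$ and $\cutd$; all remaining constructors extract to processes built by prefixing, restriction, parallel composition and $\casem$, none of which alters the replication nesting beyond taking a maximum or passing it through. This matches the proof-term definition, where $\bde{\cdot}$ is a pass-through or a max everywhere except on $\Rbang$, $\cutb$ and $\cutd$, where it carries the $+1$. For instance, $\widehat{\Pcutb{\ptone}{\cone}{\pttwo}}=\rest{\cone}{(\para{\binc{\cone}{\ctwo}{\widehat{\ptone}}}{\widehat{\pttwo}})}$; since restriction preserves and parallel composition maximizes the replication nesting, $\bde{\widehat{\Pcutb{\ptone}{\cone}{\pttwo}}}=\max\{1+\bde{\widehat{\ptone}},\bde{\widehat{\pttwo}}\}$, which by the induction hypothesis equals $\max\{\bde{\ptone}+1,\bde{\pttwo}\}=\bde{\Pcutb{\ptone}{\cone}{\pttwo}}$; the $\cutd$ and $\Rbang$ cases are analogous. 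A point worth flagging is that the genuinely exponential left-rules $\Lbangb$ and $\Lbangd$ leave the process untouched (so $\widehat{\PLbangb{\cone}{\ptone}}=\widehat{\ptone}$ up to renaming): they neither add a replication nor change the box depth, consistently with their pass-through clauses.

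The size half runs along the same lines, the two facts that make the bookkeeping align being that restriction is transparent to size ($\size{\rest{\cone}{\procone}}=\size{\procone}$) while parallel composition adds sizes and a replicated prefix costs one ($\size{\binc{\cone}{\ctwo}{\procone}}=\size{\procone}+1$). Thus in the exponential-cut cases the wrapping $\rest{\cone}{}$ contributes nothing, the parallel split yields $\size{\ptone}+\size{\pttwo}$, and the single replication supplies the $+1$ demanded by the proof-term clause; the channel renamings recorded by the superscripts in Figure~\ref{fig:typextr} are irrelevant to both measures. I expect the main obstacle to be clerical rather than conceptual: one must traverse all the constructors and, in the three replication-introducing cases, be careful that the $+1$ offset is attributed to the correct component (the left premise of the cut, respectively the body of the box) so that it falls inside the maximum exactly as the proof-term definition prescribes, and that no spurious replication is created by the restrictions and parallel compositions that wrap the subprocesses.
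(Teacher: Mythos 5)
Your proof is correct, and it is exactly the routine structural induction over the extraction map of Figure~\ref{fig:typextr} that the paper leaves implicit: the paper states this lemma without any proof, and your case analysis (replications arise only from $\Rbang$, $\cutb$, $\cutd$; restrictions and renamings are transparent to both measures; parallel composition maximizes depth and adds sizes) is the intended argument. One caveat worth flagging: the paper never actually writes down the defining clauses of $\size{\ptone}$ on proof terms, so your appeal to ``the proof-term clauses for $\size{\ptone}$'' silently fixes that definition -- but the one you assume is the right one, namely the one the paper itself uses inside the proof of Lemma~\ref{lemma:weibound} (e.g.\ $\size{\Pcutb{\ptone}{\cone}{\pttwo}}=\size{\ptone}+\size{\pttwo}+1$ and $\size{\PRbang{\cone_1,\ldots,\cone_n}{\pttwo}}=1+\size{\pttwo}$), under which every case of your induction goes through.
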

Finally:
\begin{proof}[Proposition~\ref{prop:polysound}]
Let $\{\polytwo_\natone\}_{\natone\in\NN}$ the polynomials coming
from Lemma~\ref{lemma:weibound}. The polynomials we are looking
for are defined as follows:
$$
\polyone_\natone(\indone)=\polytwo_\natone(\indone)+\indone\cdot\polytwo_\natone(\indone).
$$
Now, suppose that $\procone\reds^\nattwo\proctwo$. By Theorem~\ref{theo:subjred},
there are proof terms $\ptone,\pttwo$ such that
$\procone=\widehat{\ptone}$, $\proctwo=\widehat{\pttwo}$ and
$$
\ptone(\cpredequ\cred\cpredequ)^\nattwo\pttwo.
$$
Now, from propositions \ref{prop:creddecr}, \ref{prop:cpreddecr} and \ref{prop:cequun}, it follows that
$$
\wei{\ptone}\geq \nattwo+\wei{\pttwo}\geq \nattwo.
$$
As a consequence, by Lemma~\ref{lemma:weibound} and Lemma~\ref{lemma:procvspt}, 
$$
\nattwo\leq\polytwo_{\bde{\ptone}}(\size{\ptone})\leq\polytwo_{\bde{\procone}}(\size{\procone})\leq\polyone_{\bde{\procone}}(\size{\procone}).
$$
By Lemma~\ref{lemma:spacevstime}, it follows that
$$
\size{\proctwo}\leq\nattwo\cdot\size{\procone}\leq\polytwo_{\bde{\procone}}(\size{\procone})\cdot\size{\procone}\leq\polyone_{\bde{\procone}}(\size{\procone}).
$$
This concludes the proof.
\end{proof}
Let us now consider Theorem~\ref{theo:boundint}: how can we deduce it from Proposition~\ref{prop:polysound}?
Everything boils down to show that for normal processes, the box-depth can be read off from their type.
In the following lemma, $\bde{\typone}$ and $\bde{\conone}$ are the nesting depths of $\bang$ inside the type 
$\typone$ and inside the types appearing in $\conone$ (for every type $\typone$ and context $\conone$).
\begin{lemma}
Suppose that $\tyg{\conone}{\contwo}{\conthree}{\ptone}{\cone:\typone}$ and that $\ptone$ is 
normal. Then $\bde{\widehat{\ptone}}=\max\{\bde{\conone},\bde{\contwo},\bde{\conthree},$ $\bde{\typone}\}$.
\end{lemma}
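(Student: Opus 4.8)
The plan is to move the statement to the level of proof terms and then argue by induction on the cut-free derivation. By Lemma~\ref{lemma:procvspt} we have $\bde{\widehat{\ptone}}=\bde{\ptone}$, so it suffices to prove $\bde{\ptone}=\max\{\bde{\conone},\bde{\contwo},\bde{\conthree},\bde{\typone}\}$ for every normal $\ptone$ with $\tyg{\conone}{\contwo}{\conthree}{\ptone}{\cone:\typone}$. Normality is used at once: since $\ptone$ contains no instance of $\cut$, $\cutb$ or $\cutd$, the only rule whose box-depth clause adds anything to that of its premises is $\Rbang$, which raises it by exactly one; every other rule keeps $\bde{\cdot}$ unchanged or takes the maximum over its premises. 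Crucially, $\Rbang$ is also the \emph{unique} right rule introducing a $\bang$ as the outermost connective of the conclusion type, so morally each box in $\ptone$ should correspond to one level of $\bang$-nesting visible in the type interface, and conversely. I would prove the two inequalities separately.

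The inequality $\bde{\ptone}\le\max\{\bde{\conone},\bde{\contwo},\bde{\conthree},\bde{\typone}\}$ is the routine one, checked rule by rule. For the multiplicatives and additives it is immediate from identities such as $\bde{\typone\tens\typtwo}=\max\{\bde{\typone},\bde{\typtwo}\}$, matched against the corresponding maxima in the box-depth clauses. For $\Rbang$, passing from premise type $A$ with auxiliary context $\conone$ to conclusion type $\bang A$ with linear context $\bang\conone$ raises \emph{both} the type-depth and the context-depth by one, exactly matching the $+1$ in the $\bde{\cdot}$ clause for $\Rbang$. For the left rules $\Lbangb$ and $\Lbangd$ a channel of type $A$ is relabelled with $\bang A$, so the relevant depth only \emph{increases} (as $\bde{\bang A}=\bde{A}+1$), which can only help the inequality; the rules $\bemb$, $\bemd$, and the weakening allowed on $\Rone$ and on the multiplexor context, leave the underlying process (hence $\bde{\widehat{\ptone}}$) unchanged and can only enlarge the right-hand side. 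A straightforward induction then gives the bound.

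The reverse inequality is the heart of the matter and, I expect, the main obstacle. The goal is to certify that every $\bang$ contributing to $\max\{\bde{\conone},\bde{\contwo},\bde{\conthree},\bde{\typone}\}$ is \emph{witnessed} by a correspondingly nested occurrence of $\Rbang$ in $\ptone$. For a $\bang$ occurring in the right-hand type $\typone$ the witness is obtained by following the right spine of the derivation upward: in a cut-free proof the outermost $\bang$ of a right-hand type can only have been produced by an $\Rbang$, which supplies a box, and the inner $\bang$'s are handled recursively inside its premise. The genuinely delicate cases are those where a $\bang$ enters through the \emph{left} exponential rules $\Lbangb$, $\Lbangd$ (or through channels installed in the contexts by $\bemb$, $\bemd$): there a $\bang$ appears in a context without an $\Rbang$ being applied at that point, so the witnessing box, if any, must be located elsewhere in the derivation. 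This is exactly where I expect the difficulty to concentrate and where the normality hypothesis must do the essential work, since a cut could otherwise sever the link between a server type and the box realising it. Pinning down, for each contextual $\bang$, a box of the required nesting depth — and showing that these witnesses can be organised so as to force $\bde{\ptone}\ge\max\{\bde{\conone},\bde{\contwo},\bde{\conthree},\bde{\typone}\}$ — is the crux; combined with the previous paragraph it yields the claimed equality.
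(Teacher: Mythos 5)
Your first paragraph is sound and is, in substance, everything the paper itself offers: its proof of this lemma is literally ``an easy induction on $\ptone$'', and indeed in a cut-free term the only constructor that increases $\bde{\cdot}$ is the one introduced by $\Rbang$, whose $+1$ is absorbed by the $\bang$ it places on the conclusion type and on the promoted context. The problem is the second half of your plan. You never prove $\bde{\widehat{\ptone}}\ge\max\{\bde{\conone},\bde{\contwo},\bde{\conthree},\bde{\typone}\}$; you only announce that it is ``the crux'' and that normality ``must do the essential work''. A proposal that stops at naming the hard part is not a proof, and worse, the witness-hunting strategy you sketch cannot be completed, because that inequality is false. The system has weakening: $\Rone$ admits arbitrary auxiliary and multiplexor contexts in its conclusion, and Lemma~\ref{weaklemma} enlarges the multiplexor context of any derivation without changing the proof term. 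Thus $\tyg{\emcon}{\ctwo:\bang\unit}{\emcon}{\PRone}{\cone:\unit}$ is a normal derivation whose process is $\emproc$, of box depth $0$, while the right-hand side of the claimed equality is $1$. A $\bang$ sitting in a context need not be witnessed by any box, precisely because the channel carrying it may never be used --- you even observed this yourself when, in the $\le$ direction, you noted that weakening ``can only enlarge the right-hand side''; that observation is exactly the refutation of the $\ge$ direction, not a technicality that normality can repair.

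The honest conclusion is that the lemma, read as an equality over arbitrary contexts, is overstated, and the paper's one-line induction can only establish the $\le$ half --- which is also the only half used downstream: to deduce Theorem~\ref{theo:boundint} from Proposition~\ref{prop:polysound} one needs the box depth of the two normal, cut-free terms to be \emph{at most} $\bde{\typone}$, so that the polynomial indexed by $\bde{\typone}$ dominates. So your first paragraph, which coincides with the paper's intended induction, is where the proof should end; the programme you outline for the converse inequality is a dead end rather than a difficulty awaiting resolution, and recognizing this (ideally by exhibiting the weakening counterexample and restating the lemma with $\le$) is what a complete answer requires.
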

\begin{proof}
An easy induction on $\ptone$.
\end{proof}
The proof of bounded interaction is similar in structure to the one of polynomial time
soundness for \sll\ (see~\cite{Lafont04}). However, the peculiarities of dual systems
and of process algebras make it slightly more complicated. As an example, some of the
strong bisimilarities on proof terms which are necessary to simulate process reduction
(e.g. $(\cutd/-/\cut)$, see Figure~\ref{fig:equred}) exhibit complicated combinatorial 
behaviors, which need to be taken into account here.
\section{Conclusions}
In this paper, we introduced a variation on Caires and Pfenning's \pidill, called \pidsll,
being inspired by Lafont's soft linear logic. The key feature of \pidsll\ is the fact
that the amount of interaction induced by allowing two processes to interact with
each other is bounded by a polynomial whose degree can be ``read off'' from the
type of the session channel through which they communicate.

What we consider the main achievement of this paper is the ``transfer of technology''
from the functional world of implicit computational complexity 
to the concurrent framework of $\pi$-calculus and session types, 
rather than the proof of the polynomial bounds itself, which can be obtained by adapting the ones 
in~\cite{DalLago10a} or in~\cite{DalLago10b} (although this anyway presents
some technical difficulties due to the low-level nature of the $\pi$-calculus compared to the lambda
calculus or to higher-order $\pi$-calculus). 

Another aspect that we find interesting
is the following: it seems that the constraints on processes induced by the adoption of
the more stringent typing discipline \pidsll, as opposed to \pidill, are quite natural
and do not rule out too many interesting examples. In particular, the way sessions
can be defined remains essentially untouched: what changes is the way sessions can be
offered, i.e. the discipline governing the offering of multiple sessions by servers.
All the examples in~\cite{Caires10} and the one from Section~\ref{sect:pidillia}
are indeed typable in \pidsll.

Topics for future work include the accommodation of recursive types into \pidsll. This
could be easier than expected, due to the robustness of light logics to the presence
of recursive types~\cite{DalLago06}.

\bibliographystyle{eptcs}
\bibliography{main}
\end{document}